\title{Friedrichs Diagrams---Bosonic and Fermionic}
\author{
Morris Brooks\footnote{Institute of Mathematics, University of Zurich, Winterthurerstrasse 190, 8057 Zurich}\ \footnote{E-Mail: \url{Morris.Brooks@math.uzh.ch}}~and
Sascha Lill\footnote{Universit\`a degli Studi di Milano, Dipartimento di Matematica, Via Cesare Saldini 50, 20133 Milano, Italy}\ \footnote{E-Mail: \url{sascha.lill@unimi.it}}
} 
\date{\today}
\newcommand*\circled[1]{\tikz[baseline=(char.base)]{
            \node[shape=circle,draw,inner sep=1pt, outer sep = 0] (char) {#1};}}	
\newcommand{\bx}{\boldsymbol{x}}
\newcommand{\by}{\boldsymbol{y}}
\newcommand{\bX}{\boldsymbol{X}}
\newcommand{\bY}{\boldsymbol{Y}}
\newcommand{\cA}{\mathcal{A}}
\newcommand{\cC}{\mathcal{C}}
\newcommand{\cI}{\mathcal{I}}
\newcommand{\cN}{\mathcal{N}}
\newcommand{\cS}{\mathcal{S}}
\newcommand{\cU}{\mathcal{U}}
\newcommand{\fh}{\mathfrak{h}}
\newcommand{\sF}{\mathscr{F}}
\newcommand{\NNN}{\mathbb{N}}
\newcommand{\RRR}{\mathbb{R}}
\newcommand{\ZZZ}{\mathbb{Z}}
\newcommand{\fin}{\mathrm{fin}}
\newcommand{\imag}{\mathrm{imag}}
\newcommand{\sgn}{\mathrm{sgn}}
\newcommand{\normalordered}[1]{: \! #1 \! :}
\newcommand{\cont}[0]{
\!\! \raisebox{1pt}{
\begin{tikzpicture}
\filldraw[fill = white] (0, 0) circle (0.08);
\draw (-0.3,0) -- (-0.08,0);
\draw (0.3,0) -- (0.08,0);
\end{tikzpicture}
} \!\!
}
\newcommand{\contfer}[0]{
\!\! \raisebox{1pt}{
\begin{tikzpicture}
\filldraw[fill = white] (0, 0) circle (0.08);
\draw (-0.3,0) -- (-0.08,0);
\draw (0.3,0) -- (0.08,0);
\draw[thick] (0.3,-0.06) -- (0.15,-0.06);
\end{tikzpicture}
} \!\!
}
\newtheorem{lemma}{Lemma}
\numberwithin{lemma}{section}
\numberwithin{corollary}{section}
\newtheorem{theorem}{Theorem}
\numberwithin{theorem}{section}
\newtheorem{proposition}{Proposition}
\numberwithin{proposition}{section}
\numberwithin{remark}{section}
\theoremstyle{definition}
\numberwithin{definition}{section}
\theoremstyle{definition}
\numberwithin{assumption}{section}
\newcounter{remarks}
\begin{document}
\maketitle
\begin{abstract}
We give a mathematically precise review of a diagrammatic language introduced by Friedrichs in order to simplify computations with creation-- and annihilation operator products. In that language, we establish explicit formulas and algorithms for evaluating bosonic and fermionic commutators. Further, as an application, we demonstrate that the non--linear Hartree dynamics can be seen as a subset of the diagrams arising in the time evolution of a Bose gas. \\

\medskip

\noindent Key words: Friedrichs Diagrams, Many-Body Physics, Quantum Field Theory, Hartree Equation, Feynman Diagrams.
\end{abstract}

\tableofcontents

\clearpage

\section{Introduction}	
\label{sec:intro}

The evaluation of products or commutators between creation-- and annihilation operators, $ a^*, a $ is a frequently occurring task in many--body quantum physics and quantum field theory (QFT). In 1965, Friedrichs \cite{Friedrichs1965} suggested a diagrammatic method to track and simplify these calculations. A Friedrichs diagram represents a normal ordered product 
\begin{equation*}
	A = \int f_A(\bX_A, \bY_A) a^*_{\bx_{A, n_A}} \ldots a^*_{\bx_{A, 1}} a_{\by_{A, 1}} \ldots a_{\by_{A, m_A}} \; \mathrm{d}\bX_A \mathrm{d}\bY_A
\end{equation*}
by a graph with $ n_A $ legs (or ``prongs'' in Friedrichs' original language) that point to the left and represent creation operators, as well as $ m_A $ legs that point to the right and represent annihilation operators:

\begin{figure}[H]
	\centering
	\scalebox{1.0}{\def\r{0.6} 
\def\extl(#1, #2){\fill (#1, #2) circle (0.05); \fill[opacity = 0.3, blue] (#1, #2) circle (0.1);  } 
\def\extr(#1, #2){\fill (#1, #2) circle (0.05); \fill[opacity = 0.3, green!50!black] (#1, #2) circle (0.1);  } 
\def\conn(#1){({\r*cos(#1)},{\r*sin(#1)}) --  ({(\r+0.2)*cos(#1)},{(\r+0.2)*sin(#1)})} 
\def\connl(#1, #2){({(\r+0.2)*cos(#1)},{(\r+0.2)*sin(#1)}) .. controls  ({(\r+#2)*cos(#1)},{(\r+#2)*sin(#1)}) and} 
\begin{tikzpicture}
\useasboundingbox (-2.7,-1) rectangle (2.5,1.2);

\filldraw[fill = yellow!50!white, thick] (0,0) circle (\r) node{$A$} ;

\draw[thick] \connl(140, 0.5) ++(0.5,0) .. (-1.5,0.8) node[blue, anchor = east]{\footnotesize $ a^*_{A, n_A} $}; \extl(-1.5, 0.8)
\draw[thick] \connl(160, 0.5) ++(0.5,0) .. (-1.5,0.4) node[blue, anchor = east]{\footnotesize $ a^*_{A, n_A - 1} $}; \extl(-1.5, 0.4)
\node at (-1.2,-0.2) {$ \vdots $};
\draw[thick] \connl(220, 0.5) ++(0.5,0) .. (-1.5,-0.8) node[blue, anchor = east]{\footnotesize $ a^*_{A, 1} $}; \extl(-1.5, -0.8)

\draw[thick] \connl(40, 0.5) ++(-0.5,0) .. (1.5,0.8) node[green!50!black, anchor = west]{\footnotesize $ a_{A, 1} $} ; \extr(1.5, 0.8)
\draw[thick] \connl(20, 0.5) ++(-0.5,0) .. (1.5,0.4) node[green!50!black, anchor = west]{\footnotesize $ a_{A, 2} $}; \extr(1.5, 0.4)
\node at (1.2,-0.2) {$ \vdots $};
\draw[thick] \connl(-40, 0.5) ++(-0.5,0) .. (1.5,-0.8) node[green!50!black, anchor = west]{\footnotesize $ a_{A, m_A} $}; \extr(1.5, -0.8)

\node at (0.4,1) {\footnotesize legs};
\draw[->] (0.55,0.85) -- ++(0.3,-0.5);
\draw[->] (0.7,0.85) -- ++(0.15,-0.15);

\draw[line width = 2, red!50!blue] \conn(140);
\draw[line width = 2, red!50!blue] \conn(160);
\draw[line width = 2, red!50!blue] \conn(220);

\draw[line width = 2, red!50!blue] \conn(40);
\draw[line width = 2, red!50!blue] \conn(20);
\draw[line width = 2, red!50!blue] \conn(-40);

\end{tikzpicture}}
	\caption{A Friedrichs diagram.}
	\label{fig:Friedrichsdiagram}
\end{figure}

Formulas for evaluating certain bosonic products in terms of diagrams have been derived in \cite{Friedrichs1965, Hepp1969} and were subsequently applied in the context of constructive QFT \cite{Hepp1969, Glimm1968, GlimmJaffe1973, FeldmanOsterwalder1976}. In particular, \cite[Thm.~1.1]{Hepp1969} and \cite[Thm.~20.18]{DerezinskiGerard2013} imply a simple formula \eqref{eq:contthm} for the commutator $ [A, B] $ of two operators $ A $ and $ B $ that involves a ``sum over all ways to contract legs of the diagrams corresponding to $ A $ and $ B $'' (we specify in Section \ref{sec:diagrams}, what is meant by that, mathematically). A mathematically precise account of Friedrichs Diagrams, among other diagrammatic types, can be found in \cite[Ch.~20]{DerezinskiGerard2013}.\\
Friedrichs diagrams have also been used within mathematical investigations on the Yukawa model in $ 1+1 $ spacetime dimensions \cite{GlimmY2I, GlimmJaffeY2I, GlimmJaffeY2II, McBryanPark1975}, which includes fermions. For fermionic operators, one may expect a formula similar to \eqref{eq:contthm} to be true, up to the multiplication of certain contributions by a factor of $ (-1) $. However, we are not aware of such an explicit formula in the existing literature\footnote{In principle, such a formula could be derived from \cite[Thm.~20.18]{DerezinskiGerard2013} after plugging in all definitions for the fermionic case which is, however, not explicitly done in that reference.}.\\

In the present work, we therefore provide an analogue to the bosonic commutator formula \eqref{eq:contthm} in the fermionic case, which is \eqref{eq:contferthm}. This fermionic formula is our \textbf{main result} and given in Theorem \ref{thm:contfer}. We also state the bosonic commutator formula in Theorem \ref{thm:cont} and give two different proofs for it.\\

The bosonic formula is based on a so--called attached product $ A \cont B $ \eqref{eq:cont}, which is essentially a sum over all possible ways to contract legs of the diagrams of $ A $ and $ B $ in a certain way. Its fermionic analogue, $ A \contfer B $ \eqref{eq:contfer} agrees with $ A \cont B $ up to sign changes.\\
Further, the bosonic formula always holds for commutators $ [A, B] $, while the fermionic analogue describes either $ [A, B] $ or the anticommutator $ \{A, B\} $, depending on the numbers of legs.\\

We believe that the above--mentioned formulas can be useful for speeding up commutator calculations as they frequently appear when conjugating operators (e.g., Hamiltonians) with exponentials, such as in dressing or generalized Bogoliubov transformations. Recent examples for such conjugations can be found in the context of Bose gases \cite{BoccatoBrenneckeCenatiempoSchlein2020, AdhikariBrenneckeSchlein2021, NamTriay2021} and Fermi gases \cite{Giacomelli2022short, Giacomelli2022, FalconiGiacomelliHainzlPorta2021, ChristiansenHainzlNam2022, BenedikterPortaSchleinSeiringer2022} for investigating the ground state and low--excitation spectrum of a given Hamiltonian. Friedrichs diagrams provide a quick insight into which kinds of terms appear, and can also be used to anticipate and check results that are obtained non--diagrammatically. We comment further on this within Section \ref{subsec:multicomm}.\\

The mathematical description used in this article slightly differs from the one used in \cite{DerezinskiGerard2013}: We keep it less general to allow for an easier access to the topic and to a practical employment of Friedrichs diagrams.\\
Further, we remark that the Friedrichs diagrams as in Figure \ref{fig:Friedrichsdiagram} may closely resemble the so--called Feynman diagrams, which are ubiquitous in the QFT literature \cite{Feynman1949, Schweber1961, Weinberg1995, Schwartz2014}. For a mathematical discussion on Feynman diagrams, see also \cite{DerezinskiGerard2013, BogoliubovParasiuk1957, Hepp1966, Zimmermann1969, Scharf1995, Salmhofer1999}. Typically, in Feynman diagrams, internal lines translate into spacetime propagators or covariance matrices, while for Friedrichs diagrams, they correspond to spacelike Dirac distributions. Further, Feynman diagrams appearing in the literature usually represent complex numbers or distributions, while Friedrichs diagrams are typically employed in operator calculations. However, the term ``Feynman diagram'' is sometimes also used in a more restrictive sense (e.g., only denoting diagrams in QFT or those with time ordered propagators) or it may include certain diagrams representing operators, see \cite[Sect.~20.6.2]{DerezinskiGerard2013}.\\

The rest of this article is structured as follows: In Section \ref{sec:diagrams}, we introduce the notation and give the rules for translating diagrams into mathematical expressions. In Section \ref{sec:mainresults}, we state our main results, including formulas for $ [A, B], \{A, B\} $ and $ AB $ in terms of normal ordered products, and provide algorithms for their practical employment in commutator computations. Proofs of the main results are given in Section \ref{sec:proofs}. Section \ref{sec:applications} concerns applications of the graphical framework presented here. We demonstrate that the various contributions arising in the non--linear Hartree dynamics can naturally be identified as a subset of all the diagrams arising in the quantum time evolution and sketch how to evaluate multicommutators using Friedrichs diagrams.\\

\section{Diagrams and Contractions}	
\label{sec:diagrams}

In this section, we give a mathematically precise description of the diagrammatic formalism following \cite{Friedrichs1965}, that encodes normal ordered products of bosonic or fermionic creation-- and annihilation operators. We consider a general description of quantum many--body systems, where the degrees of freedom per particle are modeled by a measure space $ (X, \mu) $ with either $ X = \NNN $ (where $ \mu $ is the counting measure) or $ X = \RRR^d $ (where $ \mu $ is the Lebesgue measure). The one particle Hilbert space and Fock space are then given by
\begin{equation}
	\fh := L^2(X, \mu), \qquad \sF_\pm := \bigoplus_{N = 0}^\infty S_\pm\fh^{\otimes N},
\end{equation}
where $S_+$ is the symmetrization operator and $S_-$ is the anti--symmetrization operator. For some $ \Psi \in \sF_\pm $, we denote the $ N $--particle sector by $ \Psi^{(N)} \in \fh^{\otimes N} $. The finite--particle space is then defined as
\begin{equation}
	\sF_{\fin} := \{ \Psi \in \sF_\pm \; \mid \; \exists N_{\max} \in \NNN : \Psi^{(N)} = 0 \; \forall N > N_{\max} \}.
\label{eq:cD}
\end{equation}
On this space one defines the operator--valued distribution $ a^* $ by
\begin{equation}
    a^*(f)\Psi^{(N)}:=\sqrt{N+1}\, S_{\pm }\left(f\otimes \Psi^{(N)}\right), \qquad f \in \fh,
\end{equation}
and $ a(f) $ as the adjoint of $a^*(f)$. Further, we write the smeared--out operators as a formal integral $a^*(f)=\int f(\bx)a^*_{\bx} \mathrm{d}\bx$, respectively $a(f)=\int \overline{f(\bx)}a_{\bx} \mathrm{d}\bx$, over some (formal) pointwise operators $a^*_{\bx}$ and $a_{\bx}$. We further define for $ n_A, m_A \in \NNN_0 $ the operator valued distribution $D^{(n_A,m_A)}$ acting on $ \fh^{\otimes n_A} \otimes \fh^{\otimes m_A} $ as 
\begin{equation}
 D^{(n_A,m_A)}\left(f_1\otimes \dots \otimes f_{n_A} \otimes \overline{g_1} \otimes \dots \otimes \overline{g_{m_A}}\right):=a^*(f_{n_A})\dots a^*(f_1)a(g_1)\dots a(g_{m_A}).
\end{equation}
So for any $f_A\in \fh^{\otimes n_A}\otimes \fh^{\otimes m_A}$ and $ n_A, m_A \in \NNN_0 $, the normal ordered distribution pairing
\begin{equation}
	A :=D^{(n_A,m_A)}\left(f_A\right)= \int f_A(\bX_A, \bY_A) a^*_{\bx_{A, n_A}} \ldots a^*_{\bx_{A, 1}} a_{\by_{A, 1}} \ldots a_{\by_{A, m_A}} \; \mathrm{d}\bX_A \mathrm{d}\bY_A
\label{eq:A}
\end{equation}
renders a well--defined operator $ A: \sF_{\fin} \to \sF_{\fin} $, where we have written the smeared--out operator $D^{(n_A,m_A)}\left(f_A\right)$ as a formal integral over the pointwise operators. Here, we used the abbreviations
\begin{equation}
	\bX_A := (\bx_1, \ldots, \bx_{n_A}) \in X^{n_A}, \quad
	\bY_A := (\by_1, \ldots, \by_{m_A}) \in X^{m_A}.
\label{eq:bXA}
\end{equation}
$ a^*_{\bx} $ and $ a_{\bx} $ satisfy the canonical commutation/anticommutation relations (CCR/CAR) in the sense of distributions\footnote{That means, \eqref{eq:CCRCAR} entails rigorous commutation relations for the operators defined in \eqref{eq:A}.}:
\begin{equation}
	[a_{\bx}, a^*_{\by}]_\pm = \delta(\bx - \by), \qquad [a_{\bx}, a_{\by}]_\pm = [a^*_{\bx}, a^*_{\by}]_\pm = 0,
\label{eq:CCRCAR}
\end{equation}
where the commutator (for bosons) is given by $ [A, B]_+ := [A, B] = AB - BA $ and the anticommutator (for fermions) is given by $ [A, B]_- := \{A, B\} = AB + BA $. The set of operators $ A $ of the form \eqref{eq:A} generates a CCR/CAR $ ^* $--algebra $ \cA_\pm $.\\

As in \cite{Friedrichs1965}, we represent an operator of the form \eqref{eq:A} by a diagram (see Figure \ref{fig:Friedrichs_A}) consisting of
\begin{itemize}
\item One central vertex, representing $ f_A $, with $ n_A $ ``connectors''\footnote{Connectors on vertices are necessary to keep track of the order of the coordinates $ \bx_{A, j}, \by_{A, k} $, since those can generally not be commuted in $ f_A(\bX_A, \bY_A) $. In the language of graph theory, a ``vertex with $ n $ left-- and $ m $ right--connectors'' can be defined as a sub--graph with $ n + m $ vertices that are colored in 2 colors.} to the left and $ m_A $ ``connectors'' to the right, representing the coordinates $ \bx_{A, j} $ and $ \by_{A, k} $. The connectors are arranged, with respect to $ j $ and $ k $, from bottom to top for $ \bX_A $ and from top to bottom for $ \bY_A $.
\item $ n_A $ external vertices on the left of the diagram, representing the expressions $ a^*_{\bx} $ and ordered from top to bottom in the same order as they appear in \eqref{eq:A}.
\item $ m_A $ external vertices on the right of the diagram, representing the expressions $ a_{\by} $ and ordered from top to bottom in the same order as they appear in \eqref{eq:A}.
\item $ n_A + m_A $ edges (called ``external legs'') connecting each external vertex to a connector, and thus specifying which coordinate $ \bx $ or $ \by $ to plug into $ a^*_{\bx} $ or $ a_{\by} $.
\end{itemize}

\begin{figure}
	\centering
	\scalebox{1.0}{\def\r{0.6} 
\def\extl(#1, #2){\fill (#1, #2) circle (0.05); \fill[opacity = 0.3, blue] (#1, #2) circle (0.1);  } 
\def\extr(#1, #2){\fill (#1, #2) circle (0.05); \fill[opacity = 0.3, green!50!black] (#1, #2) circle (0.1);  } 
\def\conn(#1){({\r*cos(#1)},{\r*sin(#1)}) --  ({(\r+0.2)*cos(#1)},{(\r+0.2)*sin(#1)})} 
\def\connl(#1, #2){({\r*cos(#1)},{\r*sin(#1)}) .. controls  ({(\r+#2)*cos(#1)},{(\r+#2)*sin(#1)}) and} 
\begin{tikzpicture}
\useasboundingbox (-2,-1) rectangle (4,1.2);

\filldraw[fill = yellow!50!white, thick] (0,0) circle (\r) node{$A$} ;

\node at (-0.7,0) {$ \vdots $};
\node at (0.7,0) {$ \vdots $};

\node at (-0.5,1) {\footnotesize central vertex};
\draw[->] (-0.5,0.85) -- ++(0.2,-0.3);

\draw[line width = 2, red!50!blue] \conn(140) node[anchor = east]{\scriptsize $ \boldsymbol{x}_{A, n_A} $};
\draw[line width = 2, red!50!blue] \conn(160) node[anchor = east]{\scriptsize $ \boldsymbol{x}_{A, n_A-1} $};
\draw[line width = 2, red!50!blue] \conn(220) node[anchor = east]{\scriptsize $ \boldsymbol{x}_{A, 1} $};

\draw[line width = 2, red!50!blue] \conn(40) node[anchor = west]{\scriptsize $ \boldsymbol{y}_{A, 1} $};
\draw[line width = 2, red!50!blue] \conn(20) node[anchor = west]{\scriptsize $ \boldsymbol{y}_{A, 2} $};
\draw[line width = 2, red!50!blue] \conn(-40) node[anchor = west]{\scriptsize $ \boldsymbol{y}_{A, m_A} $};

\draw [decorate,decoration={brace, amplitude=5pt}, red!50!blue]  (2,0.7) -- ++ (0,-1.4);
\node[red!50!blue, anchor = west] at (2.1,0) {\footnotesize connectors};

\end{tikzpicture}}
	\scalebox{1.0}{\def\r{0.6} 
\def\extl(#1, #2){\fill (#1, #2) circle (0.05); \fill[opacity = 0.3, blue] (#1, #2) circle (0.1);  } 
\def\extr(#1, #2){\fill (#1, #2) circle (0.05); \fill[opacity = 0.3, green!50!black] (#1, #2) circle (0.1);  } 
\def\conn(#1){({\r*cos(#1)},{\r*sin(#1)}) --  ({(\r+0.2)*cos(#1)},{(\r+0.2)*sin(#1)})} 
\def\connl(#1, #2){({(\r+0.2)*cos(#1)},{(\r+0.2)*sin(#1)}) .. controls  ({(\r+#2)*cos(#1)},{(\r+#2)*sin(#1)}) and} 
\begin{tikzpicture}
\useasboundingbox (-2.7,-1) rectangle (2.5,1.2);

\filldraw[fill = yellow!50!white, thick] (0,0) circle (\r) node{$A$} ;

\draw[thick] \connl(140, 0.5) ++(0.5,0) .. (-1.5,0.8) node[blue, anchor = east]{\footnotesize $ a^*_{A, n_A} $}; \extl(-1.5, 0.8)
\draw[thick] \connl(160, 0.5) ++(0.5,0) .. (-1.5,0.4) node[blue, anchor = east]{\footnotesize $ a^*_{A, n_A - 1} $}; \extl(-1.5, 0.4)
\node at (-1.2,-0.2) {$ \vdots $};
\draw[thick] \connl(220, 0.5) ++(0.5,0) .. (-1.5,-0.8) node[blue, anchor = east]{\footnotesize $ a^*_{A, 1} $}; \extl(-1.5, -0.8)

\draw[thick] \connl(40, 0.5) ++(-0.5,0) .. (1.5,0.8) node[green!50!black, anchor = west]{\footnotesize $ a_{A, 1} $} ; \extr(1.5, 0.8)
\draw[thick] \connl(20, 0.5) ++(-0.5,0) .. (1.5,0.4) node[green!50!black, anchor = west]{\footnotesize $ a_{A, 2} $}; \extr(1.5, 0.4)
\node at (1.2,-0.2) {$ \vdots $};
\draw[thick] \connl(-40, 0.5) ++(-0.5,0) .. (1.5,-0.8) node[green!50!black, anchor = west]{\footnotesize $ a_{A, m_A} $}; \extr(1.5, -0.8)

\node at (0.4,1) {\footnotesize legs};
\draw[->] (0.55,0.85) -- ++(0.3,-0.5);
\draw[->] (0.7,0.85) -- ++(0.15,-0.15);

\draw[line width = 2, red!50!blue] \conn(140);
\draw[line width = 2, red!50!blue] \conn(160);
\draw[line width = 2, red!50!blue] \conn(220);

\draw[line width = 2, red!50!blue] \conn(40);
\draw[line width = 2, red!50!blue] \conn(20);
\draw[line width = 2, red!50!blue] \conn(-40);

\end{tikzpicture}}
	\caption{Left: A vertex with connectors representing $ f_A(\bX_A, \bY_A) $.\\ Right: A Friedrichs diagram including external vertices and legs.}
	\label{fig:Friedrichs_A}
\end{figure}

To simplify the notation, we will also write $ a^*_{\bx_{A, j}} =: a^*_{A, j} $ and $ a_{\by_{A, k}} := a_{A, k} $, whenever it is obvious whether an $ \bx $-- or a $ \by $--coordinate is used.\\

When computing commutators of operators \eqref{eq:A}, factors of the form $ \delta(\bx - \by) $ will appear naturally. These $ \delta(\bx - \by) $ can be viewed as ``contractions'' between the operators $ a_{\bx}^\sharp $ and $ a_{\by}^\sharp $ or between the coordinates $ \bx $ and $ \by $. As in \cite{Friedrichs1965}, we represent them by internal lines between the connectors representing the coordinates $ \bx $ and $ \by $: Consider a set of operators $ A_1, \ldots, A_V $ of the form \eqref{eq:A} with kernels (``vertices'') $ \{ f_v \}_{v = 1}^V, f_v \in \fh^{\otimes n_v} \otimes \fh^{\otimes m_v} $. The sets
\begin{equation}
	\cI := \{ (v, j) \; \mid \; 1 \le v \le V, \; 1 \le j \le n_v \}, \qquad
	\cI' := \{ (v, k) \; \mid \; 1 \le v \le V, \; 1 \le k \le m_v \}
\end{equation}
index all left/right pointing connectors and hence all coordinates. We would now like to contract $ C \in \NNN_0 $ pairs of coordinates $ \big\{ (\bx_{\pi(c)}, \by_{\pi'(c)}) \big\}_{c = 1}^C $, determined by the pairing maps
\begin{equation}
	\pi: \{1, \ldots, C\} \to \cI, \qquad
	\pi': \{1, \ldots, C\} \to \cI'.
\end{equation}
To indicate the order of external legs corresponding to uncontracted coordinates, we introduce the ordering maps
\begin{equation}
	\sigma: \{1, \ldots, |\cI| - C\} \to \cI \setminus  \imag (\pi), \qquad
	\sigma': \{1, \ldots, |\cI'| - C\} \to \cI' \setminus  \imag (\pi').
\end{equation}
Further, we use the abbreviations
\begin{equation}
	\bX_v := (\bx_{v, 1}, \ldots, \bx_{v, n_v}), \quad
	\bY_v := (\by_{v, 1}, \ldots, \by_{v, m_v}), \quad
	\mathrm{d}\bX := \prod_{v = 1}^V \mathrm{d}\bX_v, \quad
	\mathrm{d}\bY := \prod_{v = 1}^V \mathrm{d}\bY_v.
\label{eq:bXvbYv}
\end{equation}
The final contracted operator corresponding to $ \{ f_v \}_{v = 1}^V, \pi, \pi', \sigma, \sigma' $ is then given by
\begin{equation}
	G = \int \left( \prod_{v = 1}^V f_v(\bX_v, \bY_v) \right)
	\left( \prod_{c = 1}^C \delta(\bx_{\pi(c)} - \by_{\pi'(c)}) \right)
	\left( \prod_{\ell = 1}^{|\cI| - C} a_{\sigma(\ell)} \right)^*
	\left( \prod_{\ell' = 1}^{|\cI'| - C} a_{\sigma'(\ell')} \right) \; \mathrm{d}\bX \mathrm{d}\bY.
\label{eq:G}
\end{equation}
We represent $ G $ by a diagram (see Figure \ref{fig:Friedrichs_cont}) with:
\begin{itemize}
\item $ V $ internal vertices, each having $ n_v $ connectors to the left and $ m_v $ connectors to the right, encoding the functions $ f_v $ and their coordinates $ \bx_{v, j}, \by_{v, k} $.
\item $ |\cI| - C $ external vertices on the left of the diagram, encoding the expressions $ a^*_{\bx} $, ordered from bottom to top by $ \sigma $ (with $ a^*_{\sigma(|\cI| - C)} $ on top and $ a^*_{\sigma(1)} $ at the bottom). 
\item $ |\cI'| - C $ external vertices on the right of the diagram, encoding the expressions $ a_{\by} $, ordered from top to bottom by $ \sigma' $ (with $ a_{\sigma'(1)} $ on top and $ a_{\sigma'(|\cI'| - C)} $ at the bottom). 
\item $ C $ internal lines between pairs of connectors $ (\bx_{\pi(c)}, \by_{\pi'(c)}) $, encoding the factors $ \delta(\bx_{\pi(c)} - \by_{\pi'(c)}) $ and thus representing the contractions.
\item $ |\cI| + |\cI'| - 2C $ external lines between a left--connector and an external vertex on the left or a right--connector and an external vertex on the right. Those specify $ \sigma, \sigma' $, and thus, which coordinates $ \bx_{v, j} = \sigma(c), \by_{v, k} = \sigma'(c) $ have to be plugged into $ a^*_{\sigma(c)} $ and $ a_{\sigma'(c)} $.
\end{itemize}

\begin{figure}[t]
	\centering
	\hspace{-1cm}
	\scalebox{1.0}{\def\r{0.6} 
\def\rB{0.8} 
\def\rC{0.5} 
\def\extl(#1, #2){\fill (#1, #2) circle (0.05); \fill[opacity = 0.3, blue] (#1, #2) circle (0.1);  } 
\def\extr(#1, #2){\fill (#1, #2) circle (0.05); \fill[opacity = 0.3, green!50!black] (#1, #2) circle (0.1);  } 
\def\conn(#1, #2, #3, #4){({#4*cos(#1) + #2},{#4*sin(#1) + #3}) --  ({(#4+0.2)*cos(#1) + #2},{(#4+0.2)*sin(#1) + #3})} 
\def\connt(#1, #2, #3, #4, #5){({#4*cos(#1) + #2},{#4*sin(#1) + #3}) .. controls  ({(#4+#5)*cos(#1) + #2},{(#4+#5)*sin(#1) + #3}) and} 
\def\connte(#1, #2, #3, #4, #5){ ({(#4+#5)*cos(#1) + #2},{(#4+#5)*sin(#1) + #3}) .. ({#4*cos(#1) + #2},{#4*sin(#1) + #3}) } 
\begin{tikzpicture}
\useasboundingbox (-1.5,-2.6) rectangle (5.2,1.2);

\filldraw[fill = yellow!50!white, thick] (0,0) circle (\r) node{$A_1$} ;

\draw[line width = 2, red!50!blue] \conn(160, 0, 0, \r) node[anchor = east]{\scriptsize $ \boldsymbol{x}_{1, 4} $};
\draw[line width = 2, red!50!blue] \conn(180, 0, 0, \r) node[anchor = east]{\scriptsize $ \boldsymbol{x}_{1, 3} $};
\draw[line width = 2, red!50!blue] \conn(200, 0, 0, \r) node[anchor = east]{\scriptsize $ \boldsymbol{x}_{1, 2} $};
\draw[line width = 2, red!50!blue] \conn(220, 0, 0, \r) node[anchor = east]{\scriptsize $ \boldsymbol{x}_{1, 1} $};

\draw[line width = 2, red!50!blue] \conn(0, 0, 0, \r) node[anchor = west]{\scriptsize $ \boldsymbol{y}_{1, 1} $};
\draw[line width = 2, red!50!blue] \conn(-20, 0, 0, \r) node[anchor = west]{\scriptsize $ \boldsymbol{y}_{1, 2} $};
\draw[line width = 2, red!50!blue] \conn(-40, 0, 0, \r) node[anchor = west]{\scriptsize $ \boldsymbol{y}_{1, 3} $};

\filldraw[fill = yellow!50!white, thick] (3.5,-0.5) circle (\rB) node{$A_2$} ;

\draw[line width = 2, red!50!blue] \conn(140, 3.5, -0.5, \rB) node[anchor = east]{\scriptsize $ \boldsymbol{x}_{2, 5} $};
\draw[line width = 2, red!50!blue] \conn(160, 3.5, -0.5, \rB) node[anchor = east]{\scriptsize $ \boldsymbol{x}_{2, 4} $};
\draw[line width = 2, red!50!blue] \conn(180, 3.5, -0.5, \rB) node[anchor = east]{\scriptsize $ \boldsymbol{x}_{2, 3} $};
\draw[line width = 2, red!50!blue] \conn(200, 3.5, -0.5, \rB) node[anchor = east]{\scriptsize $ \boldsymbol{x}_{2, 2} $};
\draw[line width = 2, red!50!blue] \conn(220, 3.5, -0.5, \rB) node[anchor = east]{\scriptsize $ \boldsymbol{x}_{2, 1} $};

\draw[line width = 2, red!50!blue] \conn(40, 3.5, -0.5, \rB) node[anchor = west]{\scriptsize $ \boldsymbol{y}_{2, 1} $};
\draw[line width = 2, red!50!blue] \conn(20, 3.5, -0.5, \rB) node[anchor = west]{\scriptsize $ \boldsymbol{y}_{2, 2} $};
\draw[line width = 2, red!50!blue] \conn(-20, 3.5, -0.5, \rB) node[anchor = west]{\scriptsize $ \boldsymbol{y}_{2, 3} $};
\draw[line width = 2, red!50!blue] \conn(-40, 3.5, -0.5, \rB) node[anchor = west]{\scriptsize $ \boldsymbol{y}_{2, 4} $};

\filldraw[fill = yellow!50!white, thick] (1,-2) circle (\rC) node{$A_3$} ;

\draw[line width = 2, red!50!blue] \conn(120, 1, -2, \rC) node[anchor = east]{\scriptsize $ \boldsymbol{x}_{3, 3} $};
\draw[line width = 2, red!50!blue] \conn(150, 1, -2, \rC) node[anchor = east]{\scriptsize $ \boldsymbol{x}_{3, 2} $};
\draw[line width = 2, red!50!blue] \conn(180, 1, -2, \rC) node[anchor = east]{\scriptsize $ \boldsymbol{x}_{3, 1} $};

\draw[line width = 2, red!50!blue] \conn(40, 1, -2, \rC) node[anchor = west]{\scriptsize $ \boldsymbol{y}_{3, 1} $};
\draw[line width = 2, red!50!blue] \conn(0, 1, -2, \rC) node[anchor = west]{\scriptsize $ \boldsymbol{y}_{3, 2} $};

\end{tikzpicture}}
	\scalebox{1.0}{\def\r{0.6} 
\def\rB{0.8} 
\def\rC{0.5} 
\def\extl(#1, #2){\fill (#1, #2) circle (0.05); \fill[opacity = 0.3, blue] (#1, #2) circle (0.1);  } 
\def\extr(#1, #2){\fill (#1, #2) circle (0.05); \fill[opacity = 0.3, green!50!black] (#1, #2) circle (0.1);  } 
\def\conn(#1, #2, #3, #4){({#4*cos(#1) + #2},{#4*sin(#1) + #3}) --  ({(#4+0.2)*cos(#1) + #2},{(#4+0.2)*sin(#1) + #3})} 
\def\connt(#1, #2, #3, #4, #5){({#4*cos(#1) + #2},{#4*sin(#1) + #3}) .. controls  ({(#4+#5)*cos(#1) + #2},{(#4+#5)*sin(#1) + #3}) and} 
\def\connte(#1, #2, #3, #4, #5){ ({(#4+#5)*cos(#1) + #2},{(#4+#5)*sin(#1) + #3}) .. ({#4*cos(#1) + #2},{#4*sin(#1) + #3}) } 
\begin{tikzpicture}
\useasboundingbox (-2.2,-2.6) rectangle (5,1.2);

\draw[thick] \connt(140, 2.5, -0.5, (\rB+0.2), 0.8) ++(1,0) .. (0,0.8) -- (-1.5,0.8) node[blue, anchor = east]{\footnotesize $ a^*_{2, 5} $}; \extl(-1.5, 0.8)
\draw[thick] \connt(160, 0, 0, (\r+0.2), 0.3) ++(0.5,0) .. (-1.5,0.4) node[blue, anchor = east]{\footnotesize $ a^*_{1, 4} $}; \extl(-1.5, 0.4)
\draw[thick] \connt(180, 0, 0, (\r+0.2), 0.3) ++(0.5,0) .. (-1.5,0.0) node[blue, anchor = east]{\footnotesize $ a^*_{1, 3} $}; \extl(-1.5, 0)
\draw[thick] \connt(120, 1, -2, (\rC+0.2), 0.8) ++(0.5,0) .. (-1.2,-0.8) -- (-1.5,-0.8) node[blue, anchor = east]{\footnotesize $ a^*_{3, 3} $}; \extl(-1.5, -0.8)
\draw[thick] \connt(150, 1, -2, (\rC+0.2), 0.8) ++(0.5,0) .. (-1, -1.2) -- (-1.5,-1.2) node[blue, anchor = east]{\footnotesize $ a^*_{3, 2} $}; \extl(-1.5, -1.2)
\draw[thick] \connt(220, 0, 0, (\r+0.2)), 0.8) ++(1,0) .. (-1.5,-1.6) node[blue, anchor = east]{\footnotesize $ a^*_{1, 1} $}; \extl(-1.5, -1.6)
\draw[thick] \connt(180, 1, -2, (\rC+0.2), 0.3) ++(0.5,0) .. (-1.5,-2) node[blue, anchor = east]{\footnotesize $ a^*_{3, 1} $}; \extl(-1.5, -2)

\draw[thick] \connt(40, 2.5, -0.5, (\r+0.2), 0.3) ++(-0.5,0) .. (4,0.4) -- (4.3,0.4) node[green!50!black, anchor = west]{\footnotesize $ a_{2, 1} $} ; \extr(4.3, 0.4)
\draw[thick] \connt(20, 2.5, -0.5, (\r+0.2), 0.3) ++(-0.3,0) .. (4.1,0) -- (4.3,0) node[green!50!black, anchor = west]{\footnotesize $ a_{2, 2} $}; \extr(4.3, 0)
\draw[thick] \connt(0, 1, -2, (\rC+0.2), 2) ++(-0.5,0) .. (4.2,-0.8) -- (4.3,-0.8) node[green!50!black, anchor = west]{\footnotesize $ a_{3, 2} $}; \extr(4.3, -0.8)
\draw[thick] \connt(-20, 2.5, -0.5, (\rB+0.2), 0.3) ++(-0.5,0) .. (4.2,-1.2) -- (4.3,-1.2) node[green!50!black, anchor = west]{\footnotesize $ a_{2, 3} $}; \extr(4.3, -1.2)

\draw[red, opacity = .8, line width = 1] \connt(200, 0, 0, (\r+0.2), 1.8) \connte(-40, 2.5, -0.5, (\rB+0.2), 1.5);
\draw[red, opacity = .8, line width = 1] \connt(0, 0, 0, (\r+0.2), 1) \connte(220, 2.5, -0.5, (\rB+0.2), 1);
\draw[red, opacity = .8, line width = 1] \connt(-20, 0, 0, (\r+0.2), 0.3) \connte(160, 2.5, -0.5, (\rB+0.2), 0.3);
\draw[red, opacity = .8, line width = 1] \connt(-40, 0, 0, (\r+0.2), 0.3) \connte(180, 2.5, -0.5, (\rB+0.2), 0.3);
\draw[red, opacity = .8, line width = 1] \connt(40, 1, -2, (\rC+0.2), 0.3) \connte(200, 2.5, -0.5, (\rB+0.2), 0.3);

\draw[red, opacity = .8, ->] (1.6,0.8) -- ++(-0.4,-0.8);
\node[red, opacity = .8] at (2,1.4) {internal lines represent};
\node[red, opacity = .8] at (2,1) {contractions};

\filldraw[fill = yellow!50!white, thick] (0,0) circle (\r) node{$A_1$} ;

\draw[line width = 2, red!50!blue] \conn(160, 0, 0, \r);
\draw[line width = 2, red!50!blue] \conn(180, 0, 0, \r);
\draw[line width = 2, red!50!blue] \conn(200, 0, 0, \r);
\draw[line width = 2, red!50!blue] \conn(220, 0, 0, \r);

\draw[line width = 2, red!50!blue] \conn(0, 0, 0, \r);
\draw[line width = 2, red!50!blue] \conn(-20, 0, 0, \r);
\draw[line width = 2, red!50!blue] \conn(-40, 0, 0, \r);

\filldraw[fill = yellow!50!white, thick] (2.5,-0.5) circle (\rB) node{$A_2$} ;

\draw[line width = 2, red!50!blue] \conn(140, 2.5, -0.5, \rB);
\draw[line width = 2, red!50!blue] \conn(160, 2.5, -0.5, \rB);
\draw[line width = 2, red!50!blue] \conn(180, 2.5, -0.5, \rB);
\draw[line width = 2, red!50!blue] \conn(200, 2.5, -0.5, \rB);
\draw[line width = 2, red!50!blue] \conn(220, 2.5, -0.5, \rB);

\draw[line width = 2, red!50!blue] \conn(40, 2.5, -0.5, \rB);
\draw[line width = 2, red!50!blue] \conn(20, 2.5, -0.5, \rB);
\draw[line width = 2, red!50!blue] \conn(-20, 2.5, -0.5, \rB);
\draw[line width = 2, red!50!blue] \conn(-40, 2.5, -0.5, \rB);

\filldraw[fill = yellow!50!white, thick] (1,-2) circle (\rC) node{$A_3$} ;

\draw[line width = 2, red!50!blue] \conn(120, 1, -2, \rC);
\draw[line width = 2, red!50!blue] \conn(150, 1, -2, \rC);
\draw[line width = 2, red!50!blue] \conn(180, 1, -2, \rC);

\draw[line width = 2, red!50!blue] \conn(40, 1, -2, \rC);
\draw[line width = 2, red!50!blue] \conn(0, 1, -2, \rC);

\end{tikzpicture}}
	\caption{Left: An example of $ V = 3 $ internal vertices with connectors.\\ Right: An example of a contracted diagram corresponding to some $ G $.}
	\label{fig:Friedrichs_cont}
\end{figure}
Every $ G $ as in \eqref{eq:G} can again be written as an operator of the form \eqref{eq:A}, i.e., with one single integral kernel. If we split $ \bX, \bY $ into contracted coordinates $ \bX_{\pi} := (\bx_{\pi(c)})_{c = 1}^C, \bY_{\pi} := (\bx_{\pi(c)})_{c = 1}^C $ and uncontracted coordinates $ \bX' := \bX \setminus \bX_{\pi}, \bY' := \bY \setminus \bY_{\pi} $, then the kernel associated with $ G $ is given by
\begin{equation}
     f_G(\bX', \bY') = \int \left( \prod_{v = 1}^V f_v(\bX_v, \bY_v) \right)
	\left( \prod_{c = 1}^C \delta(\bx_{\pi(c)} - \by_{\pi'(c)}) \right) \; \mathrm{d}\bX_{\pi} \mathrm{d}\bY_{\pi'}.
\end{equation}
In particular, $ f_G \in L^2 $. So $ G $ can also be represented by a diagram with a single vertex, as depicted in Figure \ref{fig:Friedrichs_contG}.\\
\begin{figure}[t]
	\hspace{-1cm}
	\scalebox{1.0}{\def\r{0.6} 
\def\rB{0.8} 
\def\rC{0.5} 
\def\rG{1} 
\def\xGA{8} 
\def\xGB{12} 
\def\extl(#1, #2){\fill (#1, #2) circle (0.05); \fill[opacity = 0.3, blue] (#1, #2) circle (0.1);  } 
\def\extr(#1, #2){\fill (#1, #2) circle (0.05); \fill[opacity = 0.3, green!50!black] (#1, #2) circle (0.1);  } 
\def\conn(#1, #2, #3, #4){({#4*cos(#1) + #2},{#4*sin(#1) + #3}) --  ({(#4+0.2)*cos(#1) + #2},{(#4+0.2)*sin(#1) + #3})} 
\def\connt(#1, #2, #3, #4, #5){({#4*cos(#1) + #2},{#4*sin(#1) + #3}) .. controls  ({(#4+#5)*cos(#1) + #2},{(#4+#5)*sin(#1) + #3}) and} 
\def\connte(#1, #2, #3, #4, #5){ ({(#4+#5)*cos(#1) + #2},{(#4+#5)*sin(#1) + #3}) .. ({#4*cos(#1) + #2},{#4*sin(#1) + #3}) } 
\begin{tikzpicture}
\useasboundingbox (-2.2,-2.6) rectangle (15,1.2);

\draw[thick] \connt(140, 2.5, -0.5, (\rB+0.2), 0.8) ++(1,0) .. (0,0.8) -- (-1.5,0.8) node[blue, anchor = east]{\footnotesize $ a^*_{2, 5} $}; \extl(-1.5, 0.8)
\draw[thick] \connt(160, 0, 0, (\r+0.2), 0.3) ++(0.5,0) .. (-1.5,0.4) node[blue, anchor = east]{\footnotesize $ a^*_{1, 4} $}; \extl(-1.5, 0.4)
\draw[thick] \connt(180, 0, 0, (\r+0.2), 0.3) ++(0.5,0) .. (-1.5,0.0) node[blue, anchor = east]{\footnotesize $ a^*_{1, 3} $}; \extl(-1.5, 0)
\draw[thick] \connt(120, 1, -2, (\rC+0.2), 0.8) ++(0.5,0) .. (-1.2,-0.8) -- (-1.5,-0.8) node[blue, anchor = east]{\footnotesize $ a^*_{3, 3} $}; \extl(-1.5, -0.8)
\draw[thick] \connt(150, 1, -2, (\rC+0.2), 0.8) ++(0.5,0) .. (-1, -1.2) -- (-1.5,-1.2) node[blue, anchor = east]{\footnotesize $ a^*_{3, 2} $}; \extl(-1.5, -1.2)
\draw[thick] \connt(220, 0, 0, (\r+0.2)), 0.8) ++(1,0) .. (-1.5,-1.6) node[blue, anchor = east]{\footnotesize $ a^*_{1, 1} $}; \extl(-1.5, -1.6)
\draw[thick] \connt(180, 1, -2, (\rC+0.2), 0.3) ++(0.5,0) .. (-1.5,-2) node[blue, anchor = east]{\footnotesize $ a^*_{3, 1} $}; \extl(-1.5, -2)

\draw[thick] \connt(40, 2.5, -0.5, (\r+0.2), 0.3) ++(-0.5,0) .. (4,0.4) -- (4.3,0.4) node[green!50!black, anchor = west]{\footnotesize $ a_{2, 1} $} ; \extr(4.3, 0.4)
\draw[thick] \connt(20, 2.5, -0.5, (\r+0.2), 0.3) ++(-0.3,0) .. (4.1,0) -- (4.3,0) node[green!50!black, anchor = west]{\footnotesize $ a_{2, 2} $}; \extr(4.3, 0)
\draw[thick] \connt(0, 1, -2, (\rC+0.2), 2) ++(-0.5,0) .. (4.2,-0.8) -- (4.3,-0.8) node[green!50!black, anchor = west]{\footnotesize $ a_{3, 2} $}; \extr(4.3, -0.8)
\draw[thick] \connt(-20, 2.5, -0.5, (\rB+0.2), 0.3) ++(-0.5,0) .. (4.2,-1.2) -- (4.3,-1.2) node[green!50!black, anchor = west]{\footnotesize $ a_{2, 3} $}; \extr(4.3, -1.2)

\draw[red, opacity = .8, line width = 1] \connt(200, 0, 0, (\r+0.2), 1.8) \connte(-40, 2.5, -0.5, (\rB+0.2), 1.5);
\draw[red, opacity = .8, line width = 1] \connt(0, 0, 0, (\r+0.2), 1) \connte(220, 2.5, -0.5, (\rB+0.2), 1);
\draw[red, opacity = .8, line width = 1] \connt(-20, 0, 0, (\r+0.2), 0.3) \connte(160, 2.5, -0.5, (\rB+0.2), 0.3);
\draw[red, opacity = .8, line width = 1] \connt(-40, 0, 0, (\r+0.2), 0.3) \connte(180, 2.5, -0.5, (\rB+0.2), 0.3);
\draw[red, opacity = .8, line width = 1] \connt(40, 1, -2, (\rC+0.2), 0.3) \connte(200, 2.5, -0.5, (\rB+0.2), 0.3);

\draw[rounded corners = 30, dashed, thick, red!50!blue] (-1.2,-2.6) rectangle (4,1);

\filldraw[fill = yellow!50!white, thick] (0,0) circle (\r) node{$A_1$} ;

\draw[line width = 2, red!50!blue] \conn(160, 0, 0, \r);
\draw[line width = 2, red!50!blue] \conn(180, 0, 0, \r);
\draw[line width = 2, red!50!blue] \conn(200, 0, 0, \r);
\draw[line width = 2, red!50!blue] \conn(220, 0, 0, \r);

\draw[line width = 2, red!50!blue] \conn(0, 0, 0, \r);
\draw[line width = 2, red!50!blue] \conn(-20, 0, 0, \r);
\draw[line width = 2, red!50!blue] \conn(-40, 0, 0, \r);

\filldraw[fill = yellow!50!white, thick] (2.5,-0.5) circle (\rB) node{$A_2$} ;

\draw[line width = 2, red!50!blue] \conn(140, 2.5, -0.5, \rB);
\draw[line width = 2, red!50!blue] \conn(160, 2.5, -0.5, \rB);
\draw[line width = 2, red!50!blue] \conn(180, 2.5, -0.5, \rB);
\draw[line width = 2, red!50!blue] \conn(200, 2.5, -0.5, \rB);
\draw[line width = 2, red!50!blue] \conn(220, 2.5, -0.5, \rB);

\draw[line width = 2, red!50!blue] \conn(40, 2.5, -0.5, \rB);
\draw[line width = 2, red!50!blue] \conn(20, 2.5, -0.5, \rB);
\draw[line width = 2, red!50!blue] \conn(-20, 2.5, -0.5, \rB);
\draw[line width = 2, red!50!blue] \conn(-40, 2.5, -0.5, \rB);

\filldraw[fill = yellow!50!white, thick] (1,-2) circle (\rC) node{$A_3$} ;

\draw[line width = 2, red!50!blue] \conn(120, 1, -2, \rC);
\draw[line width = 2, red!50!blue] \conn(150, 1, -2, \rC);
\draw[line width = 2, red!50!blue] \conn(180, 1, -2, \rC);

\draw[line width = 2, red!50!blue] \conn(40, 1, -2, \rC);
\draw[line width = 2, red!50!blue] \conn(0, 1, -2, \rC);

\draw[->, line width = 3, red!50!blue] (5.5, -0.8) -- ++(1.3,0);
\node[red!50!blue] at (3.8,1) {$ G $};

\draw[thick] \connt(120, 10, -0.8, (\rG+0.2), 0.3) ++(0.5,0) .. (8.2,0.8) -- (\xGA,0.8) node[blue, anchor = east]{\footnotesize $ a^*_{G, 7} $}; \extl(\xGA, 0.8)
\draw[thick] \connt(135, 10, -0.8, (\rG+0.2), 0.3) ++(0.5,0) .. (\xGA,0.4) node[blue, anchor = east]{\footnotesize $ a^*_{G, 6} $}; \extl(\xGA, 0.4)
\draw[thick] \connt(150, 10, -0.8, (\rG+0.2), 0.3) ++(0.5,0) .. (\xGA,0.0) node[blue, anchor = east]{\footnotesize $ a^*_{G, 5} $}; \extl(\xGA, 0)
\draw[thick] \connt(180, 10, -0.8, (\rG+0.2), 0.3) ++(0.5,0) .. (8.2,-0.8) -- (\xGA,-0.8) node[blue, anchor = east]{\footnotesize $ a^*_{G, 4} $}; \extl(\xGA, -0.8)
\draw[thick] \connt(195, 10, -0.8, (\rG+0.2), 0.3) ++(0.5,0) .. (8.2, -1.2) -- (\xGA,-1.2) node[blue, anchor = east]{\footnotesize $ a^*_{G, 3} $}; \extl(\xGA, -1.2)
\draw[thick] \connt(210, 10, -0.8, (\rG+0.2)), 0.3) ++(0.5,0) .. (8.2, -1.6) --(\xGA,-1.6) node[blue, anchor = east]{\footnotesize $ a^*_{G, 2} $}; \extl(\xGA, -1.6)
\draw[thick] \connt(225, 10, -0.8, (\rG+0.2), 0.3) ++(0.5,0) .. (8.2, -2) --(\xGA,-2) node[blue, anchor = east]{\footnotesize $ a^*_{G, 1} $}; \extl(\xGA, -2)

\draw[thick] \connt(40, 10, -0.8, (\rG+0.2), 0.3) ++(-0.5,0) .. (11.8,0.4) -- (\xGB,0.4) node[green!50!black, anchor = west]{\footnotesize $ a_{G, 1} $} ; \extr(\xGB, 0.4)
\draw[thick] \connt(25, 10, -0.8, (\rG+0.2), 0.3) ++(-0.5,0) .. (11.8,0) -- (\xGB,0) node[green!50!black, anchor = west]{\footnotesize $ a_{G, 2} $}; \extr(\xGB, 0)
\draw[thick] \connt(0, 10, -0.8, (\rG+0.2), 0.3) ++(-0.5,0) .. (11.8,-0.8) -- (\xGB,-0.8) node[green!50!black, anchor = west]{\footnotesize $ a_{G, 3} $}; \extr(\xGB, -0.8)
\draw[thick] \connt(-15, 10, -0.8, (\rG+0.2), 0.3) ++(-0.5,0) .. (11.8,-1.2) -- (\xGB,-1.2) node[green!50!black, anchor = west]{\footnotesize $ a_{G, 4} $}; \extr(\xGB, -1.2)

\filldraw[fill = yellow!50!white, thick] (10,-0.8) circle (\rG) node{$G$} ;

\draw[line width = 2, red!50!blue] \conn(120, 10, -0.8, \rG);
\draw[line width = 2, red!50!blue] \conn(135, 10, -0.8, \rG);
\draw[line width = 2, red!50!blue] \conn(150, 10, -0.8, \rG);
\draw[line width = 2, red!50!blue] \conn(180, 10, -0.8, \rG);
\draw[line width = 2, red!50!blue] \conn(195, 10, -0.8, \rG);
\draw[line width = 2, red!50!blue] \conn(210, 10, -0.8, \rG);
\draw[line width = 2, red!50!blue] \conn(225, 10, -0.8, \rG);

\draw[line width = 2, red!50!blue] \conn(40, 10, -0.8, \rG);
\draw[line width = 2, red!50!blue] \conn(25, 10, -0.8, \rG);
\draw[line width = 2, red!50!blue] \conn(0, 10, -0.8, \rG);
\draw[line width = 2, red!50!blue] \conn(-15, 10, -0.8, \rG);

\end{tikzpicture}}
	\caption{The diagram corresponding to $ G $ is equivalent to a single--vertex diagram.\\}
	\label{fig:Friedrichs_contG}
\end{figure}

In order to formulate the commutator formulas, we will introduce the notion of an attached product as in \cite{Friedrichs1965}: Consider two operators $ A, B $ of form \eqref{eq:A}, with kernels $ f_A $ and $ f_B $. The coordinate index sets will be denoted by
\begin{equation}
	\cI_A := \{ (A, 1), \ldots, (A, n_A) \}, \qquad
	\cI'_A := \{ (A, 1), \ldots, (A, m_A) \}
\label{eq:cIA}
\end{equation}
and the same for $ B $. Further, we introduce a set that indexes all possible ways to contract the $ \by $--coordinates of $ A $ to the $ \bx $--coordinates of $ B $:
\begin{equation}
\begin{aligned}
	\cC := \big\{ (\pi, \pi') \; \big\vert \;
	&\pi: \{1, \ldots, C \} \to \cI_B, \;
	\pi': \{1, \ldots, C\} \to \cI'_A, \\
	&1 \le C \le \min(m_A, n_B), \; |\mathrm{imag}(\pi')|=C,\;
	\pi(1) > \ldots > \pi(C) \big\},
\end{aligned}
\label{eq:cC}
\end{equation}
where the ordering relation $ (B, j) > (B, j') $ is to be understood as $ j > j' $. Diagrammatically, $ (\pi, \pi') \in \cC $ is represented by $ C $ lines, each from connector $ \pi'(c) $ of $ A $ to connector $ \pi(c) $ of $ B $. Here, $ (\pi, \pi') \in \cC $ is also called a ``contraction configuration'', or ``config'' for short. For a given config $ (\pi, \pi') $, we denote the sets of indices belonging to contractable but uncontracted left/right--connectors by
\begin{equation}
\begin{aligned}
	\cU := &\big\{ (B, j) \in \cI_B \; \mid \nexists\;  c \in \{1, \ldots, C\} : \pi(c) = (B, j) \big\},\\
	\cU' := &\big\{ (A, k) \in \cI'_A \; \mid \nexists \; c \in \{1, \ldots, C\} : \pi'(c) = (A, k) \big\}.
\end{aligned}
\label{eq:cU}
\end{equation}
Further, we set $ \bX := (\bX_A, \bX_B) \in X^{n_A + n_B}, \bY := (\bY_A, \bY_B) \in X^{m_A + m_B} $. Then, the \textbf{bosonic attached product} is defined as
\begin{equation}
\begin{aligned}
	A \cont B := \sum_{(\pi, \pi') \in \cC} \int &f_A(\bX_A, \bY_A) f_B(\bX_B, \bY_B)
	\prod_{c = 1}^C \delta(\bx_{\pi(c)} - \by_{\pi'(c)}) \times \\
	&\times \left(\prod_{\ell = 1}^{n_A} a_{A, \ell}\right)^*
	\left(\prod_{u \in \cU} a_u \right)^*
	\prod_{u' \in \cU'} a_{u'}
	\prod_{\ell' = 1}^{m_B} a_{B, \ell'} \; \mathrm{d}\bX \mathrm{d}\bY.
\end{aligned}
\label{eq:cont}
\end{equation}
Here, we use the convention that, within products over a set (like $ \prod_{u' \in \cU'} $ or $ \prod_{u \in \cU} $), the indices $ (A, k) $ or $ (B, j) $ are arranged in increasing order in $ k $ or $ j $. Diagrammatically, \eqref{eq:cont} corresponds to a sum over all different diagrams in which at least one right--connector of $ A $ is contracted with a left--connector of $ B $. Uncontracted connectors are turned into external legs while keeping their order.

In the fermionic version of the commutator formula, we will encounter a similar sum over all possible contraction configs, which differs from \eqref{eq:contfer} by a sign change for certain configs. To specify the signs, we introduce the notion of a ``maximally crossed'' diagram, as shown in Figure \ref{fig:Friedrichs_maxcrossing}: A diagram together with its contribution, indexed by the config $ (\pi, \pi') \in \cC $, is called \textbf{maximally crossed} if $ \pi(c) = (B, n_B - c + 1) $ and $ \pi'(c) = (A, m_A - c + 1) $ for all $ 1 \le c \le C $. So we have a ``maximal crossing of contraction lines'' in the sense that the $ c $--th bottom--most right--connector of $ A $ is exactly contracted to the $ c $--th top--most left--connector of $ B $. Now, let $ \sigma: \cI_B \to \cI_B, \sigma': \cI'_A \to \cI'_A $ be the unique index permutations that take the diagram into a maximally crossed form while preserving the order of all uncontracted indices, that is,
\begin{equation}
\begin{aligned}
	\sigma(\pi(c)) &= (B, n_B - c + 1) \quad
	&\text{and} \quad u_1 < u_2 &\Rightarrow \sigma(u_1) < \sigma(u_2) \quad
	&&\forall \; u_1, u_2 \in \cU,\\
	\sigma(\pi'(c)) &= (A, m_A + c - 1) \quad
	&\text{and} \quad u'_1 < u'_2 &\Rightarrow \sigma'(u'_1) < \sigma'(u'_2) \quad
	&&\forall \; u'_1, u'_2 \in \cU'.
\end{aligned}
\label{eq:fullcrossing}
\end{equation}

\begin{figure}
	\centering
	\hspace{-1cm}
	\scalebox{1.0}{\def\r{0.6} 
\def\rB{0.8} 
\def\rC{0.5} 
\def\extl(#1, #2){\fill (#1, #2) circle (0.05); \fill[opacity = 0.3, blue] (#1, #2) circle (0.1);  } 
\def\extr(#1, #2){\fill (#1, #2) circle (0.05); \fill[opacity = 0.3, green!50!black] (#1, #2) circle (0.1);  } 
\def\conn(#1, #2, #3, #4){({#4*cos(#1) + #2},{#4*sin(#1) + #3}) --  ({(#4+0.2)*cos(#1) + #2},{(#4+0.2)*sin(#1) + #3})} 
\def\connt(#1, #2, #3, #4, #5){({#4*cos(#1) + #2},{#4*sin(#1) + #3}) .. controls  ({(#4+#5)*cos(#1) + #2},{(#4+#5)*sin(#1) + #3}) and} 
\def\connte(#1, #2, #3, #4, #5){ ({(#4+#5)*cos(#1) + #2},{(#4+#5)*sin(#1) + #3}) .. ({#4*cos(#1) + #2},{#4*sin(#1) + #3}) } 
\begin{tikzpicture}
\useasboundingbox (-2,-1) rectangle (4.5,1);

\draw[thick] \connt(140, 0, 0, (\r+0.2), 0.3) ++(0.2,0) .. (-1.5,0.9); \extl(-1.5, 0.9)
\draw[thick] \connt(165, 0, 0, (\r+0.2), 0.3) ++(0.2,0) .. (-1.5,0.3); \extl(-1.5, 0.3)
\draw[thick] \connt(195, 0, 0, (\r+0.2), 0.3) ++(0.2,0) .. (-1.5,-0.3); \extl(-1.5, -0.3)
\draw[thick] \connt(220, 0, 0, (\r+0.2), 0.3) ++(0.2,0) .. (-1.5,-0.9); \extl(-1.5, -0.9)

\draw[thick] \connt(30, 0, 0, (\r+0.2), 0.3) ++(-1,0) .. (2.5,0.9) -- (4,0.9); \extr(4, 0.9)
\draw[thick] \connt(15, 2.5, 0, (\r+0.2), 0.3) ++(-0.3,0) .. (4,0.3); \extr(4, 0.3)
\draw[thick] \connt(-15, 2.5, 0, (\r+0.2), 0.3) ++(-0.3,0) .. (4,-0.3); \extr(4, -0.3)
\draw[thick] \connt(-40, 2.5, 0, (\r+0.2), 0.3) ++(-0.3,0) .. (4,-0.9); \extr(4, -0.9)

\draw[red, opacity = .8, line width = 1] \connt(0, 0, 0, (\r+0.2), 0.5) \connte(210, 2.5, 0, (\r+0.2), 0.5);
\draw[red, opacity = .8, line width = 1] \connt(-30, 0, 0, (\r+0.2), 0.5) \connte(150, 2.5, 0, (\r+0.2), 0.5);

\filldraw[fill = yellow!50!white, thick] (0,0) circle (\r) node{$A$} ;

\draw[line width = 2, red!50!blue] \conn(140, 0, 0, \r);
\draw[line width = 2, red!50!blue] \conn(165, 0, 0, \r);
\draw[line width = 2, red!50!blue] \conn(195, 0, 0, \r);
\draw[line width = 2, red!50!blue] \conn(220, 0, 0, \r);

\draw[line width = 2, red!50!blue] \conn(30, 0, 0, \r);
\draw[line width = 2, red!50!blue] \conn(0, 0, 0, \r);
\draw[line width = 2, red!50!blue] \conn(-30, 0, 0, \r);

\filldraw[fill = yellow!50!white, thick] (2.5,0) circle (\r) node{$B$} ;

\draw[line width = 2, red!50!blue] \conn(150, 2.5, 0, \r);
\draw[line width = 2, red!50!blue] \conn(210, 2.5, 0, \r);

\draw[line width = 2, red!50!blue] \conn(15, 2.5, 0, \r);
\draw[line width = 2, red!50!blue] \conn(-15, 2.5, 0, \r);
\draw[line width = 2, red!50!blue] \conn(-40, 2.5, 0, \r);

\end{tikzpicture}}
	\scalebox{1.0}{\def\r{0.6} 
\def\rB{0.8} 
\def\rC{0.5} 
\def\extl(#1, #2){\fill (#1, #2) circle (0.05); \fill[opacity = 0.3, blue] (#1, #2) circle (0.1);  } 
\def\extr(#1, #2){\fill (#1, #2) circle (0.05); \fill[opacity = 0.3, green!50!black] (#1, #2) circle (0.1);  } 
\def\conn(#1, #2, #3, #4){({#4*cos(#1) + #2},{#4*sin(#1) + #3}) --  ({(#4+0.2)*cos(#1) + #2},{(#4+0.2)*sin(#1) + #3})} 
\def\connt(#1, #2, #3, #4, #5){({#4*cos(#1) + #2},{#4*sin(#1) + #3}) .. controls  ({(#4+#5)*cos(#1) + #2},{(#4+#5)*sin(#1) + #3}) and} 
\def\connp(#1, #2, #3, #4, #5){({ #4*cos(#1) + #2},{#4*sin(#1) + #3}) } 
\def\connte(#1, #2, #3, #4, #5){ ({(#4+#5)*cos(#1) + #2},{(#4+#5)*sin(#1) + #3}) .. ({#4*cos(#1) + #2},{#4*sin(#1) + #3}) } 
\begin{tikzpicture}
\useasboundingbox (-2,-1) rectangle (4.5,1);

\draw[thick] \connt(140, 0, 0, (\r+0.2), 0.3) ++(0.2,0) .. (-1.5,0.9); \extl(-1.5, 0.9)
\draw[thick] \connt(165, 0, 0, (\r+0.2), 0.3) ++(0.2,0) .. (-1.5,0.3); \extl(-1.5, 0.3)
\draw[thick] \connt(195, 0, 0, (\r+0.2), 0.3) ++(0.2,0) .. (-1.5,-0.3); \extl(-1.5, -0.3)
\draw[thick] \connt(220, 0, 0, (\r+0.2), 0.3) ++(0.2,0) .. (-1.5,-0.9); \extl(-1.5, -0.9)

\draw[thick] \connt(-30, 0, 0, (\r+0.2), 0.5) ++(-1,0) .. (2.5,0.9) -- (4,0.9); \extr(4, 0.9)
\draw[thick] \connt(15, 2.5, 0, (\r+0.2), 0.3) ++(-0.3,0) .. (4,0.3); \extr(4, 0.3)
\draw[thick] \connt(-15, 2.5, 0, (\r+0.2), 0.3) ++(-0.3,0) .. (4,-0.3); \extr(4, -0.3)
\draw[thick] \connt(-40, 2.5, 0, (\r+0.2), 0.3) ++(-0.3,0) .. (4,-0.9); \extr(4, -0.9)

\draw[red, opacity = .8, line width = 1] \connt(30, 0, 0, (\r+0.2), 0.5) \connte(150, 2.5, 0, (\r+0.2), 0.5);
\draw[red, opacity = .8, line width = 1] \connt(0, 0, 0, (\r+0.2), 0.5) \connte(210, 2.5, 0, (\r+0.2), 0.5);

\filldraw[fill = yellow!50!white, thick] (0,0) circle (\r) node{$A$} ;

\draw[line width = 2, red!50!blue] \conn(140, 0, 0, \r);
\draw[line width = 2, red!50!blue] \conn(165, 0, 0, \r);
\draw[line width = 2, red!50!blue] \conn(195, 0, 0, \r);
\draw[line width = 2, red!50!blue] \conn(220, 0, 0, \r);

\draw[line width = 2, red!50!blue] \conn(30, 0, 0, \r);
\draw[line width = 2, red!50!blue] \conn(0, 0, 0, \r);
\draw[line width = 2, red!50!blue] \conn(-30, 0, 0, \r);

\filldraw[fill = yellow!50!white, thick] (2.5,0) circle (\r) node{$B$} ;

\draw[line width = 2, red!50!blue] \conn(150, 2.5, 0, \r);
\draw[line width = 2, red!50!blue] \conn(210, 2.5, 0, \r);

\draw[line width = 2, red!50!blue] \conn(15, 2.5, 0, \r);
\draw[line width = 2, red!50!blue] \conn(-15, 2.5, 0, \r);
\draw[line width = 2, red!50!blue] \conn(-40, 2.5, 0, \r);

\draw[red, opacity = .8, thick] \connp(0, 0, 0, \r, 0 ) circle (0.1);
\draw[red, opacity = .8, thick, ->] \connp(0, 0.1, -0.1, \r, 0 ) .. controls ++(0.2,-0.2)  and ++(0.2,0)  .. \connp(-40, 0, 0, (\r+0.1), 0) ;
\draw[red, opacity = .8, thick] \connp(30, 0, 0, \r, 0) circle (0.1);
\draw[red, opacity = .8, thick, ->] \connp(30, 0.15, -0.05, \r, 0 ) .. controls ++(0.2,-0.1)  and ++(0.2,0.1)  .. \connp(-10, 0, 0, (\r+0.25), 0) .. controls ++(0.2,-0.2)  and ++(0.4,-0.1)  .. \connp(-60, 0, 0, (\r+0.1), 0) ;
\node[red, opacity = .8] at (1.4,-0.9) {\footnotesize $2 + 1 = 3$ swaps necessary};

\end{tikzpicture}}
	\caption{Left: A maximally crossed Friedrichs diagram.\\ Right: There are 3 swaps necessary in $ \sigma' $ to achieve a maximally crossed form. So $ \sgn(\sigma') = -1 $.}
	\label{fig:Friedrichs_maxcrossing}
\end{figure}

By $ \sgn(\sigma), \sgn(\sigma') \in \{1, -1\} $, we denote the signs of these permutations (see also Figure \ref{fig:Friedrichs_maxcrossing}). Then, the sign of the config $ (\pi, \pi') \in \cC $ is defined as
\begin{equation}
	\sgn(\pi, \pi') = \sgn(\pi, \pi', m_A, n_B) := (-1)^{(m_A - C)(n_B - C)} \cdot \sgn(\sigma) \cdot \sgn(\sigma').
\label{eq:sgnpipi}
\end{equation}
In Appendix \ref{app:heuristiccontfer}, a heuristic motivation for how this sign function arises can be found. We now define the \textbf{fermionic attached product} as
\begin{equation}
\begin{aligned}
	A \contfer B := \sum_{(\pi, \pi') \in \cC} \sgn(\pi, \pi') \int &f_A(\bX_A, \bY_A) f_B(\bX_B, \bY_B)
	\prod_{c = 1}^C \delta(\bx_{\pi(c)} - \by_{\pi'(c)}) \times \\
	&\times \left( \prod_{\ell = 1}^{n_A} a_{A, \ell} \right)^*
	\left( \prod_{u \in \cU} a_u \right)^*
	\prod_{u' \in \cU'} a_{u'}
	\prod_{\ell' = 1}^{m_B} a_{B, \ell'} \; \mathrm{d}\bX \mathrm{d}\bY.
\end{aligned}
\label{eq:contfer}
\end{equation}
Finally, we define the \textbf{normal ordered product} as
\begin{equation}
\begin{aligned}
	\normalordered{AB} \;:= \int &f_A(\bX_A, \bY_A) f_B(\bX_B, \bY_B)\times \\
	&\times \left( \prod_{\ell = 1}^{n_A} a_{A, \ell} \right)^*
	\left( \prod_{\ell = 1}^{n_B} a_{B, \ell} \right)^*
	\left( \prod_{\ell' = 1}^{m_A} a_{A, \ell'} \right)
	\left( \prod_{\ell' = 1}^{m_B} a_{B, \ell'} \right) \; \mathrm{d}\bX \mathrm{d}\bY.
\end{aligned}
\label{eq:normalordered}
\end{equation}

\section{Main Results}	
\label{sec:mainresults}

Recall the definitions of the bosonic/fermionic attached products $ A \cont B $ \eqref{eq:cont} and $ A \contfer B $ \eqref{eq:contfer}, as well as the normal ordered product $ \normalordered{AB} $ \eqref{eq:normalordered}. Our two main commutator formulas are now the following.

\begin{theorem}[Bosonic Commutator Formula]
Consider $ A, B \in \cA_+ $ of the form \eqref{eq:A}, i.e., the CCR hold. Then,
\begin{equation}
	[A, B] = A \cont B - B \cont A.
\label{eq:contthm}
\end{equation}
\label{thm:cont}
\end{theorem}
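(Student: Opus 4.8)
The plan is to reduce the commutator formula to a \emph{product formula} of the shape $AB = \normalordered{AB} + A \cont B$ and then exploit the bosonic symmetry of the fully normal ordered product. Indeed, granting the product formula, the same statement with the roles of $A$ and $B$ exchanged gives $BA = \normalordered{BA} + B \cont A$. Since the CCR \eqref{eq:CCRCAR} let all creation operators commute among themselves and all annihilation operators commute among themselves, the two fully normal ordered operators coincide, $\normalordered{AB} = \normalordered{BA}$: they differ only by a reordering of mutually commuting factors, and the scalar kernels satisfy $f_A f_B = f_B f_A$. Subtracting then yields $[A,B] = AB - BA = A \cont B - B \cont A$, which is \eqref{eq:contthm}.

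The core of the argument is therefore the product formula, which I would establish by normal ordering the single ``mixed block'' appearing in the integrand of $AB$. Writing out $AB$ from \eqref{eq:A}, every creation operator of $A$ already stands on the far left and every annihilation operator of $B$ already stands on the far right; the only factors out of normal order are the annihilation operators $a_{A,1}\cdots a_{A,m_A}$ of $A$ immediately followed by the creation operators $a^*_{B,n_B}\cdots a^*_{B,1}$ of $B$. I would normal order precisely this block by repeatedly applying $a_{\by} a^*_{\bx} = a^*_{\bx} a_{\by} + \delta(\bx - \by)$ from \eqref{eq:CCRCAR}. A straightforward induction on $m_A + n_B$ shows that this produces
\begin{equation*}
	\Big( \prod_{k=1}^{m_A} a_{A,k} \Big) \Big( \prod_{j=n_B}^{1} a^*_{B,j} \Big)
	= \sum_{M} \Big( \prod_{(k,j) \in M} \delta(\bx_{B,j} - \by_{A,k}) \Big) \, R_M,
\end{equation*}
where the sum runs over all partial matchings $M$ between the annihilation indices of $A$ and the creation indices of $B$, each occurring with coefficient $+1$, and $R_M$ denotes the unmatched creation operators of $B$ placed, in normal order, to the left of the unmatched annihilation operators of $A$, each group retaining its original relative order.

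It then remains to match this expansion with the definition \eqref{eq:cont}. The empty matching $M = \emptyset$ reproduces $\normalordered{AB}$ of \eqref{eq:normalordered}. For a matching of size $C \ge 1$, the set of matched creation indices of $B$, listed in decreasing order, is exactly the map $\pi$, and the induced assignment onto the matched annihilation indices of $A$ is exactly $\pi'$, so that partial matchings of size $C$ are in bijection with configs $(\pi, \pi') \in \cC$ from \eqref{eq:cC}; the ordering constraint $\pi(1) > \ldots > \pi(C)$ ensures each matching is counted once, consistent with the coefficient $+1$. Under this bijection the unmatched indices are precisely $\cU$ and $\cU'$ of \eqref{eq:cU}, and restoring the creation operators of $A$ on the far left and the annihilation operators of $B$ on the far right reproduces the operator string in \eqref{eq:cont}. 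Because creation operators commute with creation operators and annihilation with annihilation, the order within the products over $\cU$ and $\cU'$ is immaterial, so the increasing-order convention of \eqref{eq:cont} is matched automatically.

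The only genuine obstacle I anticipate is the combinatorial bookkeeping in the inductive normal ordering step: one must verify that every partial matching arises exactly once with coefficient $+1$ and that the unmatched operators survive in their correct relative order, and then align these data cleanly with the indexing conventions of $\cC$, $\cU$ and $\cU'$ (in particular the decreasing convention on $\pi$). Once this correspondence is set up, the identification $\normalordered{AB} = \normalordered{BA}$ and the final subtraction are immediate.
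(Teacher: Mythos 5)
Your proposal is correct and follows essentially the same route as the paper: you reduce the commutator identity to the product formula $AB = \normalordered{AB} + A \cont B$ (the paper's Lemma \ref{lem:cont}), prove that formula by inductively normal ordering the mixed block $a_{A,1}\cdots a_{A,m_A}a^*_{B,n_B}\cdots a^*_{B,1}$ with the CCR so that each partial matching appears exactly once with coefficient $+1$, and then conclude via $\normalordered{AB} = \normalordered{BA}$ and subtraction, exactly as the paper does. The combinatorial bookkeeping you flag as the remaining obstacle is precisely what the paper's induction step $(m_A-1,n_B-1)\land(m_A,n_B-1)\mapsto(m_A,n_B)$ carries out in detail.
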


\begin{theorem}[Fermionic Commutator Formula]
Consider $ A, B \in \cA_- $ of the form \eqref{eq:A}, i.e., the CAR hold. Then,
\begin{equation}
\begin{aligned}
	{[A, B]} &= A \contfer B - B \contfer A \qquad &&\text{\rm if} \; (n_A + m_A)(n_B + m_B) \; \text{\rm is even},\\
	\{A, B\} &= A \contfer B + B \contfer A \qquad &&\text{\rm if} \; (n_A + m_A)(n_B + m_B) \; \text{\rm is odd}.\\
\end{aligned}
\label{eq:contferthm}
\end{equation}
\label{thm:contfer}
\end{theorem}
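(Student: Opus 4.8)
The plan is to derive both identities in \eqref{eq:contferthm} from a single \emph{fermionic product formula},
\begin{equation*}
  AB = (-1)^{m_A n_B}\,\normalordered{AB} + A \contfer B,
\end{equation*}
combined with the reordering rule $\normalordered{BA} = (-1)^{n_A n_B + m_A m_B}\,\normalordered{AB}$. Granting these, the theorem is a parity count: applying the product formula to $BA$ and substituting the reordering rule, the coefficient of $\normalordered{AB}$ is $(-1)^{m_A n_B}$ in $AB$ and $(-1)^{m_B n_A + n_A n_B + m_A m_B}$ in $BA$. Reducing the two exponents modulo $2$, their difference equals $(n_A + m_A)(n_B + m_B)$. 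Hence when $(n_A+m_A)(n_B+m_B)$ is even the coefficients coincide and the normal ordered terms cancel in $AB - BA$, yielding $[A,B] = A\contfer B - B\contfer A$; when it is odd the coefficients are opposite and the normal ordered terms cancel in $AB + BA$, yielding $\{A,B\} = A\contfer B + B\contfer A$. This closing step is routine bookkeeping.

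The reordering rule is immediate: in both $\normalordered{AB}$ and $\normalordered{BA}$ all creation operators stand left of all annihilation operators, so passing from one to the other amounts to swapping the block of $n_B$ creation operators of $B$ past the $n_A$ creation operators of $A$ and, independently, the block of $m_B$ annihilation operators of $B$ past the $m_A$ annihilation operators of $A$. By the CAR these two block transpositions contribute $(-1)^{n_A n_B}$ and $(-1)^{m_A m_B}$, while the scalar kernels commute.

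The substance of the argument is the product formula, which I would prove by normal ordering $AB$ directly. The only operators out of order are the $m_A$ annihilation operators of $A$ standing immediately left of the $n_B$ creation operators of $B$, and the fermionic Wick expansion writes the string $a_{A,1}\cdots a_{A,m_A}\,a^*_{B,n_B}\cdots a^*_{B,1}$ as a signed sum over partial matchings of these two blocks. Each matching yields a product of $\delta$-contractions times a normal ordered remainder; imposing the canonical decreasing order $\pi(1) > \cdots > \pi(C)$ on the contracted connectors of $B$ puts it in bijection with a config $(\pi,\pi') \in \cC$, the uncontracted connectors becoming the external legs indexed by $\cU$ and $\cU'$ exactly as in \eqref{eq:contfer}. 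The crux, and the step I expect to be the main obstacle, is to verify that the Wick sign of each matching equals $\sgn(\pi,\pi')$ from \eqref{eq:sgnpipi}. I would follow the definition of the sign itself: for a \emph{maximally crossed} config the contractions are nested and can be stripped off pairwise with no sign, so the only contribution comes from sliding the $m_A - C$ uncontracted annihilation operators of $A$ past the $n_B - C$ uncontracted creation operators of $B$, giving precisely $(-1)^{(m_A - C)(n_B - C)}$. A general config is brought into maximally crossed position by the permutations $\sigma, \sigma'$ of \eqref{eq:fullcrossing}; since reordering fermionic operators within $A$ (resp. $B$) costs the signature of the permutation while the accompanying relabelling of integration variables in $f_A$ (resp. $f_B$) is absorbed by renaming dummy variables, this produces exactly the factor $\sgn(\sigma)\sgn(\sigma')$, reconstructing \eqref{eq:sgnpipi} (the heuristic of the appendix guides this accounting). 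As a consistency check, setting all signs to $+1$ collapses the computation to the bosonic product formula underlying \eqref{eq:contthm}.
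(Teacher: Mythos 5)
Your top-level reduction is the same as the paper's: Theorem \ref{thm:contfer} is deduced from the fermionic contraction formula of Lemma \ref{lem:contfer}, $AB = (-1)^{m_A n_B}\normalordered{AB} + A \contfer B$, applied to both $AB$ and $BA$, plus a reordering rule for normal ordered products and the parity identity $m_A n_B + m_B n_A + n_A n_B + m_A m_B \equiv (n_A+m_A)(n_B+m_B) \pmod 2$. (A side remark in your favor: your exponent $n_A n_B + m_A m_B$ is the correct one; the paper's proof states $\normalordered{AB} = (-1)^{n_A m_A + n_B m_B}\normalordered{BA}$, which fails already for $A = a^*(f)$, $B = a^*(g)$ and is evidently a typo, since the conclusion the paper then draws is the one your exponent yields.) Where you genuinely differ is the proof of Lemma \ref{lem:contfer} itself. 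The paper runs a double induction on $(m_A, n_B)$: it pulls $a^*_{B,n_B}$ to the left, and the real work consists in showing that the sign function \eqref{eq:sgnpipi} transforms correctly when configs of the smaller problems are promoted to configs of the $(m_A,n_B)$ problem, Eqs.~\eqref{eq:sgnpi+pi+} and \eqref{eq:sgnpi+pi+2}. You instead identify the terms of a fermionic Wick expansion of $a_{A,1}\cdots a_{A,m_A}a^*_{B,n_B}\cdots a^*_{B,1}$ with configs in $\cC$ and compute each sign in one shot: nested (maximally crossed) contractions strip off from the inside out with no sign, normal ordering the uncontracted remainder gives $(-1)^{(m_A-C)(n_B-C)}$, and a general config is reduced to a maximally crossed one at cost $\sgn(\sigma)\sgn(\sigma')$ because annihilation (resp.\ creation) operators anticommute among themselves and \eqref{eq:fullcrossing} preserves the order of uncontracted indices. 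This is a rigorous rendering of the heuristic in Appendix \ref{app:heuristiccontfer}, and it explains the form of \eqref{eq:sgnpipi} more transparently than the induction does; the price is that you must either cite the fermionic Wick theorem as an external input or still supply the (routine, but not free) induction showing that iterated use of the CAR produces each partial matching exactly once --- which is precisely the combinatorial content the paper's induction establishes simultaneously with the signs.
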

These theorems are immediate consequences of the following two lemmas:
\begin{lemma}[{Bosonic Contraction Formula, \cite[p.~54]{Friedrichs1965}, \cite[Thm.~1.1]{Hepp1969}}]
Consider $ A, B \in \cA_+ $, i.e., the CCR hold. Then,
\begin{equation}
	AB = \; \normalordered{AB} + A \cont B.
\label{eq:contproduct}
\end{equation}
\label{lem:cont}
\end{lemma}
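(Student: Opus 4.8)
The plan is to prove \eqref{eq:contproduct} by normal--ordering the operator product $AB$ directly. Writing both factors in the form \eqref{eq:A} and using the abbreviations $a^*_{A,j}, a_{A,k}$, we have
\[
AB = \int f_A(\bX_A,\bY_A)\,f_B(\bX_B,\bY_B)\,
a^*_{A,n_A}\cdots a^*_{A,1}\,
\big(a_{A,1}\cdots a_{A,m_A}\,a^*_{B,n_B}\cdots a^*_{B,1}\big)\,
a_{B,1}\cdots a_{B,m_B}\,\mathrm{d}\bX\,\mathrm{d}\bY.
\]
The creation operators of $A$ already sit to the left of everything and the annihilation operators of $B$ to the right, so the whole task reduces to normal--ordering the bracketed \emph{mixed block}. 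The only tool needed is the CCR in the form $a_{\bx}a^*_{\by}=a^*_{\by}a_{\bx}+\delta(\bx-\by)$ coming from \eqref{eq:CCRCAR} (which, by the footnote there, is a rigorous identity of operators on $\sF_{\fin}$), together with the bosonic facts that creation operators mutually commute and annihilation operators mutually commute.

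First I would establish the following two--block normal--ordering identity for the mixed block:
\[
a_{A,1}\cdots a_{A,m_A}\,a^*_{B,n_B}\cdots a^*_{B,1}
= \sum_{\substack{S\subseteq\{1,\dots,m_A\},\ T\subseteq\{1,\dots,n_B\}\\ |S|=|T|,\ \phi:T\to S\ \text{bijection}}}
\Big(\prod_{j\in T}\delta(\bx_{B,j}-\by_{A,\phi(j)})\Big)\,\beta_{T^c}\,\alpha_{S^c},
\]
where $\beta_{T^c}$ denotes the product of the uncontracted creation operators $a^*_{B,j}$ ($j\notin T$) and $\alpha_{S^c}$ the product of the uncontracted annihilation operators $a_{A,k}$ ($k\notin S$), both kept in their natural order (which is immaterial by bosonic commutativity). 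I would prove this by induction on $n_B$, pushing the left--most creation operator $a^*_{B,n_B}$ through the annihilation string via the elementary identity
\[
a_{A,1}\cdots a_{A,m_A}\,a^*_{B,n_B}
= a^*_{B,n_B}\,a_{A,1}\cdots a_{A,m_A}
+ \sum_{k=1}^{m_A}\delta(\bx_{B,n_B}-\by_{A,k})\,a_{A,1}\cdots\widehat{a_{A,k}}\cdots a_{A,m_A},
\]
itself proved by a one--line induction on $m_A$ from the CCR. Applying the inductive hypothesis to the two resulting blocks, which each contain only $n_B-1$ creation operators, produces exactly the matchings in which index $n_B$ is either left uncontracted (first term) or contracted with a unique $k$ (the $k$--th summand). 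Hence every partial matching of the annihilation indices of $A$ with the creation indices of $B$ is generated exactly once, and --- this is the decisive bosonic point --- always with coefficient $+1$.

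It then remains to substitute this expansion back into $AB$ and to recognise the two pieces. The unique term with $|S|=|T|=0$ carries no $\delta$ and leaves the full strings $a^*_A\,a^*_B\,a_A\,a_B$ in normal order; comparing with \eqref{eq:normalordered} identifies it with $\normalordered{AB}$. For the terms with $|S|=|T|=C\ge 1$ I would set up the bijection between partial matchings of size $C$ and configs $(\pi,\pi')\in\cC$ from \eqref{eq:cC}: given a matching, list its creation indices in strictly decreasing order to define $\pi$ (this is precisely the normalisation $\pi(1)>\dots>\pi(C)$ built into $\cC$), and let $\pi'$ record the matched annihilation partners, so that $|\imag(\pi')|=C$. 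The uncontracted index sets $T^c$ and $S^c$ then become exactly $\cU$ and $\cU'$ of \eqref{eq:cU}, the $\delta$--product becomes $\prod_{c=1}^C\delta(\bx_{\pi(c)}-\by_{\pi'(c)})$, and $a^*_{A,n_A}\cdots a^*_{A,1}\,\beta_{T^c}\,\alpha_{S^c}\,a_{B,1}\cdots a_{B,m_B}$ matches the operator string in \eqref{eq:cont} term by term (using bosonic commutativity to fix the canonical order inside each creation-- and each annihilation--block). Summing over $C\ge 1$ yields $A\cont B$, which proves \eqref{eq:contproduct}.

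The main obstacle is not analytic but purely organisational: one must verify that the combinatorial sum produced by the repeated use of the CCR is in exact one--to--one correspondence with the index set $\cC$, including the ordering conventions on the uncontracted external legs. Since the bosonic case carries no signs, the whole difficulty lies in this bookkeeping --- checking that the strictly decreasing normalisation of $\pi$ counts each set of contractions exactly once, and that reordering the uncontracted creation and annihilation operators into the canonical order of \eqref{eq:cont} is justified by commutativity and introduces no extra factors.
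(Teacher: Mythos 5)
Your proposal is correct and takes essentially the same approach as the paper's first proof: a CCR--based induction in which $a^*_{B,n_B}$ is pushed left through the annihilation string (your ``elementary identity'' is exactly the paper's first displayed step in \eqref{eq:evaluatecont}), the induction hypothesis is applied to the two resulting smaller blocks, and the generated terms are identified with the config set $\cC$ --- your partial--matching formulation plus the final bijection is just a reorganization of the paper's direct double induction over $(m_A,n_B)$ with the auxiliary sets $\tilde\cC_k$ and $\tilde\cC$. (The paper also gives a second, genuinely different proof via coherent--state lower symbols, which your proposal does not touch.)
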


\begin{lemma}[Fermionic Contraction Formula]
Consider $ A, B \in \cA_- $, i.e., the CAR hold. Then,
\begin{equation}
	AB = (-1)^{m_A n_B} \; \normalordered{AB} + A \contfer B.
\label{eq:contferproduct}
\end{equation}
\label{lem:contfer}
\end{lemma}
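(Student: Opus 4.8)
The plan is to reduce the whole identity to a single fermionic normal--ordering computation on the ``middle'' block of operators, and then to carry out the sign analysis by comparison with the maximally crossed configs.

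First I would write the product out explicitly. Since both $A$ and $B$ are already normal ordered, the only obstruction to normal order in $AB$ sits between the annihilators of $A$ and the creators of $B$:
\begin{equation*}
	AB = \int f_A(\bX_A,\bY_A)\, f_B(\bX_B,\bY_B)\,\Big(\prod_{\ell=1}^{n_A} a_{A,\ell}\Big)^{\!*} M \,\Big(\prod_{\ell'=1}^{m_B} a_{B,\ell'}\Big)\, \mathrm{d}\bX\,\mathrm{d}\bY, \qquad M := a_{A,1}\cdots a_{A,m_A}\, a^*_{B,n_B}\cdots a^*_{B,1}.
\end{equation*}
Writing $b_k := a_{A,k}$ and $c_i := a^*_{B,\,n_B-i+1}$, the statement reduces to normal ordering $M = b_1\cdots b_{m_A} c_1\cdots c_{n_B}$, i.e.\ to moving every creator to the left of every annihilator, with the coordinates travelling along with their operators.

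Next I would establish a fermionic Wick expansion for $M$ by induction on $m_A$, moving the innermost annihilator $b_{m_A}$ rightward via the elementary identity
\begin{equation*}
	b\, c_1\cdots c_q = \sum_{i=1}^{q}(-1)^{i-1}\{b,c_i\}\, c_1\cdots \widehat{c_i}\cdots c_q + (-1)^q\, c_1\cdots c_q\, b,
\end{equation*}
which is a direct consequence of the CAR $\{a_\bx,a^*_\by\}=\delta(\bx-\by)$ and $\{a^*_\bx,a^*_\by\}=0$. Iterating expresses $M$ as a sum over partial matchings of the $b$'s with the $c$'s, each contributing a product of $\delta$--factors $\{b_k,c_i\}$ times the normal--ordered product of the remaining operators, with a definite sign. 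A matching of $C\ge1$ pairs is exactly a config $(\pi,\pi')\in\cC$ (the pair $(b_k,c_i)$ giving $\pi'=(A,k)$, $\pi=(B,n_B-i+1)$, and $\{b_k,c_i\}=\delta(\bx_{\pi}-\by_{\pi'})$), while the uncontracted operators are indexed by $\cU,\cU'$. The empty matching ($C=0$) moves the whole creator block past the whole annihilator block, producing the factor $((-1)^{n_B})^{m_A}=(-1)^{m_A n_B}$ and precisely the arrangement of $\normalordered{AB}$; this is the first term of \eqref{eq:contferproduct}. It then remains to show that the sign of each nonempty matching equals $\sgn(\pi,\pi')$, which identifies the rest with $A\contfer B$.

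The heart of the argument is this sign identity, which I would prove by reduction to the maximally crossed configs. When $\pi'(c)=(A,m_A-c+1)$ and $\pi(c)=(B,n_B-c+1)$, the contracted pairs $(b_{m_A},c_1),(b_{m_A-1},c_2),\dots$ are nested around the centre of $M$; contracting them innermost--first costs no sign, and one is left with the uncontracted block $b_1\cdots b_{m_A-C}\,c_{C+1}\cdots c_{n_B}$, whose normal ordering costs exactly $(-1)^{(m_A-C)(n_B-C)}$ --- matching $\sgn(\pi,\pi')$ since $\sigma,\sigma'$ are then trivial. For a general config I would first reorder the creators among themselves and the annihilators among themselves to reach maximally crossed form; because operators of the same type anticommute, these reorderings are governed precisely by the permutations $\sigma,\sigma'$ of \eqref{eq:fullcrossing} and contribute the signs $\sgn(\sigma),\sgn(\sigma')$, while leaving the $\delta$--factors and the relative order of the uncontracted legs intact. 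Combined with the maximally crossed computation this yields the total sign $(-1)^{(m_A-C)(n_B-C)}\sgn(\sigma)\sgn(\sigma')=\sgn(\pi,\pi')$.

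I expect the main obstacle to be exactly this sign bookkeeping: verifying that the permutation which reorders the operator string into maximally crossed form has sign $\sgn(\sigma)\sgn(\sigma')$ under the paper's conventions (decreasing index in the operator string versus the increasing--index products over $\cU,\cU'$), and checking that the reordering disturbs neither which pairs are contracted nor the relative order of the uncontracted legs. The induction and the $C=0$ computation are routine once the single--move identity is in hand.
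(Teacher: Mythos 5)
Your proposal has the same skeleton as the paper's proof: the identical reduction to the operator--string identity \eqref{eq:contferproduct2}, followed by an induction driven by moving one operator through the opposite block (the paper pulls $a^*_{B,n_B}$ to the left, you push $a_{A,m_A}$ to the right --- a mirror image of the same computation, with the same trivial base case and the same single--move CAR identity). Your step 2 is also a correct computation \emph{within} your step--1 procedure: for a maximally crossed config the procedure contracts the leftmost available creator at every stage at cost $+1$, and the leftover block costs $(-1)^{(m_A-C)(n_B-C)}$, matching \eqref{eq:sgnpipi} with trivial $\sigma,\sigma'$.

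The gap is in your step 3. The ``sign of a matching'' is not an intrinsic quantity; it is defined only as the coefficient that your step--1 procedure, applied to the string $M$, assigns to that matching. What your step 3 actually computes is the coefficient of the matching in a \emph{different} expansion of $M$: the one obtained by first rewriting $M=\sgn(\sigma)\sgn(\sigma')\,M'$ via the CAR and then running the procedure on the reordered string $M'$, where the matching is maximally crossed. To conclude that the two coefficients agree you need an ingredient you never state or prove: either linear independence, over all configs, of the operator--valued distributions $\prod_{c}\delta(\bx_{\pi(c)}-\by_{\pi'(c)})\bigl(\prod_{u\in\cU}a_u\bigr)^{*}\prod_{u'\in\cU'}a_{u'}$ (i.e.\ uniqueness of the Wick decomposition, so that any two expansions of $M$ coincide term by term), or a covariance lemma saying that your step--1 procedure commutes, up to the sign of the permutation, with reorderings of creators among themselves and annihilators among themselves. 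Both are provable, but both are genuinely additional work --- the covariance lemma in particular is essentially equivalent to the sign bookkeeping it is meant to replace, and the linear--independence route needs a real argument (distinct singular supports plus independence of normal--ordered monomials). The paper avoids the issue entirely: it never compares two expansions, but carries $\sgn(\pi,\pi')$ through the induction and verifies the sign recursions \eqref{eq:sgnpi+pi+} and \eqref{eq:sgnpi+pi+2} for the re--indexed config sets by exactly the swap--counting, maximal--crossing reasoning you invoke. If you recast your step 3 as the analogous recursions for your mirror induction --- relating the sign of a config for $(m_A,n_B)$ to that of its restriction for $(m_A-1,n_B-1)$ when $a_{A,m_A}$ is contracted, and for $(m_A-1,n_B)$ when it is not --- your argument closes and becomes, up to left--right reflection, the paper's proof.
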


\paragraph{Remarks.} 

\begin{enumerate}
\setcounter{enumi}{\theremarks}

\item \label{rem: algorithm} \textit{Algorithm for commutator evaluation}: As formulas \eqref{eq:contthm} and \eqref{eq:contferthm} are rather abstract, let us briefly sketch the algorithms to be used for practical evaluations of (anti--) commutators by means of Friedrichs diagrams. A \textbf{fermionic (anti--) commutator} is evaluated as follows:
\begin{algorithm}[H]
\caption{Fermionic Commutator/Anticommutator Evaluation}
\label{alg:fer}
\begin{algorithmic}[1]\vspace{6pt}
\State Check whether \eqref{eq:contferthm} is valid for commutators or anticommutators: If $ m_A n_B + m_B n_A $ is even, it applies to commutators; if it is odd, then it applies to anticomutators.\vspace{6pt}
\State Draw all diagrams corresponding to $ A \contfer B $ and $ B \contfer A $, i.e., which involve at least one contraction, and translate them into contributions \eqref{eq:G}. If $ m_A n_B + m_B n_A $ is even, then the contributions of $ B \contfer A $ get a factor of $ (-1) $.\vspace{6pt}
\State For each diagram, multiply its contribution by $ (-1) $ for each swap of connectors necessary to take the diagram into a maximally crossed form.\vspace{6pt}
\State For each diagram, multiply its contribution by a ``normal ordering factor'' of $ (-1)^{(m_A - C)(n_B - C)} $ (here, $ (m_A - C)(n_B - C) $ is the product of the numbers of potentially contractable but uncontracted connectors).\vspace{6pt}
\State Add up all contributions.\vspace{6pt}
\end{algorithmic}
\end{algorithm}

In the \textbf{bosonic case}, only \textbf{commutators} may be evaluated by \eqref{eq:contthm}. The algorithm is much simpler:
\begin{algorithm}[H]
\caption{Bosonic Commutator Evaluation}
\label{alg:bos}
\begin{algorithmic}[1]\vspace{6pt}
\State Draw all diagrams corresponding to $ A \cont B $ and $ B \cont A $ and translate them into a contribution \eqref{eq:G}. The contribution $ B \cont A $ always gets a factor of $ (-1) $.\vspace{6pt}
\State Add up all contributions.\vspace{6pt}
\end{algorithmic}
\end{algorithm}

\item \label{rem:distkernel} \textit{Distributions as integral kernels}: In more physical situations, such as many--body systems with Coulomb interaction or QFT models, one may encounter operator products of the form \eqref{eq:A}, where the integral kernel $ f_A $ is not an element of $ L^2 $, but rather a distribution, e.g., a tempered distribution in $ \cS'(\RRR^{(n_A + m_A) d}) $. In that case, a contraction as in \eqref{eq:G} may or may not make sense, depending on whether the distribution multiplication is allowed, see also \cite[Chap.~8]{Hoermander}. If the distribution multiplication is allowed for all contractions in all $ A \cont B $ or $ A \contfer B $, then the respective formulas in Lemmas \ref{lem:cont} and \ref{lem:contfer}, as well as in Theorems \ref{thm:cont} and \ref{thm:contfer} remain valid.

\item \label{rem:contnumber} \textit{Number of contractions}: In each attached product $ A \cont B $ or $ A \contfer B $, the number of contributions, here called $ \cN_{(A,B)} $, is easily obtained combinatorically: If a diagram shall have $ C $ contractions, then there are $ {n_B \choose C} $ choices for the values $ n_B \ge \pi(1) > \ldots > \pi(C) \ge 1 $. The number of choice options for associated indices for contraction $ \pi'(1), \ldots, \pi'(C) $ is $ m_A \cdot (m_A - 1) \cdot \ldots \cdot (m_A - C + 1) $. So the total number of contributions is
\begin{equation}
\begin{aligned}
	\cN_{(A,B)}
	= &\sum_{C = 1}^{\min(m_A, n_B)} {n_B \choose C} \cdot m_A \cdot \ldots \cdot (m_A - C + 1)\\
	= &\sum_{C = 1}^{\min(m_A, n_B)} \frac{n_B! m_A!}{(n_B - C)! C! (m_A - C)!}
	= \sum_{C = 1}^{\min(m_A, n_B)} C! {n_B \choose C} {m_A \choose C}.
\end{aligned}
\label{eq:cNAB}
\end{equation}

\end{enumerate}	
\setcounter{remarks}{\theenumi}

\section{Proofs}	
\label{sec:proofs}

The bosonic contraction formula (Lemma \ref{lem:cont}) was already stated in \cite{Friedrichs1965} and \cite{Hepp1969}, where \cite[p.~13]{Hepp1969} provides the idea for a proof by induction. For completeness, we give the explicit proof by induction here, as well as a second proof based on exponentials of derivatives.

\begin{proof}[Proof of Lemma \ref{lem:cont}]
Plugging in the definitions of $ A, B $ \eqref{eq:A}, $ \normalordered{AB} $ \eqref{eq:normalordered} and $ A \cont B $ \eqref{eq:cont}, we can equivalently transform the statement to be shown \eqref{eq:contproduct} into
\begin{equation}
\begin{aligned}
	&a_{A, 1} \ldots a_{A, m_A} a^*_{B, n_B} \ldots a^*_{B, 1}\\
	= &a^*_{B, n_B} \ldots a^*_{B, 1} a_{A, 1} \ldots a_{A, m_A}
	+ \sum_{(\pi, \pi') \in \cC} \prod_{c = 1}^C \delta(\bx_{\pi(c)} - \by_{\pi'(c)})
	\left( \prod_{u \in \cU} a_u \right)^*
	\prod_{u' \in \cU'} a_{u'}.
\end{aligned}
\label{eq:contproduct2}
\end{equation}
The proof of \eqref{eq:contproduct2} is done by an induction over $ n_B, m_A \in \NNN_0 $. We start with the case where $ n_B = 0 $ or $ m_A = 0 $, corresponding to the axes of the $ m_A $--$ n_B $--quadrant. Then we establish that \eqref{eq:contproduct2} holds for $ (m_A, n_B) $, assuming that it holds for $ (m_A, n_B - 1) $ and $ (m_A - 1, n_B - 1) $. With this induction step, we can ``fill up'' the quadrant of index pairs $ (m_A, n_B) $, for which \eqref{eq:contproduct2} is valid, line by line.\\

On the axes $ n_B = 0 $ or $ m_A = 0 $, the statement is trivially satisfied, since $ AB $ is already normal ordered and the sum over $ \cC $ in \eqref{eq:contproduct2} is empty.\\

For the induction step $ (m_A - 1, n_B - 1) \land (m_A, n_B - 1) \mapsto (m_A, n_B) $, we consider the left--hand side of \eqref{eq:contproduct2} and commute $ a^*_{B, n_B} $ to the very left, before using the induction assumption:
\begin{equation}
\begin{aligned}
	&a_{A, 1} \ldots a_{A, m_A} a^*_{B, n_B} \ldots a^*_{B, 1}\\
	= &\sum_{k = 1}^{m_A} a_{A, 1} \ldots [a_{A, k}, a^*_{B, n_B}] \ldots a_{A, m_A} a^*_{B, n_B-1} \ldots a^*_{B, 1}
	+ a^*_{B, n_B} a_{A, 1} \ldots a_{A, m_A} a^*_{B, n_B-1} \ldots a^*_{B, 1}\\
	= & \underbrace{\sum_{k = 1}^{m_A} \delta(\bx_{B, n_B} - \by_{A, k}) \!\! \left( \sum_{(\pi, \pi') \in \tilde\cC_k} \prod_{c = 1}^C \delta(\bx_{\pi(c)} - \by_{\pi'(c)})
	\Bigg( \! \prod_{u \in \tilde\cU} \! a_u \!\Bigg)^* \!\!\!\!
	\prod_{u' \in \tilde\cU'_k} \!\! a_{u'}
	\! + \! \Bigg( \! \prod_{u \neq (B, n_B)} \!\!\!\!\! a_u \! \Bigg)^* \!\!\!\!
	\prod_{u' \neq (A, k)} \!\!\!\!\! a_{u'} \! \right)}_{\circled{1}}\\
	&+ \underbrace{\sum_{(\pi, \pi') \in \tilde\cC} \prod_{c = 1}^C \delta(\bx_{\pi(c)} - \by_{\pi'(c)}) a^*_{B, n_B}
	\left( \prod_{u \in \tilde\cU} a_u \right)^*
	\prod_{u' \in \cU'} a_{u'}}_{\circled{2}}
	+ \underbrace{a^*_{B, n_B} \ldots a^*_{B, 1} a_{A, 1} \ldots a_{A, m_A}}_{\circled{3}}.
\end{aligned}
\label{eq:evaluatecont}
\end{equation}
Here, we used the modified sets of contraction configs and uncontracted legs
\begin{equation}
\begin{aligned}
	\tilde\cC_k := &\big\{ (\pi, \pi') \; \big\vert \;
	\pi: \{1, \ldots, C\} \to \cI_B \setminus \{ (B, n_B) \}, \;
	\pi': \{1, \ldots, C\} \to \cI'_A \setminus \{ (A, k) \},\\
	&\qquad \qquad 1 \le C \le \min(m_A, n_B) - 1, \;|\mathrm{imag}(\pi')|=C,\; 
	\pi(1) > \ldots > \pi(C) \big\},\\
	\tilde\cC := &\big\{ (\pi, \pi') \; \big\vert \;
	\pi: \{1, \ldots, C\} \to \cI_B \setminus \{ (B, n_B) \}, \;
	\pi': \{1, \ldots, C\} \to \cI'_A,\\
	&\qquad \qquad 1 \le C \le \min(m_A, n_B - 1), \; |\mathrm{imag}(\pi')|=C, \;
	\pi(1) > \ldots > \pi(C) \big\},\\
\end{aligned}
\label{eq:tildecC}
\end{equation}
\begin{equation}
\begin{aligned}
	\tilde\cU'_k = \tilde\cU'_{k, \pi'} := &\big\{ (A, k') \neq (A, k) \; \mid \nexists \; c \in \{1, \ldots, C\} : \pi'(c) = (A, k') \big\} \subseteq \cI'_A,\\
	\tilde\cU = \tilde\cU_\pi := &\big\{ (B, j) \neq (B, n_B) \; \mid \nexists\;  c \in \{1, \ldots, C\} : \pi(c) = (B, j) \big\} \subseteq \cI_B.\\
\end{aligned}
\label{eq:tildecU}
\end{equation}
Now in \eqref{eq:evaluatecont}, the term \circled{1} comprises all those contraction configs between all coordinates of $ \cI_A' $ and $ \cI_B $ into which $ (B, n_B) $ is involved: The sum over $ \tilde\cC_k $ contains all terms where there is at least one further contraction in addition to the one involving $ (B, n_B) $, and the operator product over $ u \neq (B, n_B), u' \neq (A, k) $ is that term, where $ (B, n_B) $ is involved into the only contraction.\\
The term \circled{2} comprises all configs of $ \cI_A' $ and $ \cI_B $, into which $ (B, n_B) $ is not involved. So $ \circled{1} + \circled{2} $ just amounts to all contraction configs, i.e., to the second term on the r.h.s of \eqref{eq:contproduct2}.\\
Finally, \circled{3} is the normal ordered first term on the r.h.s of \eqref{eq:contproduct2}. This establishes \eqref{eq:contproduct2} for $ (m_A, n_B) $ and thus the induction step.\\
\end{proof}

\begin{proof}[Proof of Theorem \ref{thm:cont}]
Keeping in mind that definition \eqref{eq:normalordered} entails $ \normalordered{AB} = \normalordered{BA} $, Lemma \ref{lem:cont} readily implies
\begin{equation}
	[A, B] = AB - BA = \; \normalordered{AB} + A \cont B - \normalordered{BA} - B \cont A = A \cont B - B \cont A.
\end{equation}
\end{proof}

Before we are going to verify the fermionic case, let us present an alternative proof of Lemma \ref{lem:cont} based on coherent state techniques.

\begin{proof}[Alternative proof of Lemma \ref{lem:cont}]
As in the previous proof, let us assume w.l.o.g. that $A$ and $B$ are of the form $A=a_{J_1} \ldots a_{J_{m_A}} $ and $B=a^*_{I_{n_B}} \ldots a^*_{I_1}$. Since $A$ and $B$ only involve the finitely many modes $\{J_1,\dots ,J_{m_A},I_1,\dots I_{n_B}\}$, we can further assume that the one particle Hilbert space is finite dimensional, i.e., given by $\mathfrak{h}=\mathbb{C}^N$. In the following, let $|z\rangle $ denote the coherent state corresponding to $z\in \mathbb{C}^N$ defined (modulo a complex phase) by the eigen-equations $a_k |z\rangle=z_k |z\rangle$, and let us define the lower symbol of an operator $X$ as the function $z\mapsto \langle{z|X|z\rangle}$. Clearly the lower symbol of the operator $\normalordered{AB}$ is given by $\langle z|\normalordered{AB}|z\rangle=\prod_{i=1}^{n_B}\overline{z}_{I_i}\prod_{j=1}^{m_A}z_{J_j}$, and therefore the lower symbol of $AB$ reads
\begin{equation}
\label{eq: symbol}
\langle z|AB|z\rangle\! =\! e^{\nabla_{\overline{z}}\nabla_z}\! \! \left(\prod_{i=1}^{n_B}\overline{z}_{I_i}\prod_{j=1}^{m_A}z_{J_j}\! \right)\! \! =\!  \langle z| \normalordered{AB} |z\rangle\, +\! \! \! \! \! \sum_{C=1}^{\min\{n_B,m_A\}} \! \frac{\left(\nabla_{\overline{z}}\nabla_z\right)^C}{C!} \! \left(\prod_{i=1}^{n_B}\overline{z}_{I_i}\prod_{j=1}^{m_A}z_{J_j}\! \right),
\end{equation}
see e.g., Proposition 4.4 in \cite{Derezinski}. We are going to prove by induction in $ C $ that 
\begin{equation}
\label{eq: coherent}
    \left(\nabla_{\overline{z}}\nabla_z\right)^C \left(\prod_{i=1}^{n_B}\overline{z}_{I_i}\prod_{j=1}^{m_A}z_{J_j}\! \right)=\sum_{\pi\in \cC^B_C,\pi'\in \cC^A_C}f_{\pi,\pi'}(z),
\end{equation}
where $\cC^A_C := \big\{ \pi' \; \big\vert \;
	\pi': \{1, \ldots, C \} \to \{1,\ldots , m_A\},\, \text{$\pi'$ is injective}
	 \big\}$, $\cC^B_C := \big\{ \pi \; \big\vert \;
	\pi: \{1, \ldots, C \} \to \{1,\ldots , n_B\},\, \text{$\pi$ is injective} \big\}$ and $f_{\pi,\pi'}$ is defined as
 \begin{equation}
     f_{\pi,\pi'}(z):=\prod_{p=1}^C \delta\left(I_{\pi(p)}-J_{\pi'(p)} \right)\prod_{i\notin \mathrm{imag}(\pi) }\overline{z}_{I_i}\prod_{j\notin \mathrm{imag}(\pi')} z_{J_j}.
 \end{equation}
 The case $C=0$ is trivial. The induction step $C\mapsto C+1$ follows from the observation that we can express the term $\left(\nabla_{\overline{z}}\nabla_z\right)^{C+1} \left(\prod_{i=1}^{n_B}\overline{z}_{I_i}\prod_{j=1}^{m_A}z_{J_j}\! \right)= \sum_{\pi\in \cC^B_C,\pi'\in \cC^A_C}\nabla_{\overline{z}}\nabla_z f_{\pi,\pi'}(z)$ as
 \begin{equation}
 \begin{aligned}
    \sum_{\pi\in \cC^B_C,\pi'\in \cC^A_C}& \prod_{p=1}^C \delta\left(I_{\pi(p)}-J_{\pi'(p)} \right)\! \! \! \sum_{\ell \notin \mathrm{imag}(\pi),k \notin \mathrm{imag}(\pi')}\! \! \! \delta\left(I_{\ell}-J_{k} \right)\! \! \! \prod_{i\notin \mathrm{imag}(\pi) \cup \{\ell\}}\! \! \! \overline{z}_{I_i}\! \! \! \prod_{j\notin \mathrm{imag}(\pi')\cup \{k\}}\! \! \!  z_{J_j}\\
    &=\sum_{\pi\in \cC^B_{C},\pi'\in \cC^A_{C}}\sum_{\ell \notin \mathrm{imag}(\pi),k \notin \mathrm{imag}(\pi')} f_{\pi_\ell,\pi'_k}(z)=\sum_{\Pi\in \cC^B_{C+1},\Pi'\in \cC^A_{C+1}}f_{\Pi,\Pi'}(z),
 \end{aligned}
 \end{equation}
 where $\pi_\ell:\{1, \ldots, C+1 \} \to \{1,\ldots , n_B\}$ is an extension of the function $\pi$ by $\pi_\ell(C+1):=\ell$, and $\pi'_k$ is an extension of the function $\pi'$ by $\pi'_k(C+1):=k$. This finishes the induction.\\
 Combining Eq.~(\ref{eq: coherent}) with the observation that any pair $(\Pi,\Pi')\in \cC^B_C\times \cC^A_C$ can be uniquely written as $\Pi=\tilde{\pi}\circ \sigma$ and $\Pi'=\tilde{\pi}'\circ \sigma$, where $(\tilde{\pi},\tilde{\pi}')\in \cC_C:=\big\{ (\tilde{\pi},\tilde{\pi}')\in \cC \; \big\vert \; |\mathrm{imag}(\tilde{\pi})|=C \big\}$ and $\sigma\in S_C$ is a permutation of the set $\{1,\ldots,C\}$, yields
 \begin{equation}
     \frac{\left(\nabla_{\overline{z}}\nabla_z\right)^C}{C!} \! \left(\prod_{i=1}^{n_B}\overline{z}_{I_i}\prod_{j=1}^{m_A}z_{J_j}\! \right)=\sum_{(\tilde{\pi},\tilde{\pi}')\in \cC_C}\sum_{\sigma\in S_C}\frac{1}{C!}f_{\tilde{\pi}\circ \sigma ,\tilde{\pi}'\circ \sigma }(z)=\sum_{(\tilde{\pi},\tilde{\pi}')\in \cC_C}f_{\tilde{\pi} ,\tilde{\pi}' }(z),
 \end{equation}
 where we have used $f_{\tilde{\pi}\circ \sigma ,\tilde{\pi}'\circ \sigma }=f_{\tilde{\pi} ,\tilde{\pi}' }$. Since $\sum_{(\tilde{\pi},\tilde{\pi}')\in \cC}f_{\tilde{\pi} ,\tilde{\pi}' }(z)$ is the lower symbol of the operator $ A \cont B =\sum_{(\tilde{\pi},\tilde{\pi}')\in \cC}\prod_{p=1}^C \delta\left(I_{\pi(p)}-J_{\pi'(p)} \right) \left( \prod_{i\notin \mathrm{imag}(\tilde{\pi}) }a_{I_i} \right)^* \prod_{j\notin \mathrm{imag}(\tilde{\pi}')} a_{J_j}$, we obtain by Eq.~(\ref{eq: symbol}) that $\langle z|AB|z\rangle=\langle z|\normalordered{AB}|z\rangle+\langle z|A \cont B|z\rangle$ for all $z\in \mathbb{C}^N$, and therefore $AB=\normalordered{AB}+A \cont B$.
\end{proof}

The proofs for the fermionic cases are similar to those for the bosonic cases,  subject to sign changes.\\

\begin{proof}[Proof of Lemma \ref{lem:contfer}]
The definitions of $ A, B $ \eqref{eq:A}, $ \normalordered{AB} $ \eqref{eq:normalordered} and $ A \contfer B $ \eqref{eq:contfer} allow to equivalently reformulate the target formula \eqref{eq:contferproduct} into
\begin{equation}
\begin{aligned}
	a_{A, 1} \ldots a_{A, m_A} a^*_{B, n_B} \ldots a^*_{B, 1}
	= & (-1)^{m_A n_B} a^*_{B, n_B} \ldots a^*_{B, 1} a_{A, 1} \ldots a_{A, m_A}\\
	&+ \sum_{(\pi, \pi') \in \cC} \sgn(\pi, \pi') \prod_{c = 1}^C \delta(\bx_{\pi(c)} - \by_{\pi'(c)})
	\left( \prod_{u \in \cU} a_u \right)^*
	\prod_{u '\in \cU'} a_{u'},
\end{aligned}
\label{eq:contferproduct2}
\end{equation}
with $ \sgn(\pi, \pi') = \sgn(\pi, \pi', m_A, n_B) $ given by \eqref{eq:sgnpipi}. As in the proof of Lemma \ref{lem:cont}, we perform an induction over $ m_A, n_B \in \NNN_0 $, which amounts to verifying \eqref{eq:contferproduct2} for $ m_A = 0 $ or $ n_B = 0 $ and then establishing the induction step $ (m_A - 1, n_B - 1) \land (m_A, n_B - 1) \mapsto (m_A, n_B) $.\\

The cases $ m_A = 0 $ and $ n_B = 0 $ are trivial, since the sum over $ (\pi, \pi') \in \cC $ is empty.\\

The induction step $ (m_A - 1, n_B - 1) \land (m_A, n_B - 1) \mapsto (m_A, n_B) $ is performed in similarity to \eqref{eq:evaluatecont}. We pull $ a^*_{B, n_B} $ to the left, which generates several anticommutators
\begin{equation}
\begin{aligned}
	&a_{A, 1} \ldots a_{A, m_A} a^*_{B, n_B} \ldots a^*_{B, 1}\\
	= &\sum_{k = 1}^{m_A} (-1)^{m_A - k} a_{A, 1} \ldots \{ a_{A, k}, a^*_{B, n_B} \} \ldots a_{A, m_A} a^*_{B, n_B-1} \ldots a^*_{B, 1}\\
	& \quad + (-1)^{m_A} a^*_{B, n_B} a_{A, 1} \ldots a_{A, m_A} a^*_{B, n_B-1} \ldots a^*_{B, 1}\\
	= & \sum_{k = 1}^{m_A} (-1)^{m_A - k} \delta(\bx_{B, n_B} - \by_{A, k}) \Bigg( \sum_{(\pi, \pi') \in \tilde\cC_k} \sgn(\pi, \pi', m_A - 1, n_B - 1) \prod_{c = 1}^C \delta(\bx_{\pi(c)} - \by_{\pi'(c)}) \times\\
	&\quad \times \left( \prod_{u \in \tilde\cU} a_u \right)^*
	\prod_{u' \in \tilde\cU'_k} a_{u'}
	+ (-1)^{(m_A - 1)(n_B - 1)} \Bigg( \prod_{u \neq (B, n_B)} a_u \Bigg)^*
	\prod_{u' \neq (A, m_A)} a_{u'} \Bigg)\\
	&\quad + (-1)^{m_A} \sum_{(\pi, \pi') \in \tilde\cC} \sgn(\pi, \pi', m_A, n_B - 1) \prod_{c = 1}^C \delta(\bx_{\pi(c)} - \by_{\pi'(c)}) a^*_{B, n_B}
	\left( \prod_{u \in \tilde\cU} a_u \right)^*
	\prod_{u '\in \cU'} a_{u'}\\
	&\quad + (-1)^{m_A} (-1)^{m_A (n_B - 1)} a^*_{B, n_B} \ldots a^*_{B, 1} a_{A, 1} \ldots a_{A, m_A},
\end{aligned}
\label{eq:evaluatecontfer}
\end{equation}
using the contraction config sets $ \tilde\cC_k, \tilde\cC $ \eqref{eq:tildecC} and the uncontracted leg sets $ \tilde\cU'_k, \tilde\cU $ \eqref{eq:tildecU}. In the last step, we applied \eqref{eq:contferproduct2} with coefficient pairs $ (m_A - 1, n_B - 1) $ and $ (m_A, n_B - 1) $, which makes it necessary to adjust the respective indices in $ \sgn(\pi, \pi') $.\\
We now include the contraction $ \delta(\bx_{B, n_B} - \by_{A, k}) $ into each config in $ \tilde\cC_k $. This way, the set of configs $ \tilde\cC_k $ changes to
\begin{equation}
\begin{aligned}
	\tilde\cC_{k, +} := &\big\{ (\pi_+, \pi'_+) \; \big\vert \;
	\pi_+: \{1, \ldots, C\} \to \cI_B, \;
	\pi'_+: \{1, \ldots, C\} \to \cI'_A , \;
	\pi_+(1) = (B, n_B),\\
	&\;  \pi'_+(1) = (A, k), \;
	2 \le C \le \min(m_A, n_B), \; |\mathrm{imag}(\pi')|=C, \;
	\pi_+(1) > \ldots > \pi_+(C) \big\}.\\
\end{aligned}
\label{eq:tildecCk+}
\end{equation}
Concerning the sign of $ (\pi_+, \pi'_+) $, it is easy to see that under the change $ \tilde\cC_k \mapsto \tilde\cC_{k, +} $, the uncontracted leg number product $ (m_A - C)(n_B - C) $ stays invariant. Since $ \pi_+(1) = (B, n_B) $, it is not necessary to additionally swap left--connectors of $ B $ to get a maximally crossed form, so $ \sgn(\sigma) $ also stays invariant. However, in order to achieve a maximally crossed form, the right--connector of $ A $ corresponding to $ (A, k) $ must be moved by $ (m_A - k) $ positions down to get to the bottom, before the original $ \sigma' $ can be applied for bringing all other connectors into the correct position (see Figure \ref{fig:Friedrichs_piplus}). So the sign of $ (\pi_+, \pi'_+) $ amounts to
\begin{equation}
	\sgn(\pi_+, \pi'_+, m_A, n_B) = (-1)^{m_A - k} \sgn(\pi, \pi', m_A - 1, n_B - 1).
\label{eq:sgnpi+pi+}
\end{equation}

\begin{figure}[ht]
	\centering
	\hspace{-2.3cm}
	\scalebox{0.9}{\def\r{0.6} 
\def\rB{0.8} 
\def\rC{0.5} 
\def\extl(#1, #2){\fill (#1, #2) circle (0.05); \fill[opacity = 0.3, blue] (#1, #2) circle (0.1);  } 
\def\extr(#1, #2){\fill (#1, #2) circle (0.05); \fill[opacity = 0.3, green!50!black] (#1, #2) circle (0.1);  } 
\def\conn(#1, #2, #3, #4){({#4*cos(#1) + #2},{#4*sin(#1) + #3}) --  ({(#4+0.2)*cos(#1) + #2},{(#4+0.2)*sin(#1) + #3})} 
\def\connp(#1, #2, #3, #4, #5){({ #4*cos(#1) + #2},{#4*sin(#1) + #3}) } 
\def\connt(#1, #2, #3, #4, #5){({#4*cos(#1) + #2},{#4*sin(#1) + #3}) .. controls  ({(#4+#5)*cos(#1) + #2},{(#4+#5)*sin(#1) + #3}) and} 
\def\connte(#1, #2, #3, #4, #5){ ({(#4+#5)*cos(#1) + #2},{(#4+#5)*sin(#1) + #3}) .. ({#4*cos(#1) + #2},{#4*sin(#1) + #3}) } 
\begin{tikzpicture}
\useasboundingbox (-2,-1.2) rectangle (4.5,1.2);

\draw[thick] \connt(130, 0, 0, (\r+0.2), 0.3) ++(0.2,0) .. (-1.2,1); \extl(-1.2, 1)
\draw[thick] \connt(150, 0, 0, (\r+0.2), 0.3) ++(0.2,0) .. (-1.2,0.6); \extl(-1.2, 0.6)
\draw[thick] \connt(170, 0, 0, (\r+0.2), 0.3) ++(0.2,0) .. (-1.2,0.2); \extl(-1.2, 0.2)
\draw[thick] \connt(190, 0, 0, (\r+0.2), 0.3) ++(0.2,0) .. (-1.2,-0.2); \extl(-1.2, -0.2)
\draw[thick] \connt(210, 0, 0, (\r+0.2), 0.3) ++(0.2,0) .. (-1.2,-0.6); \extl(-1.2, -0.6)
\draw[thick] \connt(190, 2.5, 0, (\r+0.2), 0.3) ++(1,0) .. (0,-1) -- (-1.2,-1); \extl(-1.2, -1.0)

\draw[thick] \connt(-10, 0, 0, (\r+0.2), 0.3) ++(-1,0) .. (2,1) -- (3.7,1); \extr(3.7, 1)
\draw[thick] \connt(-50, 0, 0, (\r+0.2), 0.8) ++(-1,0) .. (2,0.7) -- (3.7,0.7); \extr(3.7, 0.7)
\draw[thick] \connt(10, 2.5, 0, (\r+0.2), 0.3) ++(-0.2,0) .. (3.7,0.2); \extr(3.7, 0.2)
\draw[thick] \connt(-10, 2.5, 0, (\r+0.2), 0.3) ++(-0.2,0) .. (3.7,-0.2); \extr(3.7, -0.2)
\draw[thick] \connt(-30, 2.5, 0, (\r+0.2), 0.3) ++(-0.2,0) .. (3.7,-0.6); \extr(3.7, -0.6)
\draw[thick] \connt(-50, 2.5, 0, (\r+0.2), 0.3) ++(-0.2,0) .. (3.7,-1); \extr(3.7, -1)

\draw[red, opacity = .8, line width = 1] \connt(50, 0, 0, (\r+0.2), 0.5) \connte(230, 2.5, 0, (\r+0.2), 0.5);
\draw[red, opacity = .8, line width = 1] \connt(-30, 0, 0, (\r+0.2), 0.5) \connte(210, 2.5, 0, (\r+0.2), 0.5);
\draw[red, opacity = .8, line width = 1] \connt(30, 0, 0, (\r+0.2), 0.5) \connte(170, 2.5, 0, (\r+0.2), 0.5);
\draw[red, opacity = .8, line width = 1, dashed] \connt(10, 0, 0, (\r+0.2), 0.5) \connte(150, 2.5, 0, (\r+0.2), 0.5);

\filldraw[fill = yellow!50!white, thick] (0,0) circle (\r) node{$A$} ;

\draw[line width = 2, red!50!blue] \conn(130, 0, 0, \r);
\draw[line width = 2, red!50!blue] \conn(150, 0, 0, \r);
\draw[line width = 2, red!50!blue] \conn(170, 0, 0, \r);
\draw[line width = 2, red!50!blue] \conn(190, 0, 0, \r);
\draw[line width = 2, red!50!blue] \conn(210, 0, 0, \r);

\draw[line width = 2, red!50!blue] \conn(50, 0, 0, \r);
\draw[line width = 2, red!50!blue] \conn(30, 0, 0, \r);
\draw[line width = 2, red!50!blue] \conn(10, 0, 0, \r);
\draw[line width = 2, red!50!blue] \conn(-10, 0, 0, \r);
\draw[line width = 2, red!50!blue] \conn(-30, 0, 0, \r);
\draw[line width = 2, red!50!blue] \conn(-50, 0, 0, \r);

\filldraw[fill = yellow!50!white, thick] (2.5,0) circle (\r) node{$B$} ;

\draw[line width = 2, red!50!blue] \conn(150, 2.5, 0, \r);
\draw[line width = 2, red!50!blue] \conn(170, 2.5, 0, \r);
\draw[line width = 2, red!50!blue] \conn(190, 2.5, 0, \r);
\draw[line width = 2, red!50!blue] \conn(210, 2.5, 0, \r);
\draw[line width = 2, red!50!blue] \conn(230, 2.5, 0, \r);

\draw[line width = 2, red!50!blue] \conn(10, 2.5, 0, \r);
\draw[line width = 2, red!50!blue] \conn(-10, 2.5, 0, \r);
\draw[line width = 2, red!50!blue] \conn(-30, 2.5, 0, \r);
\draw[line width = 2, red!50!blue] \conn(-50, 2.5, 0, \r);

\draw[red!50!blue, line width = 1] \connp(10, 0, 0, \r, 0) circle (0.1);
\draw[red!50!blue] (0.5,0.14) -- ++(-0.4,0.6) node[anchor = south] {\scriptsize $ (A, k) $};
\draw[red!50!blue, line width = 1] \connp(150, 2.5, 0, \r, 0) circle (0.1);
\draw[red!50!blue] (2.06,0.24) -- ++(0.3,-1) node[anchor = north] {\scriptsize $ (B, n_B) $};
\node[red!50!blue] at (-0.2,-0.8) {\scriptsize 3 swaps};
\draw[red!50!blue, line width = 1, ->] (0.45,0.05) .. controls ++(-0.2,-0.1) and ++(-0.1,0.2) .. (0.34,-0.36);
\end{tikzpicture}} \hspace{-1.9cm}
	\scalebox{0.9}{\def\r{0.6} 
\def\rB{0.8} 
\def\rC{0.5} 
\def\extl(#1, #2){\fill (#1, #2) circle (0.05); \fill[opacity = 0.3, blue] (#1, #2) circle (0.1);  } 
\def\extr(#1, #2){\fill (#1, #2) circle (0.05); \fill[opacity = 0.3, green!50!black] (#1, #2) circle (0.1);  } 
\def\conn(#1, #2, #3, #4){({#4*cos(#1) + #2},{#4*sin(#1) + #3}) --  ({(#4+0.2)*cos(#1) + #2},{(#4+0.2)*sin(#1) + #3})} 
\def\connt(#1, #2, #3, #4, #5){({#4*cos(#1) + #2},{#4*sin(#1) + #3}) .. controls  ({(#4+#5)*cos(#1) + #2},{(#4+#5)*sin(#1) + #3}) and} 
\def\connte(#1, #2, #3, #4, #5){ ({(#4+#5)*cos(#1) + #2},{(#4+#5)*sin(#1) + #3}) .. ({#4*cos(#1) + #2},{#4*sin(#1) + #3}) } 
\begin{tikzpicture}
\useasboundingbox (-2,-1.2) rectangle (4.5,1.2);

\draw[thick] \connt(130, 0, 0, (\r+0.2), 0.3) ++(0.2,0) .. (-1.2,1); \extl(-1.2, 1)
\draw[thick] \connt(150, 0, 0, (\r+0.2), 0.3) ++(0.2,0) .. (-1.2,0.6); \extl(-1.2, 0.6)
\draw[thick] \connt(170, 0, 0, (\r+0.2), 0.3) ++(0.2,0) .. (-1.2,0.2); \extl(-1.2, 0.2)
\draw[thick] \connt(190, 0, 0, (\r+0.2), 0.3) ++(0.2,0) .. (-1.2,-0.2); \extl(-1.2, -0.2)
\draw[thick] \connt(210, 0, 0, (\r+0.2), 0.3) ++(0.2,0) .. (-1.2,-0.6); \extl(-1.2, -0.6)
\draw[thick] \connt(190, 2.5, 0, (\r+0.2), 0.3) ++(1,0) .. (0,-1) -- (-1.2,-1); \extl(-1.2, -1.0)

\draw[thick] \connt(10, 0, 0, (\r+0.2), 0.3) ++(-1,0) .. (2,1) -- (3.7,1); \extr(3.7, 1)
\draw[thick] \connt(-30, 0, 0, (\r+0.2), 0.3) ++(-1,0) .. (2,0.7) -- (3.7,0.7); \extr(3.7, 0.7)
\draw[thick] \connt(10, 2.5, 0, (\r+0.2), 0.3) ++(-0.2,0) .. (3.7,0.2); \extr(3.7, 0.2)
\draw[thick] \connt(-10, 2.5, 0, (\r+0.2), 0.3) ++(-0.2,0) .. (3.7,-0.2); \extr(3.7, -0.2)
\draw[thick] \connt(-30, 2.5, 0, (\r+0.2), 0.3) ++(-0.2,0) .. (3.7,-0.6); \extr(3.7, -0.6)
\draw[thick] \connt(-50, 2.5, 0, (\r+0.2), 0.3) ++(-0.2,0) .. (3.7,-1); \extr(3.7, -1)

\draw[red, opacity = .8, line width = 1] \connt(50, 0, 0, (\r+0.2), 0.5) \connte(230, 2.5, 0, (\r+0.2), 0.5);
\draw[red, opacity = .8, line width = 1] \connt(-10, 0, 0, (\r+0.2), 0.5) \connte(210, 2.5, 0, (\r+0.2), 0.5);
\draw[red, opacity = .8, line width = 1] \connt(30, 0, 0, (\r+0.2), 0.5) \connte(170, 2.5, 0, (\r+0.2), 0.5);
\draw[red, opacity = .8, line width = 1, dashed] \connt(-50, 0, 0, (\r+0.2), 0.5) \connte(150, 2.5, 0, (\r+0.2), 0.5);

\filldraw[fill = yellow!50!white, thick] (0,0) circle (\r) node{$A$} ;

\draw[line width = 2, red!50!blue] \conn(130, 0, 0, \r);
\draw[line width = 2, red!50!blue] \conn(150, 0, 0, \r);
\draw[line width = 2, red!50!blue] \conn(170, 0, 0, \r);
\draw[line width = 2, red!50!blue] \conn(190, 0, 0, \r);
\draw[line width = 2, red!50!blue] \conn(210, 0, 0, \r);

\draw[line width = 2, red!50!blue] \conn(50, 0, 0, \r);
\draw[line width = 2, red!50!blue] \conn(30, 0, 0, \r);
\draw[line width = 2, red!50!blue] \conn(10, 0, 0, \r);
\draw[line width = 2, red!50!blue] \conn(-10, 0, 0, \r);
\draw[line width = 2, red!50!blue] \conn(-30, 0, 0, \r);
\draw[line width = 2, red!50!blue] \conn(-50, 0, 0, \r);

\filldraw[fill = yellow!50!white, thick] (2.5,0) circle (\r) node{$B$} ;

\draw[line width = 2, red!50!blue] \conn(150, 2.5, 0, \r);
\draw[line width = 2, red!50!blue] \conn(170, 2.5, 0, \r);
\draw[line width = 2, red!50!blue] \conn(190, 2.5, 0, \r);
\draw[line width = 2, red!50!blue] \conn(210, 2.5, 0, \r);
\draw[line width = 2, red!50!blue] \conn(230, 2.5, 0, \r);

\draw[line width = 2, red!50!blue] \conn(10, 2.5, 0, \r);
\draw[line width = 2, red!50!blue] \conn(-10, 2.5, 0, \r);
\draw[line width = 2, red!50!blue] \conn(-30, 2.5, 0, \r);
\draw[line width = 2, red!50!blue] \conn(-50, 2.5, 0, \r);

\draw[red!50!blue, dashed, thick, rotate around = {10:(0.55,0.1)}] (0.55,0.1) ellipse (0.3 and 0.6);
\draw[red!50!blue] (0.34,0.62) -- ++(-0.1,0.2) node[anchor = south] {\scriptsize apply original $ \sigma' $};
\draw[red!50!blue, dashed, thick, rotate around = {20:(1.9,-0.2)} ] (1.9,-0.2) ellipse (0.3 and 0.5);
\draw[red!50!blue] (2.14,-0.62) -- ++(0.1,-0.2) node[anchor = north] {\scriptsize apply original $ \sigma $};

\end{tikzpicture}} \hspace{-1.9cm}
	\scalebox{0.9}{\def\r{0.6} 
\def\rB{0.8} 
\def\rC{0.5} 
\def\extl(#1, #2){\fill (#1, #2) circle (0.05); \fill[opacity = 0.3, blue] (#1, #2) circle (0.1);  } 
\def\extr(#1, #2){\fill (#1, #2) circle (0.05); \fill[opacity = 0.3, green!50!black] (#1, #2) circle (0.1);  } 
\def\conn(#1, #2, #3, #4){({#4*cos(#1) + #2},{#4*sin(#1) + #3}) --  ({(#4+0.2)*cos(#1) + #2},{(#4+0.2)*sin(#1) + #3})} 
\def\connt(#1, #2, #3, #4, #5){({#4*cos(#1) + #2},{#4*sin(#1) + #3}) .. controls  ({(#4+#5)*cos(#1) + #2},{(#4+#5)*sin(#1) + #3}) and} 
\def\connte(#1, #2, #3, #4, #5){ ({(#4+#5)*cos(#1) + #2},{(#4+#5)*sin(#1) + #3}) .. ({#4*cos(#1) + #2},{#4*sin(#1) + #3}) } 
\begin{tikzpicture}
\useasboundingbox (-2,-1.2) rectangle (4.5,1.2);

\draw[thick] \connt(130, 0, 0, (\r+0.2), 0.3) ++(0.2,0) .. (-1.2,1); \extl(-1.2, 1)
\draw[thick] \connt(150, 0, 0, (\r+0.2), 0.3) ++(0.2,0) .. (-1.2,0.6); \extl(-1.2, 0.6)
\draw[thick] \connt(170, 0, 0, (\r+0.2), 0.3) ++(0.2,0) .. (-1.2,0.2); \extl(-1.2, 0.2)
\draw[thick] \connt(190, 0, 0, (\r+0.2), 0.3) ++(0.2,0) .. (-1.2,-0.2); \extl(-1.2, -0.2)
\draw[thick] \connt(210, 0, 0, (\r+0.2), 0.3) ++(0.2,0) .. (-1.2,-0.6); \extl(-1.2, -0.6)
\draw[thick] \connt(230, 2.5, 0, (\r+0.2), 0.3) ++(1,0) .. (0,-1) -- (-1.2,-1); \extl(-1.2, -1.0)

\draw[thick] \connt(50, 0, 0, (\r+0.2), 0.3) ++(-1,0) .. (2.5,1) -- (3.7,1); \extr(3.7, 1)
\draw[thick] \connt(30, 0, 0, (\r+0.2), 0.3) ++(-1,0) .. (2.5,0.7) -- (3.7,0.7); \extr(3.7, 0.7)
\draw[thick] \connt(10, 2.5, 0, (\r+0.2), 0.3) ++(-0.2,0) .. (3.7,0.2); \extr(3.7, 0.2)
\draw[thick] \connt(-10, 2.5, 0, (\r+0.2), 0.3) ++(-0.2,0) .. (3.7,-0.2); \extr(3.7, -0.2)
\draw[thick] \connt(-30, 2.5, 0, (\r+0.2), 0.3) ++(-0.2,0) .. (3.7,-0.6); \extr(3.7, -0.6)
\draw[thick] \connt(-50, 2.5, 0, (\r+0.2), 0.3) ++(-0.2,0) .. (3.7,-1); \extr(3.7, -1)

\draw[red, opacity = .8, line width = 1] \connt(10, 0, 0, (\r+0.2), 0.5) \connte(210, 2.5, 0, (\r+0.2), 0.5);
\draw[red, opacity = .8, line width = 1] \connt(-10, 0, 0, (\r+0.2), 0.5) \connte(190, 2.5, 0, (\r+0.2), 0.5);
\draw[red, opacity = .8, line width = 1] \connt(-30, 0, 0, (\r+0.2), 0.5) \connte(170, 2.5, 0, (\r+0.2), 0.5);
\draw[red, opacity = .8, line width = 1, dashed] \connt(-50, 0, 0, (\r+0.2), 0.5) \connte(150, 2.5, 0, (\r+0.2), 0.5);

\filldraw[fill = yellow!50!white, thick] (0,0) circle (\r) node{$A$} ;

\draw[line width = 2, red!50!blue] \conn(130, 0, 0, \r);
\draw[line width = 2, red!50!blue] \conn(150, 0, 0, \r);
\draw[line width = 2, red!50!blue] \conn(170, 0, 0, \r);
\draw[line width = 2, red!50!blue] \conn(190, 0, 0, \r);
\draw[line width = 2, red!50!blue] \conn(210, 0, 0, \r);

\draw[line width = 2, red!50!blue] \conn(50, 0, 0, \r);
\draw[line width = 2, red!50!blue] \conn(30, 0, 0, \r);
\draw[line width = 2, red!50!blue] \conn(10, 0, 0, \r);
\draw[line width = 2, red!50!blue] \conn(-10, 0, 0, \r);
\draw[line width = 2, red!50!blue] \conn(-30, 0, 0, \r);
\draw[line width = 2, red!50!blue] \conn(-50, 0, 0, \r);

\filldraw[fill = yellow!50!white, thick] (2.5,0) circle (\r) node{$B$} ;

\draw[line width = 2, red!50!blue] \conn(150, 2.5, 0, \r);
\draw[line width = 2, red!50!blue] \conn(170, 2.5, 0, \r);
\draw[line width = 2, red!50!blue] \conn(190, 2.5, 0, \r);
\draw[line width = 2, red!50!blue] \conn(210, 2.5, 0, \r);
\draw[line width = 2, red!50!blue] \conn(230, 2.5, 0, \r);

\draw[line width = 2, red!50!blue] \conn(10, 2.5, 0, \r);
\draw[line width = 2, red!50!blue] \conn(-10, 2.5, 0, \r);
\draw[line width = 2, red!50!blue] \conn(-30, 2.5, 0, \r);
\draw[line width = 2, red!50!blue] \conn(-50, 2.5, 0, \r);

\end{tikzpicture}} \hspace{-1cm}
	\caption{Left to right: Taking the diagram into a maximally crossed form.}
	\label{fig:Friedrichs_piplus}
\end{figure}

Further, the sum over configs in $ \tilde\cC $ with $ \pi: \{1, \ldots, C\} \to \cI_B \setminus \{ (B, n_B) \} $, can be recast into a sum with $ \pi_+: \{1, \ldots, C\} \to \cI_B $ over contractions in
\begin{equation}
\begin{aligned}
	\tilde\cC_+ := &\big\{ (\pi_+, \pi'_+) \; \big\vert \;
	\pi_+: \{1, \ldots, C\} \to \cI_B, \;
	\pi'_+: \{1, \ldots, C\} \to \cI'_A , \;
	\pi_+(c) \neq (B, n_B),\\
	&\qquad \qquad  \quad \;
	1 \le C \le \min(m_A, n_B - 1), \; |\mathrm{imag}(\pi')|=C, \;
	\pi(1) > \ldots > \pi(C) \big\}.\\
\end{aligned}
\label{eq:tildecC+}
\end{equation}
Replacing $ \tilde\cC $ by $ \tilde\cC_+ $, $ (B, n_B) $ becomes an additional uncontracted connector, so we have to replace $ \tilde\cU $ by $ \cU $ (compare definitions \eqref{eq:tildecU} and \eqref{eq:cU}). Concerning the sign of $ (\pi_+, \pi'_+) $, the additional uncontracted connector $ (B, n_B) $ leads to a replacement $ (-1)^{(m_A - C)(n_B - 1 - C)} \mapsto (-1)^{(m_A - C)(n_B - C)} $ in the definition of \eqref{eq:sgnpipi}, which renders an additional factor of $ (-1)^{m_A - C} $. Since there are no changes in the right--connectors of $ A $, $ \sgn(\sigma') $ stays invariant. However, the inclusion of $ (B, n_B) $ makes it necessary to pull the associated connector down, past all $ C $ contracted connectors (see Figure \ref{fig:Friedrichs_piplus2}), in order to get the diagram into a maximally crossed form. Hence, we gain an additional factor of $ (-1)^C $ in $ \sgn(\sigma) $. The overall sign change is thus given by
\begin{equation}
	\sgn(\pi_+, \pi'_+, m_A, n_B) = (-1)^{m_A} \sgn(\pi, \pi', m_A, n_B - 1).
\label{eq:sgnpi+pi+2}
\end{equation}

\begin{figure}[ht]
	\centering
	\hspace{-2.3cm}
	\scalebox{0.9}{\def\r{0.6} 
\def\rB{0.8} 
\def\rC{0.5} 
\def\extl(#1, #2){\fill (#1, #2) circle (0.05); \fill[opacity = 0.3, blue] (#1, #2) circle (0.1);  } 
\def\extr(#1, #2){\fill (#1, #2) circle (0.05); \fill[opacity = 0.3, green!50!black] (#1, #2) circle (0.1);  } 
\def\conn(#1, #2, #3, #4){({#4*cos(#1) + #2},{#4*sin(#1) + #3}) --  ({(#4+0.2)*cos(#1) + #2},{(#4+0.2)*sin(#1) + #3})} 
\def\connp(#1, #2, #3, #4, #5){({ #4*cos(#1) + #2},{#4*sin(#1) + #3}) } 
\def\connt(#1, #2, #3, #4, #5){({#4*cos(#1) + #2},{#4*sin(#1) + #3}) .. controls  ({(#4+#5)*cos(#1) + #2},{(#4+#5)*sin(#1) + #3}) and} 
\def\connte(#1, #2, #3, #4, #5){ ({(#4+#5)*cos(#1) + #2},{(#4+#5)*sin(#1) + #3}) .. ({#4*cos(#1) + #2},{#4*sin(#1) + #3}) } 
\begin{tikzpicture}
\useasboundingbox (-2,-1.2) rectangle (4.5,1.2);

\draw[thick] \connt(130, 0, 0, (\r+0.2), 0.3) ++(0.2,0) .. (-1.2,1); \extl(-1.2, 1)
\draw[thick] \connt(150, 0, 0, (\r+0.2), 0.3) ++(0.2,0) .. (-1.2,0.6); \extl(-1.2, 0.6)
\draw[thick] \connt(170, 0, 0, (\r+0.2), 0.3) ++(0.2,0) .. (-1.2,0.2); \extl(-1.2, 0.2)
\draw[thick] \connt(190, 0, 0, (\r+0.2), 0.3) ++(0.2,0) .. (-1.2,-0.2); \extl(-1.2, -0.2)
\draw[thick] \connt(150, 2.5, 0, (\r+0.2), 0.3) ++(1,0) .. (0.5,-0.7) -- (-1.2,-0.7); \extl(-1.2, -0.7)
\draw[thick] \connt(210, 2.5, 0, (\r+0.2), 0.3) ++(1,0) .. (0.5,-1) -- (-1.2,-1); \extl(-1.2, -1.0)

\draw[thick] \connt(30, 0, 0, (\r+0.2), 0.3) ++(-1,0) .. (2,1) -- (3.7,1); \extr(3.7, 1)
\draw[thick] \connt(-10, 0, 0, (\r+0.2), 0.8) ++(-1,0) .. (2,0.7) -- (3.7,0.7); \extr(3.7, 0.7)
\draw[thick] \connt(10, 2.5, 0, (\r+0.2), 0.3) ++(-0.2,0) .. (3.7,0.2); \extr(3.7, 0.2)
\draw[thick] \connt(-10, 2.5, 0, (\r+0.2), 0.3) ++(-0.2,0) .. (3.7,-0.2); \extr(3.7, -0.2)
\draw[thick] \connt(-30, 2.5, 0, (\r+0.2), 0.3) ++(-0.2,0) .. (3.7,-0.6); \extr(3.7, -0.6)
\draw[thick] \connt(-50, 2.5, 0, (\r+0.2), 0.3) ++(-0.2,0) .. (3.7,-1); \extr(3.7, -1)

\draw[red, opacity = .8, line width = 1] \connt(-30, 0, 0, (\r+0.2), 0.5) \connte(170, 2.5, 0, (\r+0.2), 0.5);
\draw[red, opacity = .8, line width = 1] \connt(-50, 0, 0, (\r+0.2), 0.5) \connte(230, 2.5, 0, (\r+0.2), 0.5);
\draw[red, opacity = .8, line width = 1] \connt(10, 0, 0, (\r+0.2), 0.5) \connte(190, 2.5, 0, (\r+0.2), 0.5);

\filldraw[fill = yellow!50!white, thick] (0,0) circle (\r) node{$A$} ;

\draw[line width = 2, red!50!blue] \conn(130, 0, 0, \r);
\draw[line width = 2, red!50!blue] \conn(150, 0, 0, \r);
\draw[line width = 2, red!50!blue] \conn(170, 0, 0, \r);
\draw[line width = 2, red!50!blue] \conn(190, 0, 0, \r);

\draw[line width = 2, red!50!blue] \conn(30, 0, 0, \r);
\draw[line width = 2, red!50!blue] \conn(10, 0, 0, \r);
\draw[line width = 2, red!50!blue] \conn(-10, 0, 0, \r);
\draw[line width = 2, red!50!blue] \conn(-30, 0, 0, \r);
\draw[line width = 2, red!50!blue] \conn(-50, 0, 0, \r);

\filldraw[fill = yellow!50!white, thick] (2.5,0) circle (\r) node{$B$} ;

\draw[line width = 2, red!50!blue] \conn(150, 2.5, 0, \r);
\draw[line width = 2, red!50!blue] \conn(170, 2.5, 0, \r);
\draw[line width = 2, red!50!blue] \conn(190, 2.5, 0, \r);
\draw[line width = 2, red!50!blue] \conn(210, 2.5, 0, \r);
\draw[line width = 2, red!50!blue] \conn(230, 2.5, 0, \r);

\draw[line width = 2, red!50!blue] \conn(10, 2.5, 0, \r);
\draw[line width = 2, red!50!blue] \conn(-10, 2.5, 0, \r);
\draw[line width = 2, red!50!blue] \conn(-30, 2.5, 0, \r);
\draw[line width = 2, red!50!blue] \conn(-50, 2.5, 0, \r);

\draw[red!50!blue, line width = 1] \connp(150, 2.5, 0, \r, 0) circle (0.1);
\draw[red!50!blue] (2.06,0.24) -- ++(0.6,-1) node[anchor = north] {\scriptsize $ (B, n_B) $};
\draw[red!50!blue, line width = 1, dashed, rotate around = {-10:(0.6,-0.1)} ] (0.6,-0.1) ellipse (0.3 and 0.6);
\draw[red!50!blue, line width = 1, dashed, rotate around = {20:(1.9,-0.2)} ] (1.9,-0.2) ellipse (0.3 and 0.5);
\node[red!50!blue] at (0.4,0.9) {\scriptsize apply $ \sigma, \sigma' $};
\draw[red!50!blue] (0.5,0.7) -- (0.56,0.46);
\draw[red!50!blue] (0.7,0.7) -- (1.68,0.26);

\end{tikzpicture}} \hspace{-1.9cm}
	\scalebox{0.9}{\def\r{0.6} 
\def\rB{0.8} 
\def\rC{0.5} 
\def\extl(#1, #2){\fill (#1, #2) circle (0.05); \fill[opacity = 0.3, blue] (#1, #2) circle (0.1);  } 
\def\extr(#1, #2){\fill (#1, #2) circle (0.05); \fill[opacity = 0.3, green!50!black] (#1, #2) circle (0.1);  } 
\def\conn(#1, #2, #3, #4){({#4*cos(#1) + #2},{#4*sin(#1) + #3}) --  ({(#4+0.2)*cos(#1) + #2},{(#4+0.2)*sin(#1) + #3})} 
\def\connp(#1, #2, #3, #4, #5){({ #4*cos(#1) + #2},{#4*sin(#1) + #3}) } 
\def\connt(#1, #2, #3, #4, #5){({#4*cos(#1) + #2},{#4*sin(#1) + #3}) .. controls  ({(#4+#5)*cos(#1) + #2},{(#4+#5)*sin(#1) + #3}) and} 
\def\connte(#1, #2, #3, #4, #5){ ({(#4+#5)*cos(#1) + #2},{(#4+#5)*sin(#1) + #3}) .. ({#4*cos(#1) + #2},{#4*sin(#1) + #3}) } 
\begin{tikzpicture}
\useasboundingbox (-2,-1.2) rectangle (4.5,1.2);

\draw[thick] \connt(130, 0, 0, (\r+0.2), 0.3) ++(0.2,0) .. (-1.2,1); \extl(-1.2, 1)
\draw[thick] \connt(150, 0, 0, (\r+0.2), 0.3) ++(0.2,0) .. (-1.2,0.6); \extl(-1.2, 0.6)
\draw[thick] \connt(170, 0, 0, (\r+0.2), 0.3) ++(0.2,0) .. (-1.2,0.2); \extl(-1.2, 0.2)
\draw[thick] \connt(190, 0, 0, (\r+0.2), 0.3) ++(0.2,0) .. (-1.2,-0.2); \extl(-1.2, -0.2)
\draw[thick] \connt(150, 2.5, 0, (\r+0.2), 0.8) ++(1,0) .. (0.5,-0.7) -- (-1.2,-0.7); \extl(-1.2, -0.7)
\draw[thick] \connt(230, 2.5, 0, (\r+0.2), 0.3) ++(1,0) .. (0.5,-1) -- (-1.2,-1); \extl(-1.2, -1.0)

\draw[thick] \connt(30, 0, 0, (\r+0.2), 0.3) ++(-1,0) .. (2,1) -- (3.7,1); \extr(3.7, 1)
\draw[thick] \connt(10, 0, 0, (\r+0.2), 0.8) ++(-1,0) .. (2,0.7) -- (3.7,0.7); \extr(3.7, 0.7)
\draw[thick] \connt(10, 2.5, 0, (\r+0.2), 0.3) ++(-0.2,0) .. (3.7,0.2); \extr(3.7, 0.2)
\draw[thick] \connt(-10, 2.5, 0, (\r+0.2), 0.3) ++(-0.2,0) .. (3.7,-0.2); \extr(3.7, -0.2)
\draw[thick] \connt(-30, 2.5, 0, (\r+0.2), 0.3) ++(-0.2,0) .. (3.7,-0.6); \extr(3.7, -0.6)
\draw[thick] \connt(-50, 2.5, 0, (\r+0.2), 0.3) ++(-0.2,0) .. (3.7,-1); \extr(3.7, -1)

\draw[red, opacity = .8, line width = 1] \connt(-10, 0, 0, (\r+0.2), 0.5) \connte(210, 2.5, 0, (\r+0.2), 0.5);
\draw[red, opacity = .8, line width = 1] \connt(-30, 0, 0, (\r+0.2), 0.5) \connte(190, 2.5, 0, (\r+0.2), 0.5);
\draw[red, opacity = .8, line width = 1] \connt(-50, 0, 0, (\r+0.2), 1) \connte(170, 2.5, 0, (\r+0.2), 0.5);

\filldraw[fill = yellow!50!white, thick] (0,0) circle (\r) node{$A$} ;

\draw[line width = 2, red!50!blue] \conn(130, 0, 0, \r);
\draw[line width = 2, red!50!blue] \conn(150, 0, 0, \r);
\draw[line width = 2, red!50!blue] \conn(170, 0, 0, \r);
\draw[line width = 2, red!50!blue] \conn(190, 0, 0, \r);

\draw[line width = 2, red!50!blue] \conn(30, 0, 0, \r);
\draw[line width = 2, red!50!blue] \conn(10, 0, 0, \r);
\draw[line width = 2, red!50!blue] \conn(-10, 0, 0, \r);
\draw[line width = 2, red!50!blue] \conn(-30, 0, 0, \r);
\draw[line width = 2, red!50!blue] \conn(-50, 0, 0, \r);

\filldraw[fill = yellow!50!white, thick] (2.5,0) circle (\r) node{$B$} ;

\draw[line width = 2, red!50!blue] \conn(150, 2.5, 0, \r);
\draw[line width = 2, red!50!blue] \conn(170, 2.5, 0, \r);
\draw[line width = 2, red!50!blue] \conn(190, 2.5, 0, \r);
\draw[line width = 2, red!50!blue] \conn(210, 2.5, 0, \r);
\draw[line width = 2, red!50!blue] \conn(230, 2.5, 0, \r);

\draw[line width = 2, red!50!blue] \conn(10, 2.5, 0, \r);
\draw[line width = 2, red!50!blue] \conn(-10, 2.5, 0, \r);
\draw[line width = 2, red!50!blue] \conn(-30, 2.5, 0, \r);
\draw[line width = 2, red!50!blue] \conn(-50, 2.5, 0, \r);

\draw[red!50!blue, line width = 1] \connp(150, 2.5, 0, \r, 0) circle (0.1);
\draw[red!50!blue, line width = 1, ->] (2.08, 0.18) .. controls ++(0.2, -0.2) and ++(0.1,0.2) .. (2.08, -0.34);
\node[blue!50!red] at (2.5,-0.9) {\scriptsize 3 swaps};

\end{tikzpicture}} \hspace{-1.9cm}
	\scalebox{0.9}{\def\r{0.6} 
\def\rB{0.8} 
\def\rC{0.5} 
\def\extl(#1, #2){\fill (#1, #2) circle (0.05); \fill[opacity = 0.3, blue] (#1, #2) circle (0.1);  } 
\def\extr(#1, #2){\fill (#1, #2) circle (0.05); \fill[opacity = 0.3, green!50!black] (#1, #2) circle (0.1);  } 
\def\conn(#1, #2, #3, #4){({#4*cos(#1) + #2},{#4*sin(#1) + #3}) --  ({(#4+0.2)*cos(#1) + #2},{(#4+0.2)*sin(#1) + #3})} 
\def\connp(#1, #2, #3, #4, #5){({ #4*cos(#1) + #2},{#4*sin(#1) + #3}) } 
\def\connt(#1, #2, #3, #4, #5){({#4*cos(#1) + #2},{#4*sin(#1) + #3}) .. controls  ({(#4+#5)*cos(#1) + #2},{(#4+#5)*sin(#1) + #3}) and} 
\def\connte(#1, #2, #3, #4, #5){ ({(#4+#5)*cos(#1) + #2},{(#4+#5)*sin(#1) + #3}) .. ({#4*cos(#1) + #2},{#4*sin(#1) + #3}) } 
\begin{tikzpicture}
\useasboundingbox (-2,-1.2) rectangle (4.5,1.2);

\draw[thick] \connt(130, 0, 0, (\r+0.2), 0.3) ++(0.2,0) .. (-1.2,1); \extl(-1.2, 1)
\draw[thick] \connt(150, 0, 0, (\r+0.2), 0.3) ++(0.2,0) .. (-1.2,0.6); \extl(-1.2, 0.6)
\draw[thick] \connt(170, 0, 0, (\r+0.2), 0.3) ++(0.2,0) .. (-1.2,0.2); \extl(-1.2, 0.2)
\draw[thick] \connt(190, 0, 0, (\r+0.2), 0.3) ++(0.2,0) .. (-1.2,-0.2); \extl(-1.2, -0.2)
\draw[thick] \connt(210, 2.5, 0, (\r+0.2), 0.8) ++(1,0) .. (0.5,-0.7) -- (-1.2,-0.7); \extl(-1.2, -0.7)
\draw[thick] \connt(230, 2.5, 0, (\r+0.2), 0.3) ++(1,0) .. (0.5,-1) -- (-1.2,-1); \extl(-1.2, -1.0)

\draw[thick] \connt(30, 0, 0, (\r+0.2), 0.3) ++(-1,0) .. (2,1) -- (3.7,1); \extr(3.7, 1)
\draw[thick] \connt(10, 0, 0, (\r+0.2), 0.8) ++(-1,0) .. (2,0.7) -- (3.7,0.7); \extr(3.7, 0.7)
\draw[thick] \connt(10, 2.5, 0, (\r+0.2), 0.3) ++(-0.2,0) .. (3.7,0.2); \extr(3.7, 0.2)
\draw[thick] \connt(-10, 2.5, 0, (\r+0.2), 0.3) ++(-0.2,0) .. (3.7,-0.2); \extr(3.7, -0.2)
\draw[thick] \connt(-30, 2.5, 0, (\r+0.2), 0.3) ++(-0.2,0) .. (3.7,-0.6); \extr(3.7, -0.6)
\draw[thick] \connt(-50, 2.5, 0, (\r+0.2), 0.3) ++(-0.2,0) .. (3.7,-1); \extr(3.7, -1)

\draw[red, opacity = .8, line width = 1] \connt(-10, 0, 0, (\r+0.2), 0.5) \connte(190, 2.5, 0, (\r+0.2), 0.5);
\draw[red, opacity = .8, line width = 1] \connt(-30, 0, 0, (\r+0.2), 0.5) \connte(170, 2.5, 0, (\r+0.2), 0.5);
\draw[red, opacity = .8, line width = 1] \connt(-50, 0, 0, (\r+0.2), 1) \connte(150, 2.5, 0, (\r+0.2), 0.8);

\filldraw[fill = yellow!50!white, thick] (0,0) circle (\r) node{$A$} ;

\draw[line width = 2, red!50!blue] \conn(130, 0, 0, \r);
\draw[line width = 2, red!50!blue] \conn(150, 0, 0, \r);
\draw[line width = 2, red!50!blue] \conn(170, 0, 0, \r);
\draw[line width = 2, red!50!blue] \conn(190, 0, 0, \r);

\draw[line width = 2, red!50!blue] \conn(30, 0, 0, \r);
\draw[line width = 2, red!50!blue] \conn(10, 0, 0, \r);
\draw[line width = 2, red!50!blue] \conn(-10, 0, 0, \r);
\draw[line width = 2, red!50!blue] \conn(-30, 0, 0, \r);
\draw[line width = 2, red!50!blue] \conn(-50, 0, 0, \r);

\filldraw[fill = yellow!50!white, thick] (2.5,0) circle (\r) node{$B$} ;

\draw[line width = 2, red!50!blue] \conn(150, 2.5, 0, \r);
\draw[line width = 2, red!50!blue] \conn(170, 2.5, 0, \r);
\draw[line width = 2, red!50!blue] \conn(190, 2.5, 0, \r);
\draw[line width = 2, red!50!blue] \conn(210, 2.5, 0, \r);
\draw[line width = 2, red!50!blue] \conn(230, 2.5, 0, \r);

\draw[line width = 2, red!50!blue] \conn(10, 2.5, 0, \r);
\draw[line width = 2, red!50!blue] \conn(-10, 2.5, 0, \r);
\draw[line width = 2, red!50!blue] \conn(-30, 2.5, 0, \r);
\draw[line width = 2, red!50!blue] \conn(-50, 2.5, 0, \r);

\end{tikzpicture}} \hspace{-1cm}
	\caption{Left to right: Taking the diagram into a maximally crossed form.}
	\label{fig:Friedrichs_piplus2}
\end{figure}

Performing the summation replacements $ \tilde\cC_k \mapsto \tilde\cC_{k, +} $ and $ \tilde\cC \mapsto \tilde\cC_+ $, we end up with
\begin{equation}
\begin{aligned}
	&a_{A, 1} \ldots a_{A, m_A} a^*_{B, n_B} \ldots a^*_{B, 1}\\
	= & \sum_{k = 1}^{m_A} \Bigg( \underbrace{\sum_{(\pi_+, \pi'_+) \in \tilde\cC_{k, +}} \sgn(\pi_+, \pi'_+, m_A, n_B) \prod_{c = 1}^C \delta(\bx_{\pi_+(c)} - \by_{\pi'_+(c)})
	\left( \prod_{u \in \tilde\cU} a_u \right)^*
	\prod_{u' \in \tilde\cU'_k} a_{u'}}_{\circled{1a}}\\
	&\qquad + \underbrace{(-1)^{m_A - k} (-1)^{(m_A - 1)(n_B - 1)} \delta(\bx_{B, n_B} - \by_{A, k}) \left( \prod_{u \neq (B, n_B)} a_u \right)^*
	\prod_{u' \neq (A, m_A)} a_{u'}}_{\circled{1b}} \Bigg)\\
	&+ \underbrace{\sum_{(\pi_+, \pi'_+) \in \tilde\cC_+} \sgn(\pi_+, \pi'_+, m_A, n_B) \prod_{c = 1}^C \delta(\bx_{\pi_+(c)} - \by_{\pi'_+(c)}) a^*_{B, n_B}
	\left( \prod_{u \in \cU} a_u \right)^*
	\prod_{u' \in \cU'} a_{u'}}_{\circled{2}}\\
	&+ \underbrace{(-1)^{m_A n_B} a^*_{B, n_B} \ldots a^*_{B, 1} a_{A, 1} \ldots a_{A, m_A}}_{\circled{3}}.
\end{aligned}
\label{eq:evaluatecontfer2}
\end{equation}
Comparing with \eqref{eq:contferproduct2}, the term \circled{1a} contains all configs $ (\pi, \pi') \in \tilde\cC $, where $ (B, n_B) $ is contracted to some connector $ (A, k) $, and at least one further contraction exists. The term \circled{1b} contains exactly those configs where only $ (B, n_B) $ is contracted to some $ (A, k) $. In this case, a maximal crossing can be achieved by pulling the $ (A, k) $--connector down by $ m_A - k $ positions, so following \eqref{eq:sgnpipi}, the sign of these configs amounts to
\begin{equation}
	\sgn(\pi, \pi') = (-1)^{(m_A - 1)(n_B - 1)} \cdot 1 \cdot (-1)^{m_A - k},
\end{equation}
which coincides with the sign factor in \eqref{eq:evaluatecontfer2}. \circled{2} consists of exactly those configs where $ (B, n_B) $ is not contracted at all. Since all signs $ \sgn(\pi_+, \pi'_+, m_A, n_B) = \sgn(\pi, \pi') $ agree, the sum $ \circled{1a} + \circled{1b} + \circled{2} $ equals the second term on the r.h.s. of \eqref{eq:contferproduct2}.\\
Finally, \circled{3} is just the normal ordered first term on the r.h.s. of \eqref{eq:contferproduct2}, which establishes the equality and thus the induction step.\\
\end{proof}

\begin{proof}[Proof of Theorem \ref{thm:contfer}]
Considering the definition of $ \normalordered{AB} $ \eqref{eq:normalordered}, it is easy to see that $ \normalordered{AB} = (-1)^{n_A m_A + n_B m_B} \normalordered{BA} $. So in case $ (n_A + m_A)(n_B + m_B) $ is even, we have $ (-1)^{m_A n_B} \normalordered{AB} = (-1)^{m_B n_A} \normalordered{BA} $ and therefore obtain
\begin{equation}
	[A, B] = (-1)^{m_A n_B} \normalordered{AB} + A \contfer B - (-1)^{m_B n_A} \normalordered{BA} - B \contfer A = A \contfer B - B \contfer A.
\end{equation}
If $ (n_A + m_A)(n_B + m_B) $ is odd, then $ (-1)^{m_A n_B} \normalordered{AB} = - (-1)^{m_B n_A} \normalordered{BA} $, so
\begin{equation}
	\{ A, B \} = (-1)^{m_A n_B} \normalordered{AB} + A \contfer B + (-1)^{m_B n_A} \normalordered{BA} + B \contfer A = A \contfer B + B \contfer A.
\end{equation}
\end{proof}

\section{Applications}	
\label{sec:applications}

\subsection{Hartree Equation}
\label{subsec:Hartree}
In the following section we are going to use our diagrammatic approach in order to justify the time--dependent Hartree approximation. For this purpose, let $T$ be a self--adjoint operator on the one particle Hilbert space $\fh$ and let $V$ be a bounded and self--adjoint operator on $\fh^{\otimes 2}$, which in addition is symmetric under a coordinate permutation and which we will refer to as the interaction potential. Moreover we define the second--quantized version of $T$ as $\widehat{T}:=\bigoplus_{N=0}^\infty\left(\sum_{j=1}^N T_j\right)$, where $T_j$ acts on the $j$--th factor in the tensor product $\fh^{\otimes N}$ and $\widehat{T}$ can be realized as a self--adjoint operator on $ \sF_+ $. The potential in the interaction picture is given by a bounded operator $V^{(t)}:=e^{-i t (T_1+T_2)}V e^{i t (T_1+T_2)}$ with second quantization $\widehat{V^{(t)}}:=\int V^{(t)}(x_1,x_2,y_1,y_2)a_{x_2}^* a_{x_1}^* a_{y_1} a_{y_2}\mathrm{d}x_1\mathrm{d}x_2\mathrm{d}y_1\mathrm{d}y_2$, where we write $V^{(t)}(x_1,x_2,y_1,y_2)$ for the integral kernel of the operator $V^{(t)}$. Let furthermore $A$ be an operator with integral kernel $f_A\in \fh\otimes \fh$ and let us write the conjugation with the (interaction picture) time evolution $\mathcal{U}_t:=e^{-it\left(\widehat{T}+\widehat{V}\right)}e^{it \widehat{T}}$ using Duhamel’s formula as
\begin{equation}
\begin{aligned}
\mathcal{U}_t^{-1}\, A\, \mathcal{U}_t & =\sum_{k=0}^\infty \frac{1}{i^k }\int\limits_{0\leq t_1\leq \dots \leq t_k\leq t}\left[\dots \left[A,\frac{1}{2}\widehat{V^{(t_1)}}\right],\dots , \frac{1}{2}\widehat{V^{(t_k)}}\right]\mathrm{d}t_1\dots \mathrm{d}t_k\\
&=\sum_{k=0}^\infty \sum_{\ell=1}^{k+1}\int\limits_{0\leq t_1\leq \dots \leq t_k\leq t} A_{k,\ell}^{t_1,\dots , t_k}\, \mathrm{d}t_1\dots \mathrm{d}t_k.
\end{aligned}
\label{eq:nested}
\end{equation}
Here, $A_{k,\ell}^{t_1,\dots , t_k}$ is defined as containing all contributions in $\frac{1}{i^k}\left[\dots \left[A,\frac{1}{2}\widehat{V^{(t_1)}}\right],\dots , \frac{1}{2}\widehat{V^{(t_k)}}\right]$ whose diagrams have $\ell$ incoming and outgoing legs. That means, we write the nested commutator according to Theorem \ref{thm:cont} as a sum of connected diagrams, see also the definition of the bosonic attached product in Eq.~(\ref{eq:cont}) and the comment below. Each of the $k$ newly added vertices can be contracted to one or two of the existing ones, so we finally end up with a diagram having $ \ell \in \{2, \ldots, k+1 \}$ uncontracted legs on each side (i.e., incoming/outgoing legs). In the subsequent Proposition \ref{prop:classic}, we are going to verify that the sum over all acyclic graphs (i.e., those with $ \ell = k+1 $ legs on both sides) appearing in Eq.~(\ref{eq:nested}) amounts to the time evolution given by the non--linear Hartree dynamics (see also Figure \ref{fig:Friedrichs_evolution}). In order to formulate Proposition \ref{prop:classic}, let us define the mean field potential $V_u$ associated with a one particle state $u$ via its integral kernel as $V_u(x,y):=\int V(x,x',y,y')\overline{u}(x')u(y')\mathrm{d}x'\mathrm{d}y'$.
\begin{figure}
	\centering
	\scalebox{1.0}{\def\sCl{-1.5} 
\def\sQ{6.7} 
\def\r{0.55} 
\def\rB{0.8} 
\def\rC{0.5} 
\def\extl(#1, #2){\fill (#1, #2) circle (0.05); \fill[opacity = 0.3, blue] (#1, #2) circle (0.1);  } 
\def\extr(#1, #2){\fill (#1, #2) circle (0.05); \fill[opacity = 0.3, green!50!black] (#1, #2) circle (0.1);  } 
\def\conn(#1, #2, #3, #4){({#4*cos(#1) + #2},{#4*sin(#1) + #3}) --  ({(#4+0.2)*cos(#1) + #2},{(#4+0.2)*sin(#1) + #3})} 
\def\connt(#1, #2, #3, #4, #5){({#4*cos(#1) + #2},{#4*sin(#1) + #3}) .. controls  ({(#4+#5)*cos(#1) + #2},{(#4+#5)*sin(#1) + #3}) and} 
\def\connte(#1, #2, #3, #4, #5){ ({(#4+#5)*cos(#1) + #2},{(#4+#5)*sin(#1) + #3}) .. ({#4*cos(#1) + #2},{#4*sin(#1) + #3}) } 
\begin{tikzpicture}
\useasboundingbox (-2,-1) rectangle (4.5,1);


\draw[thick] \connt(150, \sCl-2, 2, (\r+0.2), 0.8) ++(0.2,0) .. (\sCl-5.5, 3); \extl(\sCl-5.5, 3)
\draw[thick] \connt(150, \sCl-4.2, 1, (\r+0.2), 0.2) ++(0.2,0) .. (\sCl-5.5, 1+0.5); \extl(\sCl-5.5, 1+0.5)
\draw[thick] \connt(210, \sCl-4.2, 1, (\r+0.2), 0.2) ++(0.2,0) .. (\sCl-5.5, 1-0.5); \extl(\sCl-5.5, 1-0.5)
\draw[thick] \connt(250, \sCl+0.2, 1, (\r+0.2), 1.7) ++(0.2,0) .. (\sCl-5.5, -0.8); \extl(\sCl-5.5, -0.8)

\draw[thick] \connt(30, \sCl+0.2, 1, (\r+0.2), 0.2) ++(0.2,0) .. (\sCl+1.5, 1+0.5); \extr(\sCl+1.5, 1+0.5)
\draw[thick] \connt(-30, \sCl+0.2, 1, (\r+0.2), 0.2) ++(0.2,0) .. (\sCl+1.5, 1-0.5); \extr(\sCl+1.5, 1-0.5)
\draw[thick] \connt(30, \sCl-2, 2, (\r+0.2), 0.8) ++(0.2,0) .. (\sCl+1.5, 3); \extr(\sCl+1.5, 3)
\draw[thick] \connt(-30, \sCl-2, 2, (\r+0.2), 0.7) ++(0.2,0) .. (\sCl+1.5, 2.2); \extr(\sCl+1.5, 2.2)

\draw[thick] \connt(150, \sQ-2, 2, (\r+0.2), 0.8) ++(0.2,0) .. (\sQ-5.5, 3); \extl(\sQ-5.5, 3)
\draw[thick] \connt(150, \sQ-4.2, 1, (\r+0.2), 0.2) ++(0.2,0) .. (\sQ-5.5, 1+0.5); \extl(\sQ-5.5, 1+0.5)
\draw[thick] \connt(210, \sQ-4.2, 1, (\r+0.2), 0.2) ++(0.2,0) .. (\sQ-5.5, 1-0.5); \extl(\sQ-5.5, 1-0.5)

\draw[thick] \connt(30, \sQ+0.2, 1, (\r+0.2), 0.2) ++(0.2,0) .. (\sQ+1.5, 1+0.5); \extr(\sQ+1.5, 1+0.5)
\draw[thick] \connt(-30, \sQ+0.2, 1, (\r+0.2), 0.2) ++(0.2,0) .. (\sQ+1.5, 1-0.5); \extr(\sQ+1.5, 1-0.5)
\draw[thick] \connt(30, \sQ-2, 2, (\r+0.2), 0.8) ++(0.2,0) .. (\sQ+1.5, 3); \extr(\sQ+1.5, 3)

\draw[red, opacity = .8, line width = 1] \connt(0, \sCl-2, 0, (\r+0.2), 0.5) \connte(150, \sCl+0.2, 1, (\r+0.2), 0.5);
\draw[red, opacity = .8, line width = 1] \connt(30, \sCl-4.2, 1, (\r+0.2), 0.5) \connte(210, \sCl-2, 2, (\r+0.2), 0.5);
\draw[red, opacity = .8, line width = 1] \connt(-30, \sCl-4.2, 1, (\r+0.2), 0.5) \connte(180, \sCl-2, 0, (\r+0.2), 0.5);

\draw[red, opacity = .8, line width = 1] \connt(0, \sQ-2, 0, (\r+0.2), 0.5) \connte(150, \sQ+0.2, 1, (\r+0.2), 0.5);
\draw[red, opacity = .8, line width = 1] \connt(30, \sQ-4.2, 1, (\r+0.2), 0.5) \connte(210, \sQ-2, 2, (\r+0.2), 0.5);
\draw[red, opacity = .8, line width = 1] \connt(-30, \sQ-4.2, 1, (\r+0.2), 0.5) \connte(180, \sQ-2, 0, (\r+0.2), 0.5);
\draw[red, opacity = .8, line width = 1] \connt(-30, \sQ-2, 2, (\r+0.2), 1) \connte(250, \sQ+0.2, 1, (\r+0.2), 0.5);

\filldraw[fill = yellow!50!white, thick] (\sCl-2,0) circle (\r) node{$A$} ;

\draw[line width = 2, red!50!blue] \conn(0, \sCl-2, 0, \r);

\draw[line width = 2, red!50!blue] \conn(180, \sCl-2, 0, \r);

\filldraw[fill = yellow!50!white, thick] (\sQ-2,0) circle (\r) node{$A$} ;

\draw[line width = 2, red!50!blue] \conn(0, \sQ-2, 0, \r);

\draw[line width = 2, red!50!blue] \conn(180, \sQ-2, 0, \r);

\filldraw[fill = yellow!50!white, thick] (\sCl-4.2,1) circle (\r) node{$V^{(t_1)}$} ;
\draw[line width = 2, red!50!blue] \conn(-30, \sCl-4.2, 1, \r);
\draw[line width = 2, red!50!blue] \conn(30, \sCl-4.2, 1, \r);

\draw[line width = 2, red!50!blue] \conn(150, \sCl-4.2, 1, \r);
\draw[line width = 2, red!50!blue] \conn(210, \sCl-4.2, 1, \r);

\filldraw[fill = yellow!50!white, thick] (\sQ-4.2,1) circle (\r) node{$V^{(t_1)}$} ;
\draw[line width = 2, red!50!blue] \conn(-30, \sQ-4.2, 1, \r);
\draw[line width = 2, red!50!blue] \conn(30, \sQ-4.2, 1, \r);

\draw[line width = 2, red!50!blue] \conn(150, \sQ-4.2, 1, \r);
\draw[line width = 2, red!50!blue] \conn(210, \sQ-4.2, 1, \r);

\filldraw[fill = yellow!50!white, thick] (\sCl-2,2) circle (\r) node{$V^{(t_2)}$} ;
\draw[line width = 2, red!50!blue] \conn(-30, \sCl-2, 2, \r);
\draw[line width = 2, red!50!blue] \conn(30, \sCl-2, 2, \r);

\draw[line width = 2, red!50!blue] \conn(150, \sCl-2, 2, \r);
\draw[line width = 2, red!50!blue] \conn(210, \sCl-2, 2, \r);

\filldraw[fill = yellow!50!white, thick] (\sQ-2,2) circle (\r) node{$V^{(t_2)}$} ;
\draw[line width = 2, red!50!blue] \conn(-30, \sQ-2, 2, \r);
\draw[line width = 2, red!50!blue] \conn(30, \sQ-2, 2, \r);

\draw[line width = 2, red!50!blue] \conn(150, \sQ-2, 2, \r);
\draw[line width = 2, red!50!blue] \conn(210, \sQ-2, 2, \r);

\filldraw[fill = yellow!50!white, thick] (\sCl+0.2,1) circle (\r) node{$V^{(t_3)}$} ;
\draw[line width = 2, red!50!blue] \conn(-30, \sCl+0.2, 1, \r);
\draw[line width = 2, red!50!blue] \conn(30, \sCl+0.2, 1, \r);

\draw[line width = 2, red!50!blue] \conn(150, \sCl+0.2, 1, \r);
\draw[line width = 2, red!50!blue] \conn(250, \sCl+0.2, 1, \r);

\filldraw[fill = yellow!50!white, thick] (\sQ+0.2,1) circle (\r) node{$V^{(t_3)}$} ;
\draw[line width = 2, red!50!blue] \conn(-30, \sQ+0.2, 1, \r);
\draw[line width = 2, red!50!blue] \conn(30, \sQ+0.2, 1, \r);

\draw[line width = 2, red!50!blue] \conn(150, \sQ+0.2, 1, \r);
\draw[line width = 2, red!50!blue] \conn(250, \sQ+0.2, 1, \r);

\end{tikzpicture}}
	\caption{Examples of an acyclic graph (left) and a graph containing a cycle (right) arising as part of the time evolution in Eq.~(\ref{eq:nested}).}
	\label{fig:Friedrichs_evolution}
\end{figure}
\begin{proposition}
\label{prop:classic}
Assume that $u_t$ is a solution to the Hartree equation $i\frac{\mathrm{d}}{\mathrm{d}t}u_t=(T+V_{u_t})u_t$ for $t\in (-L,L)$ and $L:=\frac{1}{2\|V\| \|u_0\|^2}$. Then we obtain for all $|t|<L$
\begin{equation}
    \langle A \rangle_{u_t^{\otimes N}}=\sum_{k=0}^\infty \int\limits_{0\leq t_1\leq \dots \leq t_k\leq t} \langle A_{k,k+1}^{t_1,\dots , t_k} \rangle_{\left(e^{-i t T}\! u_0\right)^{\otimes N}}\, \mathrm{d}t_1\dots \mathrm{d}t_k,
    \label{eq: classicEvo}
\end{equation}
where $\|V\|$ denotes the operator norm.
\end{proposition}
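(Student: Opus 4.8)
The plan is to expand both sides as absolutely convergent series on $(-L,L)$ and to show that the restriction to acyclic diagrams is exactly what collapses the quantum expansion onto the nonlinear Hartree flow. The organizing principle is structural: a Friedrichs diagram built from the nested commutators in \eqref{eq:nested} is acyclic precisely when the two legs of the one--body root $A$ feed into two \emph{vertex--disjoint} subdiagrams. This disjointness lets the contribution factor through a single one--particle function, and that factorization is the diagrammatic incarnation of the product ansatz $u_t^{\otimes N}$. Cyclic diagrams are exactly the terms that couple the two subtrees, i.e.\ the genuine quantum corrections, and they are discarded when one keeps only $\ell=k+1$ legs.

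First I would secure convergence and perform the one--particle reduction. For $|t|<L=\tfrac{1}{2\|V\|\,\|u_0\|^2}$ the Hartree equation has a unique solution by a contraction--mapping (Picard) argument on a ball of $\fh$, and the very same bound makes the Dyson series \eqref{eq:nested} and its acyclic sub--series converge absolutely; this is what legitimizes the rearrangements below. Writing $A_1$ for the one--particle operator with kernel $f_A$, the left--hand side is the one--particle quantity $\langle A\rangle_{u_t^{\otimes N}}=N\langle u_t,A_1u_t\rangle$, so it suffices to identify the acyclic diagram sum with this same bilinear form in $u_t$.

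Next I would set up the factorization and the self--similar recursion. Collecting, at each order $k$, all acyclic diagrams contributing to $\langle A_{k,k+1}^{t_1,\dots,t_k}\rangle_{(e^{-itT}u_0)^{\otimes N}}$ and using that removing the degree--$\le 2$ root $A$ disconnects the tree into two vertex--disjoint subtrees, the whole sum factors as a bilinear expression $N\langle\Psi_t,A_1\Psi_t\rangle$ in a single one--particle function $\Psi_t$, obtained by resumming the rooted subtrees, with $\Psi_0=u_0$. I would then peel off, inside $\Psi_t$, the interaction vertex adjacent to the distinguished (root) leg: of its three remaining legs, two resum into $\overline{\Psi_s}$ and $\Psi_s$ and assemble into the mean field $V_{\Psi_s}$, while the third resums into the $\Psi_s$ onto which it acts; the time--ordered integral over $0\le t_1\le\dots\le t_k\le t$ together with the free propagators produced by the external--leg contractions reassembles the rest. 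This exhibits $\Psi_t$ as a solution of the Hartree Duhamel equation
\[
\Psi_t=e^{-itT}u_0-i\int_0^t e^{-i(t-s)T}\,V_{\Psi_s}\Psi_s\,\mathrm{d}s,
\]
whence $\Psi_t=u_t$ by uniqueness, and the two sides of \eqref{eq: classicEvo} coincide.

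The hard part is the weight bookkeeping in this last step: one must verify that the sum over acyclic configurations $(\pi,\pi')\in\cC$ reproduces the Hartree nonlinearity \emph{with the correct coefficient}. Concretely, this means reconciling the prefactor $\tfrac1{i^k}$ and the commutator signs of \eqref{eq:nested} with the $(-i)$ of each Duhamel step; absorbing the factor $\tfrac12$ carried by each $\tfrac12\widehat{V^{(t_j)}}$ against the two symmetric ways of attaching a child subtree to a vertex; identifying the occupation--number prefactors generated by evaluating a normal--ordered $(k+1)$--body kernel on the condensate with those of the Hartree expansion; and checking that the nested time--ordering is compatible with the recursive branchwise ordering of the Hartree tree so that the peeling in the previous paragraph actually closes. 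Once these identities are established for a single added vertex, they serve as the base of an induction on $k$ that propagates them to all orders, and absolute convergence on $(-L,L)$ upgrades the resulting term--by--term equality to the claimed identity.
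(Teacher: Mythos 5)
Your strategy is genuinely different from the paper's. You propose to resum the acyclic diagrams into a bilinear form in a single one--particle function $\Psi_t$, derive a Duhamel equation for $\Psi_t$ by peeling the interaction vertex at the root, and conclude via uniqueness of the Hartree flow. The paper instead runs an interpolation argument: it sets $v_s:=e^{i(s-t)T}u_s$, uses the single--contraction recursion $A_{k,k+1}^{t_1,\dots,t_k}=\tfrac{1}{2i}\{A_{k-1,k}^{t_1,\dots,t_{k-1}},V^{(t_k)}\}_*$ together with the bound $\|A_{k,k+1}^{t_1,\dots,t_k}\|\le k!\,\|A\|(2\|V\|)^k$ for absolute convergence, and then shows that $\sum_k\int_{0\le t_1\le\dots\le t_k\le t-s}\langle A_{k,k+1}^{t_1,\dots,t_k}\rangle_{v_s^{\otimes N}}\,\mathrm{d}t_1\dots\mathrm{d}t_k$ is constant in $s$: the boundary term coming from the shrinking simplex cancels against the term generated by $i\frac{\mathrm{d}}{\mathrm{d}s}v_s=V^{(t-s)}_{v_s}v_s$, via the identity $\frac{\mathrm{d}}{\mathrm{d}s}\langle B\rangle_{v_s^{\otimes N}}=\tfrac{1}{2i}\langle\{B,V^{(t-s)}\}_*\rangle_{v_s^{\otimes N}}$. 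Evaluating at $s=0$ and $s=t$ finishes the proof, and no tree combinatorics or resummation is ever needed.

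As written, however, your proposal has two genuine gaps. First, the heart of the matter --- that the peeling step closes the Duhamel equation with exactly the right coefficient --- is deferred, not proven: you list the required reconciliations (the $1/i^k$ against the $-i$ per Duhamel step, the $\tfrac12$ in $\tfrac12\widehat{V^{(t_j)}}$ against attachment symmetries, the compatibility of the global simplex with independent subtree orderings) but carry none of them out. The time--ordering point in particular is not routine: the simplex $0\le t_1\le\dots\le t_k\le t$ couples the times of the two subtrees, so before anything factorizes you must reorganize the sum over $k$ and over diagrams into a sum over rooted trees, using that the union of simplices over all linear extensions of the tree partial order is a product region; without this step your $\Psi_t$ is not even well defined, and this shuffle argument is precisely where the claimed identity could fail if the weights were off. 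Second, the $N$--dependence of your factorization is wrong: the honest Fock--space expectation of the $(k+1)$--body normal--ordered operator $A_{k,k+1}^{t_1,\dots,t_k}$ in a product state $v^{\otimes N}$ carries the factor $N(N-1)\cdots(N-k)$ (and powers of $\|v\|$), so the series cannot factor as $N\langle\Psi_t,A_1\Psi_t\rangle$ with an $N$--independent $\Psi_t$. The proposition is exact only for the stripped pairing $\langle B\rangle_{v^{\otimes N}}:=\int f_B(X,Y)\prod_k\overline{v(x_k)}\prod_k v(y_k)\,\mathrm{d}X\mathrm{d}Y$, which is the convention the paper's own proof implicitly adopts when it differentiates $\langle B\rangle_{v_s^{\otimes N}}$; your argument must either adopt this convention from the outset or track the combinatorial factors, in which case the factorization in your second paragraph fails at finite $N$.
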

\begin{proof}
Let us introduce for a fixed $t\in (-L,L) $ the family of one particle states $v_s:=e^{i(s-t)T}u_s$, which interpolates between the states $v_0=e^{-itT}u_0$ and $v_t=u_t$. In the following we want to verify that the expression
\begin{equation}
    \sum_{k=0}^\infty \int\limits_{0\leq t_1\leq \dots \leq t_k\leq t-s} \langle A_{k,k+1}^{t_1,\dots , t_k} \rangle_{v_s^{\otimes N}}\, \mathrm{d}t_1\dots \mathrm{d}t_k
    \label{eq:interpolation}
\end{equation}
is constant with respect to $s$, which immediately concludes the proof by plugging in the values $s=0$ and $s=t$ (note that for $k=0$, $A_{0,1}=A$). In order to do this, we define $\{A,B\}_*$ as the sum over all graphs appearing in the expression $ A \cont B - B \cont A$ that have only a single connection between the connectors of $f_A$ and $f_B$, where we add the subscript $*$ in order to distinguish $\{A,B\}_*$ from the anti--commutator and we purposefully use curly brackets in analogy to the Poisson bracket from classical mechanics. Using this notation, we obtain $A_{k,k+1}^{t_1,\dots , t_k}=\frac{1}{2i}\{A_{k-1,k}^{t_1,\dots , t_{k-1}},V^{(t_k)}\}_*$. Since there are $4k$ different graphs appearing in $\{A_{k-1,k}^{t_1,\dots , t_{k-1}},V^{(t_k)}\}_*$, we have the estimate $\|A_{k,k+1}^{t_1,\dots , t_k}\|\leq 4k\frac{1}{2}\|V\|\|A_{k-1,k}^{t_1,\dots , t_{k-1}}\|$, where we have used that the operator norm of a contracted operator, defined in Eq.~(\ref{eq:G}), is bounded from above by the product of the operator norms of the individual operators. Consequently, $\|A_{k,k+1}^{t_1,\dots , t_k}\|\leq k!\|A\|(2\|V\|)^k$. Using the control of the norm and $\Vert v_s \Vert=\Vert u_0 \Vert$, we immediately obtain 
\begin{equation}
\begin{aligned}
  \bigg|\int\limits_{0\leq t_1\leq \dots \leq t_k\leq t-s}\! \! \! \!\!  \!  \!  \langle A_{k,k+1}^{t_1,\dots , t_k}  & \rangle_{v_s^{\otimes N}}\, \mathrm{d}t_1\dots \mathrm{d}t_k\bigg| \leq  \! \! \! \! \! \int\limits_{0\leq t_1\leq \dots \leq t_k\leq t-s}\! \! \! \! \! \! \! \! \! k!\|A\|(2\|V\|)^k\|u_0\|^{2(k+1)}\, \mathrm{d}t_1\dots \mathrm{d}t_k\\
   &\leq\!  \|A\|\|u_0\|^2 \left(2(t-s)\|V\|\|u_0\|^2\right)^k.
   \end{aligned}
\end{equation}
Therefore the sum in Eq.~(\ref{eq:interpolation}) converges absolutely for $|t-s|<\frac{1}{2\|V\|\Vert u_0 \Vert^2}$, and we can express the derivative $\frac{\mathrm{d}}{\mathrm{d}s}\left(\sum_{k=0}^\infty \int\limits_{0\leq t_1\leq \dots \leq t_k\leq t-s} \langle A_{k,k+1}^{t_1,\dots , t_k} \rangle_{v_s^{\otimes N}}\, \mathrm{d}t_1\dots \mathrm{d}t_k\right)$ as 
\begin{equation}
    \begin{aligned}
\sum_{k=0}^\infty & \int\limits_{0\leq t_1\leq \dots \leq t_k\leq t-s}\! \! \! \!   \! \!  \! \! \! \! \! \frac{\mathrm{d}}{\mathrm{d}s}\langle A_{k,k+1}^{t_1,\dots , t_k} \rangle_{v_s^{\otimes N}} \mathrm{d}t_1\! \dots\!  \mathrm{d}t_k \! -\! \sum_{k=1}^\infty \int\limits_{0\leq t_1\leq \dots \leq t_{k-1}\leq t-s}\! \! \! \! \! \! \! \!  \! \! \! \! \! \! \frac{1}{2i}\langle \{A_{k-1,k}^{t_1,\dots , t_{k-1}},V^{(t-s)}\}_* \rangle_{v_s^{\otimes N}} \mathrm{d}t_1\! \dots \! \mathrm{d}t_{k-1}\\
       &=\sum_{k=0}^\infty  \int\limits_{0\leq t_1\leq \dots \leq t_k\leq t-s}\! \! \! \!   \! \!  \! \! \! \! \left(\frac{\mathrm{d}}{\mathrm{d}s}\langle A_{k,k+1}^{t_1,\dots , t_k} \rangle_{v_s^{\otimes N}}-\frac{1}{2i}\langle \{A_{k,k+1}^{t_1,\dots , t_{k}},V^{(t-s)}\}_* \rangle_{v_s^{\otimes N}}\right)\mathrm{d}t_1\! \dots \! \mathrm{d}t_{k}=0.
    \end{aligned}
\end{equation}
In order to show the last equality, we use that $i\frac{\mathrm{d}}{\mathrm{d}s}v_s=V^{(t-s)}_{v_s}v_s$, with $V^{(t)}_u(x,y):=\int V^{(t)}(x,x',y,y')\overline{u}(x')u(y')\mathrm{d}x'\mathrm{d}y'$, which allows us to express $\frac{\mathrm{d}}{\mathrm{d}s}\langle B \rangle_{v_s^{\otimes N}}$ for any $f_B\in \fh^{\otimes n} \otimes \fh^{\otimes n}$ as
\begin{equation}
    \begin{aligned}
        & \frac{1}{i}\sum_{j=1}^n\int f_B(X,Y) \prod_{k=1}^n \overline{v_s(x_k)} v_s(y_1)\dots v_s(y_{j-1})(V^{(t-s)}_{v_s}v_s)(y_j)v_s(y_{j+1})\dots v_s(y_n)\mathrm{d}X\mathrm{d}Y\\
        &\ \ \ -\frac{1}{i}\sum_{j=1}^n\int f_B(X,Y)\overline{v_s(x_1)}\dots \overline{v_s(x_{j-1})}\overline{(V^{(t-s)}_{v_s}v_s)(x_j)}\overline{v_s(x_{j+1})}\dots \overline{v_s(x_n)}\prod_{k=1}^n v_s(y_k)\mathrm{d}X\mathrm{d}Y\\
        &=\frac{1}{i}\! \sum_{j=1}^n\! \int\!  f_B(X,Y)V^{(t-s)}(x_{n+1},x_{n+2},y_{n+1},y_{n+2})\delta(y_j-x_{n+2})\! \! \prod_{k=1}^{n+1}\overline{v_s(x_k)}\! \! \! \! \! \! \! \prod_{\ell\in \{1,\dots ,n+2\}\setminus \{j\}}\! \! \! \! \! \! \! \! \! \! v_s(y_\ell) \mathrm{d}X'\mathrm{d}Y'\\
        &-\frac{1}{i}\! \sum_{j=1}^n\! \int\!  f_B(X,Y)V^{(t-s)}(x_{n+1},x_{n+2},y_{n+1},y_{n+2})\delta(x_j-y_{n+2})\! \! \! \! \! \! \! \! \! \!  \! \! \prod_{k\in \{1,\dots ,n+2\}\setminus \{j\}}\! \! \! \! \! \! \!\overline{v_s(x_k)} \prod_{\ell=1}^{n+1}v_s(y_\ell) \mathrm{d}X'\mathrm{d}Y'\\
        &=\frac{1}{2i}\langle \{B,V^{(t-s)}\}_* \rangle_{v_s^{\otimes N}},
    \end{aligned}
\end{equation}
with $X'=(x_1,\dots ,x_{n+2})$ and $Y'=(y_1,\dots ,y_{n+2})$. Note that in the last line, we have used our definition of $\{\cdot,\cdot \}_*$ in terms of the attached product $ \cont $ defined in Eq.~(\ref{eq:cont}).
\end{proof}
Proposition \ref{prop:classic} states that every acyclic graph appearing in the quantum time evolution has to be considered as a contribution coming from the Hartree dynamics, while graphs having at least one cycle constitute the quantum correction to the Hartree dynamics. Since cyclic graphs have less open connectors relative to their number of vertices (see also Figure \ref{fig:Friedrichs_evolution}), and each open connector is of order $\sqrt{N}$, one observes that the quantum corrections are of subleading order in case $V$ is small compared to the number of particles $N$. While this approach of establishing the non--linear Hartree dynamics is comparable to the one presented in \cite{spohn}, we want to stress the graphical interpretation of the Hartree dynamics as a subset of the diagrams arising in the quantum time evolution. Furthermore, we want to note that Friedrichs diagrams have been used previously to establish the classical limit of non--relativistic bosons in \cite{ginibre}.

\subsection{Multicommutators and Bosonization}
\label{subsec:multicomm}
Let us comment a bit more on how calculations from the recent literature on Bose and Fermi gases can be re--phrased in terms of Friedrichs diagrams. A situation that often appears, e.g., in \cite[Sect.~3]{BoccatoBrenneckeCenatiempoSchlein2020}, \cite[Sect.~3]{AdhikariBrenneckeSchlein2021}, \cite[Sect.~4]{FalconiGiacomelliHainzlPorta2021}, \cite[Sect.~5]{ChristiansenHainzlNam2022}, \cite[Sect.~7]{BenedikterPortaSchleinSeiringer2022} is the evaluation of an expectation value of some operator $ A $ inside some trial state $ \psi := e^{-B} \Omega $ with $ B $ being an antisymmetric operator and $ \Omega \in \sF_- $ the vacuum vector of Fock space. Typically, a momentum lattice $ X = \ZZZ^d $ is chosen with momenta denoted $ k, p \in \ZZZ^d $. The operator $ A $ could be, for instance, $ a_k^*, a_k $ or the number operator $ \cN = \sum_k a_k^* a_k $ and $ B $ could be a quadratic/cubic/quartic transformation of the form
\begin{equation}
\begin{aligned}
	B_2 := &\sum_{k_1, k_2 \in \ZZZ^d} \eta_{k_1, k_2} \left( a^*_{k_1} a^*_{k_2} - \text{h.c.} \right),\\
	B_3 := &\sum_{k_1, k_2, k_3 \in \ZZZ^d} \eta_{k_1, k_2, k_3} \left( a^*_{k_1} a^*_{k_2} a^*_{k_3} - \text{h.c.} \right),\\
	B_4 := &\sum_{k_1, k_2, k_3, k_4 \in \ZZZ^d} \eta_{k_1, k_2, k_3, k_4} \left( a^*_{k_1} a^*_{k_2} a^*_{k_3} a^*_{k_4} - \text{h.c.} \right),\\
\end{aligned}
\end{equation}
see Figure \ref{fig:Friedrichs_aaNB2B3B4}, with suitable integral kernels $ \eta $. Here, $ e^{-B_2} $ is a Bogoliubov transformation, while $ e^{-B_3} $ and $ e^{-B_4} $ can be seen as ``generalized Bogoliubov transformations''.\\

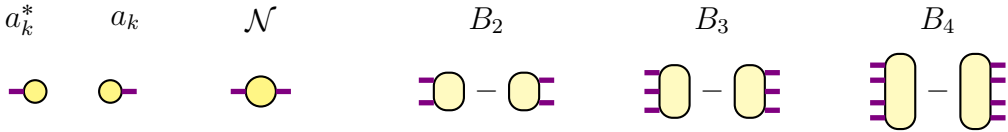
\begin{figure}[hbt]
	\centering
	\scalebox{1.0}{\begin{tikzpicture}

\filldraw[fill = yellow!60!white, thick] (0,0.25) circle (0.15);
\draw[line width = 2, red!50!blue] (-0.15,0.25) -- ++(-0.2,0);
\node at (-0.2,1.2) {$ a_k^* $};

\filldraw[fill = yellow!60!white, thick] (1,0.25) circle (0.15);
\draw[line width = 2, red!50!blue] (1.15,0.25) -- ++(0.2,0);  
\node at (1.2,1.2) {$ a_k $};

\filldraw[fill = yellow!60!white, thick] (3,0.25) circle (0.2);
\draw[line width = 2, red!50!blue] (2.8,0.25) -- ++(-0.2,0);
\draw[line width = 2, red!50!blue] (3.2,0.25) -- ++(0.2,0);  
\node at (3,1.2) {$ \cN $};

\filldraw[thick, rounded corners = 5, fill = yellow!30!white] (5.3,0) rectangle ++(0.4,0.5);
\draw[line width = 2, red!50!blue] (5.3,0.1) -- ++(-0.2,0);
\draw[line width = 2, red!50!blue] (5.3,0.4) -- ++(-0.2,0);
\node at (6,0.25) {$ - $};
\filldraw[thick, rounded corners = 5, fill = yellow!30!white] (6.3,0) rectangle ++(0.4,0.5);
\draw[line width = 2, red!50!blue] (6.7,0.1) -- ++(0.2,0);
\draw[line width = 2, red!50!blue] (6.7,0.4) -- ++(0.2,0);
\node at (6,1.2) {$ B_2 $};

\filldraw[thick, rounded corners = 5, fill = yellow!30!white] (8.3,-0.1) rectangle ++(0.4,0.7);
\draw[line width = 2, red!50!blue] (8.3,0) -- ++(-0.2,0);
\draw[line width = 2, red!50!blue] (8.3,0.25) -- ++(-0.2,0);
\draw[line width = 2, red!50!blue] (8.3,0.5) -- ++(-0.2,0);
\node at (9,0.25) {$ - $};
\filldraw[thick, rounded corners = 5, fill = yellow!30!white] (9.3,-0.1) rectangle ++(0.4,0.7);
\draw[line width = 2, red!50!blue] (9.7,0) -- ++(0.2,0);
\draw[line width = 2, red!50!blue] (9.7,0.25) -- ++(0.2,0);
\draw[line width = 2, red!50!blue] (9.7,0.5) -- ++(0.2,0);
\node at (9,1.2) {$ B_3 $};

\filldraw[thick, rounded corners = 5, fill = yellow!30!white] (11.3,-0.25) rectangle ++(0.4,1);
\draw[line width = 2, red!50!blue] (11.3,-0.1) -- ++(-0.2,0);
\draw[line width = 2, red!50!blue] (11.3,0.1) -- ++(-0.2,0);
\draw[line width = 2, red!50!blue] (11.3,0.4) -- ++(-0.2,0);
\draw[line width = 2, red!50!blue] (11.3,0.6) -- ++(-0.2,0);
\node at (12,0.25) {$ - $};
\filldraw[thick, rounded corners = 5, fill = yellow!30!white] (12.3,-0.25) rectangle ++(0.4,1);
\draw[line width = 2, red!50!blue] (12.7,-0.1) -- ++(0.2,0);
\draw[line width = 2, red!50!blue] (12.7,0.1) -- ++(0.2,0);
\draw[line width = 2, red!50!blue] (12.7,0.4) -- ++(0.2,0);
\draw[line width = 2, red!50!blue] (12.7,0.6) -- ++(0.2,0);
\node at (12,1.2) {$ B_4 $};

\end{tikzpicture}}
	\caption{The Friedrichs vertices (without external legs) of the example operators for $ A $ and $ B $. Note that the $ B $--operators are differences of two vertices, respectively.}
	\label{fig:Friedrichs_aaNB2B3B4}
\end{figure}
Recursive application of Duhamel's formula renders the multicommutator series
\begin{equation}
	\langle \Omega, e^B A e^{-B} \Omega \rangle
	= \sum_{n = 0}^\infty \frac{1}{n!} \langle \Omega, \underbrace{[B, \ldots [B, [B, A]] \ldots ]}_{n \text{ commutators}} \Omega \rangle.
\end{equation}
The $ n $--th term in this series can be written as a sum over all diagrams, which are built, starting from an $ A $--vertex and successively contracting  $ n $ vertices of type $ B $ into it.\\

\begin{figure}[hbt]
	\centering
	\scalebox{1.0}{\def\extl(#1, #2){\fill (#1, #2) circle (0.05); \fill[opacity = 0.3, blue] (#1, #2) circle (0.1);  } 
\def\extr(#1, #2){\fill (#1, #2) circle (0.05); \fill[opacity = 0.3, green!50!black] (#1, #2) circle (0.1);  } 

\begin{tikzpicture}
\filldraw[fill = yellow!60!white, thick] (1,0.1) circle (0.15);

\filldraw[thick, rounded corners = 5, fill = yellow!30!white] (-1.2,0) rectangle ++(0.4,0.5);
\filldraw[thick, rounded corners = 5, fill = yellow!30!white] (0.8,0.4) rectangle ++(0.4,0.5);
\filldraw[thick, rounded corners = 5, fill = yellow!30!white] (-1.2,0.8) rectangle ++(0.4,0.5);
\filldraw[thick, rounded corners = 5, fill = yellow!30!white] (0.8,1.2) rectangle ++(0.4,0.5);
\filldraw[thick, rounded corners = 5, fill = yellow!30!white] (-1.2,1.6) rectangle ++(0.4,0.5);
\filldraw[thick, rounded corners = 5, fill = yellow!30!white] (0.8,2) rectangle ++(0.4,0.5);
\filldraw[thick, rounded corners = 5, fill = yellow!30!white] (-1.2,2.4) rectangle ++(0.4,0.5);

\draw[line width = 2, red!50!blue] (0.85,0.1) -- ++(-0.2,0);
\draw[line width = 2, red!50!blue] (-0.8,0.1) -- ++(0.2,0);
\draw[line width = 2, red!50!blue] (-0.8,0.4) -- ++(0.2,0);
\draw[line width = 2, red!50!blue] (0.8,0.5) -- ++(-0.2,0);
\draw[line width = 2, red!50!blue] (0.8,0.8) -- ++(-0.2,0);
\draw[line width = 2, red!50!blue] (-0.8,0.9) -- ++(0.2,0);
\draw[line width = 2, red!50!blue] (-0.8,1.2) -- ++(0.2,0);
\draw[line width = 2, red!50!blue] (0.8,1.3) -- ++(-0.2,0);
\draw[line width = 2, red!50!blue] (0.8,1.6) -- ++(-0.2,0);
\draw[line width = 2, red!50!blue] (-0.8,1.7) -- ++(0.2,0);
\draw[line width = 2, red!50!blue] (-0.8,2) -- ++(0.2,0);
\draw[line width = 2, red!50!blue] (0.8,2.1) -- ++(-0.2,0);
\draw[line width = 2, red!50!blue] (0.8,2.4) -- ++(-0.2,0);
\draw[line width = 2, red!50!blue] (-0.8,2.5) -- ++(0.2,0);
\draw[line width = 2, red!50!blue] (-0.8,2.8) -- ++(0.2,0);

\draw[red, opacity = .8, line width = 1] (-0.6,0.1) -- ++(1.25,0);
\draw[red, opacity = .8, line width = 1] (-0.6,0.4) .. controls ++(0.5,0) and ++(-0.5,0) .. (0.6,0.5);
\draw[red, opacity = .8, line width = 1] (-0.6,0.9) .. controls ++(0.5,0) and ++(-0.5,0) .. (0.6,0.8);
\draw[red, opacity = .8, line width = 1] (-0.6,1.2) .. controls ++(0.5,0) and ++(-0.5,0) .. (0.6,1.3);
\draw[red, opacity = .8, line width = 1] (-0.6,1.7) .. controls ++(0.5,0) and ++(-0.5,0) .. (0.6,1.6);
\draw[red, opacity = .8, line width = 1] (-0.6,2) .. controls ++(0.5,0) and ++(-0.5,0) .. (0.6,2.1);
\draw[red, opacity = .8, line width = 1] (-0.6,2.5) .. controls ++(0.5,0) and ++(-0.5,0) .. (0.6,2.4);

\draw[thick] (-0.6,2.8) -- (1.4,2.8); \extr(1.4, 2.8)

\end{tikzpicture}}
	\scalebox{1.0}{\def\extl(#1, #2){\fill (#1, #2) circle (0.05); \fill[opacity = 0.3, blue] (#1, #2) circle (0.1);  } 
\def\extr(#1, #2){\fill (#1, #2) circle (0.05); \fill[opacity = 0.3, green!50!black] (#1, #2) circle (0.1);  } 

\begin{tikzpicture}
\filldraw[fill = yellow!60!white, thick] (-1,0.1) circle (0.15);

\filldraw[thick, rounded corners = 5, fill = yellow!30!white] (0.8,0) rectangle ++(0.4,0.5);
\filldraw[thick, rounded corners = 5, fill = yellow!30!white] (-1.2,0.4) rectangle ++(0.4,0.5);
\filldraw[thick, rounded corners = 5, fill = yellow!30!white] (0.8,0.8) rectangle ++(0.4,0.5);
\filldraw[thick, rounded corners = 5, fill = yellow!30!white] (-1.2,1.2) rectangle ++(0.4,0.5);
\filldraw[thick, rounded corners = 5, fill = yellow!30!white] (0.8,1.6) rectangle ++(0.4,0.5);
\filldraw[thick, rounded corners = 5, fill = yellow!30!white] (-1.2,2) rectangle ++(0.4,0.5);
\filldraw[thick, rounded corners = 5, fill = yellow!30!white] (0.8,2.4) rectangle ++(0.4,0.5);

\draw[line width = 2, red!50!blue] (-0.85,0.1) -- ++(0.2,0);
\draw[line width = 2, red!50!blue] (0.8,0.1) -- ++(-0.2,0);
\draw[line width = 2, red!50!blue] (0.8,0.4) -- ++(-0.2,0);
\draw[line width = 2, red!50!blue] (-0.8,0.5) -- ++(0.2,0);
\draw[line width = 2, red!50!blue] (-0.8,0.8) -- ++(0.2,0);
\draw[line width = 2, red!50!blue] (0.8,0.9) -- ++(-0.2,0);
\draw[line width = 2, red!50!blue] (0.8,1.2) -- ++(-0.2,0);
\draw[line width = 2, red!50!blue] (-0.8,1.3) -- ++(0.2,0);
\draw[line width = 2, red!50!blue] (-0.8,1.6) -- ++(0.2,0);
\draw[line width = 2, red!50!blue] (0.8,1.7) -- ++(-0.2,0);
\draw[line width = 2, red!50!blue] (0.8,2) -- ++(-0.2,0);
\draw[line width = 2, red!50!blue] (-0.8,2.1) -- ++(0.2,0);
\draw[line width = 2, red!50!blue] (-0.8,2.4) -- ++(0.2,0);
\draw[line width = 2, red!50!blue] (0.8,2.5) -- ++(-0.2,0);
\draw[line width = 2, red!50!blue] (0.8,2.8) -- ++(-0.2,0);

\draw[red, opacity = .8, line width = 1] (-0.65,0.1) -- ++(1.25,0);
\draw[red, opacity = .8, line width = 1] (-0.6,0.5) .. controls ++(0.5,0) and ++(-0.5,0) .. (0.6,0.4);
\draw[red, opacity = .8, line width = 1] (-0.6,0.8) .. controls ++(0.5,0) and ++(-0.5,0) .. (0.6,0.9);
\draw[red, opacity = .8, line width = 1] (-0.6,1.3) .. controls ++(0.5,0) and ++(-0.5,0) .. (0.6,1.2);
\draw[red, opacity = .8, line width = 1] (-0.6,1.6) .. controls ++(0.5,0) and ++(-0.5,0) .. (0.6,1.7);
\draw[red, opacity = .8, line width = 1] (-0.6,2.1) .. controls ++(0.5,0) and ++(-0.5,0) .. (0.6,2);
\draw[red, opacity = .8, line width = 1] (-0.6,2.4) .. controls ++(0.5,0) and ++(-0.5,0) .. (0.6,2.5);

\draw[thick] (0.6,2.8) -- (-1.4,2.8); \extl(-1.4, 2.8)

\end{tikzpicture}}
	\hspace{2cm}
	\scalebox{1.0}{\begin{tikzpicture}

\filldraw[fill = yellow!60!white, thick] (-0.5,-0.1) circle (0.2);

\filldraw[thick, rounded corners = 5, fill = yellow!30!white] (-1.7,-0.25) rectangle ++(0.4,1);
\filldraw[thick, rounded corners = 5, fill = yellow!30!white] (-1.7,1) rectangle ++(0.4,1);
\filldraw[thick, rounded corners = 5, fill = yellow!30!white] (-1.7,2.25) rectangle ++(0.4,1);

\filldraw[thick, rounded corners = 5, fill = yellow!30!white] (1.3,-0.25) rectangle ++(0.4,1);
\filldraw[thick, rounded corners = 5, fill = yellow!30!white] (1.3,1) rectangle ++(0.4,1);
\filldraw[thick, rounded corners = 5, fill = yellow!30!white] (1.3,2.25) rectangle ++(0.4,1);

\draw[line width = 2, red!50!blue] (-0.7,-0.1) -- ++(-0.2,0);
\draw[line width = 2, red!50!blue] (-0.3,-0.1) -- ++(0.2,0);

\draw[line width = 2, red!50!blue] (-1.1,-0.1) -- ++(-0.2,0);
\draw[line width = 2, red!50!blue] (-1.1,0.1) -- ++(-0.2,0);
\draw[line width = 2, red!50!blue] (-1.1,0.4) -- ++(-0.2,0);
\draw[line width = 2, red!50!blue] (-1.1,0.6) -- ++(-0.2,0);

\draw[line width = 2, red!50!blue] (-1.1,1.15) -- ++(-0.2,0);
\draw[line width = 2, red!50!blue] (-1.1,1.35) -- ++(-0.2,0);
\draw[line width = 2, red!50!blue] (-1.1,1.65) -- ++(-0.2,0);
\draw[line width = 2, red!50!blue] (-1.1,1.85) -- ++(-0.2,0);

\draw[line width = 2, red!50!blue] (-1.1,2.4) -- ++(-0.2,0);
\draw[line width = 2, red!50!blue] (-1.1,2.6) -- ++(-0.2,0);
\draw[line width = 2, red!50!blue] (-1.1,2.9) -- ++(-0.2,0);
\draw[line width = 2, red!50!blue] (-1.1,3.1) -- ++(-0.2,0);

\draw[line width = 2, red!50!blue] (1.1,-0.1) -- ++(0.2,0);
\draw[line width = 2, red!50!blue] (1.1,0.1) -- ++(0.2,0);
\draw[line width = 2, red!50!blue] (1.1,0.4) -- ++(0.2,0);
\draw[line width = 2, red!50!blue] (1.1,0.6) -- ++(0.2,0);

\draw[line width = 2, red!50!blue] (1.1,1.15) -- ++(0.2,0);
\draw[line width = 2, red!50!blue] (1.1,1.35) -- ++(0.2,0);
\draw[line width = 2, red!50!blue] (1.1,1.65) -- ++(0.2,0);
\draw[line width = 2, red!50!blue] (1.1,1.85) -- ++(0.2,0);

\draw[line width = 2, red!50!blue] (1.1,2.4) -- ++(0.2,0);
\draw[line width = 2, red!50!blue] (1.1,2.6) -- ++(0.2,0);
\draw[line width = 2, red!50!blue] (1.1,2.9) -- ++(0.2,0);
\draw[line width = 2, red!50!blue] (1.1,3.1) -- ++(0.2,0);

\draw[red, opacity = .8, line width = 1] (-1.1,-0.1) -- (-0.9,-0.1);
\draw[red, opacity = .8, line width = 1] (-1.1,0.1) .. controls ++(2,0) and ++(-1,0) .. (1.1,-0.1);
\draw[red, opacity = .8, line width = 1] (1.1,0.1) .. controls ++(-1.5,0) and ++(1.5,0) .. (-1.1,0.4);
\draw[red, opacity = .8, line width = 1] (-1.1,0.6) .. controls ++(1.5,0) and ++(-1.5,0) .. (1.1,2.6);
\draw[red, opacity = .8, line width = 1] (1.1,2.4) .. controls ++(-1.5,0) and ++(1.5,0) .. (-1.1,3.1);
\draw[red, opacity = .8, line width = 1] (-1.1,2.9) .. controls ++(1.5,0) and ++(-1.5,0) .. (1.1,0.6);
\draw[red, opacity = .8, line width = 1] (1.1,0.4) .. controls ++(-1,0) and ++(1,0) .. (-0.1,-0.1);

\draw[red, opacity = .8, line width = 1] (-1.1,1.35) .. controls ++(1.5,0) and ++(-1.5,0) .. (1.1,1.65);
\draw[red, opacity = .8, line width = 1] (1.1,1.85) .. controls ++(-1.5,0) and ++(1.5,0) .. (-1.1,1.65);
\draw[red, opacity = .8, line width = 1] (-1.1,1.85) .. controls ++(1.5,0) and ++(-1.5,0) .. (1.1,2.9);
\draw[red, opacity = .8, line width = 1] (1.1,3.1) .. controls ++(-1.5,0) and ++(1.5,0) .. (-1.1,1.15);

\draw[red, opacity = .8, line width = 1] (-1.1,2.6) .. controls ++(1.5,0) and ++(-1.5,0) .. (1.1,1.15);
\draw[red, opacity = .8, line width = 1] (1.1,1.35) .. controls ++(-1.5,0) and ++(1.5,0) .. (-1.1,2.4);

\end{tikzpicture}}
	\caption{Left: The two diagrams appearing within the $ n = 7 $--fold multicommutator for $ e^{B_2} a^*_k e^{-B_2} $ and $ e^{B_2} a_k e^{-B_2} $, respectively.\\ Right: A diagram appearing for $ n = 6 $, when evaluating $ \langle \Omega, e^{B_4} \cN e^{-B_4} \Omega \rangle $.}
	\label{fig:Friedrichs_3loop}
\end{figure}
Figure \ref{fig:Friedrichs_3loop} shows certain diagrams appearing for $ A = a_k^*, a_k $ and $ B = B_2 $, which arise when Bogoliubov--transforming the operators $ a_k^* $ and $ a_k $. Every new $ B_2 $--vertex can only be contracted to exactly one leg, leaving again exactly one external leg. So inductively, one quickly sees that both $ e^{B_2} a_k^* e^{-B_2} $ and $ e^{B_2} a_k e^{-B_2} $ must again be linear combinations of $ a^* $-- and $ a $--operators.\\
By contrast, for other choices of $ A $ and $ B $, e.g., $ A = \cN $ and $ B =  B_4 $, a much bigger number of contractions occurs, which leads to more diagrams, one of them being shown in Figure \ref{fig:Friedrichs_3loop}. However, when evaluating the expectation value with respect to $ \Omega $, all diagrams with at least one external leg vanish as $ a_k \Omega = 0 $.\\

For fermions, commutator evaluations can be strongly simplified by a bosonization technique, which was recently applied in \cite{Giacomelli2022short, Giacomelli2022, FalconiGiacomelliHainzlPorta2021, ChristiansenHainzlNam2022, BenedikterPortaSchleinSeiringer2022}. Introducing the bosonized operators for $ f \in \ell^2(\ZZZ^d) $,
\begin{equation}
	c^*_k(f) := \sum_{p \in \ZZZ^d} f_p a^*_p a^*_{p - k}, \qquad
	c_k(f) := \sum_{p \in \ZZZ^d} \overline{f_p} a_{p - k} a_p,
\end{equation}
we have the following commutation relations, which are a generalized case of \cite[(4.16)]{ChristiansenHainzlNam2022} and \cite[(5.6)]{BenedikterPortaSchleinSeiringer2022} with $ f, g \in \ell^2(\ZZZ^d) $:
\begin{equation}
\begin{aligned}
	[c_k(f), c_{k'}(g)] = &[c_k^*(f), c^*_{k'}(g)] = 0,\\
	[c_k(f), c^*_{k'}(g)]
	= & \delta_{k, k'} \langle f, g \rangle
	- \sum_p \overline{f_p} g_{p-k} \delta_{k, -k'}
	- \sum_p \overline{f_p} g_p a^*_{p-k'} a_{p-k}
	- \sum_p \overline{f_p} g_{p-k+k'} a^*_{p-k+k'} a_p\\
	& \qquad + \sum_p \overline{f_p} g_{p-k} a^*_{p-k-k'} a_p
	+ \sum_p \overline{f_p} g_{p+k'} a^*_{p+k'} a_{p-k}	
\end{aligned}
\label{eq:approximateCCR}
\end{equation}

\begin{figure}[hbt]
	\centering
	\scalebox{1.0}{\def\r{0.2} 
\def\extl(#1, #2){\fill (#1, #2) circle (0.05); \fill[opacity = 0.3, blue] (#1, #2) circle (0.1);  } 
\def\extr(#1, #2){\fill (#1, #2) circle (0.05); \fill[opacity = 0.3, green!50!black] (#1, #2) circle (0.1);  } 
\def\conn(#1, #2, #3, #4){({#4*cos(#1) + #2},{#4*sin(#1) + #3}) --  ({(#4+0.2)*cos(#1) + #2},{(#4+0.2)*sin(#1) + #3})} 
\def\connt(#1, #2, #3, #4, #5){({#4*cos(#1) + #2},{#4*sin(#1) + #3}) .. controls  ({(#4+#5)*cos(#1) + #2},{(#4+#5)*sin(#1) + #3}) and} 
\def\connte(#1, #2, #3, #4, #5){ ({(#4+#5)*cos(#1) + #2},{(#4+#5)*sin(#1) + #3}) .. ({#4*cos(#1) + #2},{#4*sin(#1) + #3}) } 
\begin{tikzpicture}

\node at (-0.4,1.2) {$\big[$};

\filldraw[fill = yellow!50!white, thick] (0,1.2) circle (\r);
\draw[line width = 2, red!50!blue] \conn(20, 0, 1.2, \r);
\draw[line width = 2, red!50!blue] \conn(-20, 0, 1.2, \r);

\node at (0.6,1) {$,$};

\filldraw[fill = yellow!50!white, thick] (1,1.2) circle (\r);
\draw[line width = 2, red!50!blue] \conn(20, 1, 1.2, \r);
\draw[line width = 2, red!50!blue] \conn(-20, 1, 1.2, \r);

\node at (1.6,1.2) {$\big]$};
\node at (2,1.2) {$=$};
\node at (2.4,1.2) {$0$};

\node at (3.6,1.2) {$\big[$};

\filldraw[fill = yellow!50!white, thick] (4.2,1.2) circle (\r);
\draw[line width = 2, red!50!blue] \conn(160, 4.2, 1.2, \r);
\draw[line width = 2, red!50!blue] \conn(200, 4.2, 1.2, \r);

\node at (4.6,1) {$,$};

\filldraw[fill = yellow!50!white, thick] (5.2,1.2) circle (\r);
\draw[line width = 2, red!50!blue] \conn(160, 5.2, 1.2, \r);
\draw[line width = 2, red!50!blue] \conn(200, 5.2, 1.2, \r);

\node at (5.6,1.2) {$\big]$};
\node at (6,1.2) {$=$};
\node at (6.4,1.2) {$0$};

\node at (-0.4,0) {$\big[$};

\filldraw[fill = yellow!50!white, thick] (0,0) circle (\r);
\draw[line width = 2, red!50!blue] \conn(20, 0, 0, \r);
\draw[line width = 2, red!50!blue] \conn(-20, 0, 0, \r);

\node at (0.6,-0.2) {$,$};

\filldraw[fill = yellow!50!white, thick] (1.2,0) circle (\r);
\draw[line width = 2, red!50!blue] \conn(160, 1.2, 0, \r);
\draw[line width = 2, red!50!blue] \conn(200, 1.2, 0, \r);

\node at (1.6,0) {$\big]$};
\node at (2,0) {$=$};

\filldraw[fill = yellow!50!white, thick] (2.5,0) circle (\r);
\draw[line width = 2, red!50!blue] \conn(20, 2.5, 0, \r);
\draw[line width = 2, red!50!blue] \conn(-20, 2.5, 0, \r);
\filldraw[fill = yellow!50!white, thick] (4,0) circle (\r);
\draw[line width = 2, red!50!blue] \conn(160, 4, 0, \r);
\draw[line width = 2, red!50!blue] \conn(200, 4, 0, \r);
\draw[red, opacity = .8, line width = 1] \connt(20, 2.5, 0, (\r+0.2), 0.3) \connte(200, 4, 0, (\r+0.2), 0.3);
\draw[red, opacity = .8, line width = 1] \connt(-20, 2.5, 0, (\r+0.2), 0.3) \connte(160, 4, 0, (\r+0.2), 0.3);

\node at (4.5,0) {$-$};

\filldraw[fill = yellow!50!white, thick] (5,0) circle (\r);
\draw[line width = 2, red!50!blue] \conn(20, 5, 0, \r);
\draw[line width = 2, red!50!blue] \conn(-20, 5, 0, \r);
\filldraw[fill = yellow!50!white, thick] (6.5,0) circle (\r);
\draw[line width = 2, red!50!blue] \conn(160, 6.5, 0, \r);
\draw[line width = 2, red!50!blue] \conn(200, 6.5, 0, \r);
\draw[red, opacity = .8, line width = 1] \connt(20, 5, 0, (\r+0.2), 0.3) \connte(160, 6.5, 0, (\r+0.2), 0.3);
\draw[red, opacity = .8, line width = 1] \connt(-20, 5, 0, (\r+0.2), 0.3) \connte(200, 6.5, 0, (\r+0.2), 0.3);

\node at (7,0) {$-$};

\filldraw[fill = yellow!50!white, thick] (7.5,0) circle (\r);
\draw[line width = 2, red!50!blue] \conn(20, 7.5, 0, \r);
\draw[line width = 2, red!50!blue] \conn(-20, 7.5, 0, \r);
\filldraw[fill = yellow!50!white, thick] (9,0) circle (\r);
\draw[line width = 2, red!50!blue] \conn(160, 9, 0, \r);
\draw[line width = 2, red!50!blue] \conn(200, 9, 0, \r);
\draw[red, opacity = .8, line width = 1] \connt(-20, 7.5, 0, (\r+0.2), 0.3) \connte(160, 9, 0, (\r+0.2), 0.3);
\draw[thick] \connt(20, 7.5, 0, (\r+0.2), 0.3)  ++(-0.6,0) .. (8.8,0.3) -- (9.2,0.3); \extr(9.2, 0.3)
\draw[thick] \connt(200, 9, 0, (\r+0.2), 0.3)  ++(0.6,0) .. (7.7,-0.3) -- (7.3,-0.3); \extl(7.3, -0.3)

\node at (9.5,0) {$-$};

\filldraw[fill = yellow!50!white, thick] (10,0) circle (\r);
\draw[line width = 2, red!50!blue] \conn(20, 10, 0, \r);
\draw[line width = 2, red!50!blue] \conn(-20, 10, 0, \r);
\filldraw[fill = yellow!50!white, thick] (11.5,0) circle (\r);
\draw[line width = 2, red!50!blue] \conn(160, 11.5, 0, \r);
\draw[line width = 2, red!50!blue] \conn(200, 11.5, 0, \r);
\draw[red, opacity = .8, line width = 1] \connt(20, 10, 0, (\r+0.2), 0.3) \connte(200, 11.5, 0, (\r+0.2), 0.3);
\draw[thick] \connt(-20, 10, 0, (\r+0.2), 0.3)  ++(-0.6,0) .. (11.3,0.3) -- (11.7,0.3); \extr(11.7, 0.3)
\draw[thick] \connt(160, 11.5, 0, (\r+0.2), 0.3)  ++(0.6,0) .. (10.2,-0.3) -- (9.8,-0.3); \extl(9.8, -0.3)

\node at (3,-1.0) {$+$};

\filldraw[fill = yellow!50!white, thick] (3.5,-1.0) circle (\r);
\draw[line width = 2, red!50!blue] \conn(20, 3.5, -1.0, \r);
\draw[line width = 2, red!50!blue] \conn(-20, 3.5, -1.0, \r);
\filldraw[fill = yellow!50!white, thick] (5,-1.0) circle (\r);
\draw[line width = 2, red!50!blue] \conn(160, 5, -1.0, \r);
\draw[line width = 2, red!50!blue] \conn(200, 5, -1.0, \r);
\draw[red, opacity = .8, line width = 1] \connt(20, 3.5, -1.0, (\r+0.2), 0.3) \connte(160, 5, -1.0, (\r+0.2), 0.3);
\draw[thick] \connt(-20, 3.5, -1.0, (\r+0.2), 0.3)  ++(-0.6,0) .. (4.8,-0.7) -- (5.2,-0.7); \extr(5.2, -0.7)
\draw[thick] \connt(200, 5, -1.0, (\r+0.2), 0.3)  ++(0.6,0) .. (3.7,-1.3) -- (3.3,-1.3); \extl(3.3, -1.3)

\node at (5.5,-1.0) {$+$};

\filldraw[fill = yellow!50!white, thick] (6,-1.0) circle (\r);
\draw[line width = 2, red!50!blue] \conn(20, 6, -1.0, \r);
\draw[line width = 2, red!50!blue] \conn(-20, 6, -1.0, \r);
\filldraw[fill = yellow!50!white, thick] (7.5,-1.0) circle (\r);
\draw[line width = 2, red!50!blue] \conn(160, 7.5, -1.0, \r);
\draw[line width = 2, red!50!blue] \conn(200, 7.5, -1.0, \r);
\draw[red, opacity = .8, line width = 1] \connt(-20, 6, -1.0, (\r+0.2), 0.3) \connte(200, 7.5, -1.0, (\r+0.2), 0.3);
\draw[thick] \connt(20, 6, -1.0, (\r+0.2), 0.3)  ++(-0.6,0) .. (7.3,-0.7) -- (7.7,-0.7); \extr(7.7, -0.7)
\draw[thick] \connt(160, 7.5, -1.0, (\r+0.2), 0.3)  ++(0.6,0) .. (6.2,-1.3) -- (5.8,-1.3); \extl(5.8, -1.3)

\end{tikzpicture}}
	\caption{For $ [c, c] $ and $ [c^*, c^*] $, there are no legs to contract, so these terms vanish.\\ For $ [c, c^*] $, there are 6 contributions to the commutator.}
	\label{fig:Friedrichs_almostCCR}
\end{figure}
In fact, the first line can easily be seen diagrammatically, as there are no legs to contract between the corresponding Friedrichs vertices, see Figure \ref{fig:Friedrichs_almostCCR}. The Friedrichs diagrams for the second line are shown in the same figure. Here, the second term typically vanishes due to constraints, forcing the momenta to be inside or outside the Fermi ball. For the same reason, two of the last four terms typically vanish, while the other two can be seen as error terms. They become small, whenever taking expectation values within states $ \psi \in \sF_- $ with $ \Vert \cN \psi \Vert $ being small. In that case, \eqref{eq:approximateCCR} becomes approximately equivalent to the CCR \eqref{eq:CCRCAR}, so $ c^*_k(f), c_k(f) $ can be viewed as almost--bosonic operators. The process of re--expressing operators in terms of $ c^*_k(f), c_k(f) $ is exactly the above--mentioned ``bosonization'' technique.\\

\begin{appendices}
\addtocontents{toc}{\setcounter{tocdepth}{-2}}

\section{Heuristic Motivation of Fermionic Contraction Signs}
\label{app:heuristiccontfer}

The fermionic sign factors $ \sgn(\pi, \pi') $ in \eqref{eq:sgnpipi} and \eqref{eq:contfer}, as well as the factor $ (-1)^{n_B m_A} $ in \eqref{eq:contferproduct} play a central role in the evaluation of fermionic commutators. Here, we provide a simple heuristic motivation of how these factors come about. We believe that these heuristics might be useful to quickly determine the fermionic signs within practical fermionic commutator evaluations.\\
In order to arrive at the fermionic product formula \eqref{eq:contferproduct}, we could also have taken the following approach: Starting from $ AB $, we successively pull creation operators to the left by ``adding smart zeros'', until we end up with $ \normalordered{AB} $. As a simple example, let us consider $ A = a_1 a_2, B = a_3^* a_4^* $, so $ m_A = n_B = 2 $:
\begin{equation}
\begin{aligned}
    AB = &a_1 a_2 a_3^* a_4^*\\
    = &a_1 a_2 a_3^* a_4^* + a_1 a_3^* a_2 a_4^*
        - a_1 a_3^* a_2 a_4^* - a_3^* a_1 a_2 a_4^*\\
        &+ a_3^* a_1 a_2 a_4^* + a_3^* a_1 a_4^* a_2
        - a_3^* a_1 a_4^* a_2 - a_3^* a_4^* a_1 a_2
        + a_3^* a_4^* a_1 a_2\\
    = & \{a_2, a_3^*\} a_1 a_4^* - \{a_1, a_3^*\} a_2 a_4^* + \{a_2, a_4^*\} a_3^* a_1 - \{a_1, a_4^*\} a_3^* a_2 + \underbrace{a_3^* a_4^* a_1 a_2}_{= \normalordered{AB}}.
\label{eq:stepcomm}
\end{aligned}
\end{equation}
For general $ A, B $ of the form \eqref{eq:A}, we have to pull $ m_A $ creation operators past $ n_B $ annihilation operators, so the normal ordered term in the end picks up a factor of $ (-1)^{m_A n_B} $, which is the same one as in \eqref{eq:contferproduct}. Note that the $ n_A $ creation operators of $ A $ and the $ m_B $ annihilation operators of $ B $ are already in their normal ordered positions.\\

\begin{itemize}
    \item \textit{So, the factor of $ (-1)^{m_A n_B} $ in front of $ \normalordered{AB} $ in \eqref{eq:contferproduct}  heuristically arises when pulling the $ a, a^* $--operators into their normal ordered positions.}
\end{itemize}

\noindent Now, consider some generic $ A $ and $ B $, involving operator products $ a_{A, 1} \ldots a_{A, n_A} $ and $ a^*_{B, n_B} \ldots a^*_{B, 1} $. The product $ AB $ then takes the form:
\begin{equation}
    AB
  = A \contfer B + (-1)^{m_A n_B} :AB:
  = \sum_{(\pi, \pi') \in \cC} \sgn(\pi, \pi') G_{\pi, \pi'}  + (-1)^{m_A n_B} :AB:,
\end{equation}
where $ G_{\pi, \pi'} $ is an abbreviation for the diagram with contractions described by the maps $ \pi $ and $ \pi' $ as in \eqref{eq:contfer}. Our goal is to determine $ \sgn(\pi, \pi') $. We first consider the easier case where $ (\pi, \pi') $ corresponds to a maximally crossed diagram and then reduce the generic case to the maximally crossed one. A ``smart zero insertion'' as in \eqref{eq:stepcomm} will produce many anticommutators, where the first term appearing is
\begin{equation}
    \{a_{A, m_A}, a^*_{B, n_B} \} a_{A, 1} \ldots a_{A, m_A-1} a^*_{B, n_B-1} \ldots a^*_{B, 1},
\label{eq:anticommutators}
\end{equation}
with a sign of $ +1 $. In order to arrive at \eqref{eq:contferproduct}, the remaining products of $ (m_A - 1) + (n_B - 1) $ operators as in \eqref{eq:anticommutators} also have to be brought in a normal ordered form, which is done by repeatedly applying the zero insertion procedure as in \eqref{eq:stepcomm}. After $ C \in \NNN $ insertion procedures, the appearing terms have $ C $ anticommutators (taken ``from the inside out''), where the first appearing term is
\begin{equation}
    \{a_{A, m_A}, a^*_{B, n_B} \} \ldots \{a_{A, m_A-C}, a^*_{B, n_B-C} \} a_{A, 1} \ldots a_{A, m_A-C-1} a^*_{B, n_B-C-1} \ldots a^*_{B, 1}.
\label{eq:Ccontractions}
\end{equation}
This term carries a sign of $ +1 $ and contains the same contractions as a maximally crossed diagram, e.g., as in Figures \ref{fig:Friedrichs_maxcrossing}, \ref{fig:Friedrichs_piplus} and \ref{fig:Friedrichs_piplus2}. One further application of the zero insertion procedure now takes \eqref{eq:Ccontractions} into normal ordered form, where it exactly corresponds to a maximally crossed diagram. In this last procedure, we pick up a factor of $ (-1)^{(m_A-C)(n_B-C)} $.\\

\begin{itemize}
    \item \textit{So, the factor of $ (-1)^{(m_A-C)(n_B-C)} $ in \eqref{eq:sgnpipi} heuristically arises from pulling the uncontracted $ a, a^* $--operators into their normal ordered positions.}
\end{itemize}

\noindent If a generic $ G_{\pi, \pi'} $ in $ A \contfer B $ is not maximally crossed, then there are permutations $ \sigma, \sigma' $ of the index sets $ (1, \ldots, m_A) $ and $ (1, \ldots, n_B) $ that take $ G_{\pi, \pi'} $ into maximally crossed form. That means, we define $ \sigma $ and $ \sigma' $ such that $ \sigma'(m_A) $ is contracted to $ \sigma(n_B) $, $ \sigma'(m_A - 1) $ to $ \sigma(n_B - 1) $, and so on. If we apply $ \sigma $ and $ \sigma' $ to the indices of $ a_{A, 1} \ldots a_{A, m_A} $ and $ a^*_{B, n_B} \ldots a^*_{B, 1} $ before the evaluation, then within
\begin{equation}
    a_{A, \sigma'(1)} \ldots a_{A, \sigma'(m_A)} a^*_{B, \sigma(n_B)} \ldots a^*_{B, \sigma(1)},
\end{equation}
$ G_{\pi, \pi'} $ appears with contractions being taken ``from the inside out'', and hence as a maximally crossed diagram carrying a sign of $ (-1)^{(m_A-C)(n_B-C)} $. Now, since $ AB $ includes the product
\begin{equation}
    a_{A, 1} \ldots a_{A, m_A} a^*_{B, n_B} \ldots a^*_{B, 1}
    = \sgn(\sigma) \sgn(\sigma') a_{A, \sigma'(1)} \ldots a_{A, \sigma'(m_A)} a^*_{B, \sigma(n_B)} \ldots a^*_{B, \sigma(1)}
\end{equation}
and the order of all uncontracted operators is untouched by $ \sigma $ and $ \sigma' $, the overall sign of $ G_{\pi, \pi'} $ appearing in $ AB $ is given by $ \sgn(\pi, \pi') = \sgn(\sigma) \sgn(\sigma') (-1)^{(m_A-C)(n_B-C)} $. This is the same factor as in \eqref{eq:sgnpipi}.

\begin{itemize}
    \item \textit{So, the factor $ \sgn(\sigma) \sgn(\sigma') $ in \eqref{eq:sgnpipi} is heuristically required to permute the operators such that contractions can be taken ``from the inside out''. More generally, maximal crossing just corresponds to taking contractions ``from the inside out''.}
\end{itemize}

\end{appendices}

\bigskip

\noindent\textit{Conflict of Interest.}
On behalf of all authors, the corresponding author states that there is no conflict of interest.\\

\noindent\textit{Data Availability.}
Data sharing not applicable to this article as no datasets were generated or analysed during the current study.\\

\noindent\textit{Acknowledgments.}
SL was supported by the European Research Council (ERC) through the Starting Grant \textsc{FermiMath}, Grant Agreement No. 101040991 of Niels Benedikter.\\

\end{document}